\pgfplotsset{compat=newest} 
\pgfplotsset{plot coordinates/math parser=false} 
\newcommand{\citep}{\cite}
\DeclareMathOperator{\proj}{proj}
\newcommand{\mc}{\mathcal}
\newcommand{\bb}{\mathbb}
\newcommand{\R}{\bb R}
\newcommand{\N}{\mathbb{N}}
\DeclareMathAlphabet{\mathbbmsl}{U}{bbm}{m}{sl}
\newcommand{\dom}{\operatorname{dom}}
\newcommand{\red}{\textcolor{red}}
\newcommand{\argmin}{\operatorname{argmin}}
\newcommand{\fix}{\operatorname{fix}}
\newcommand{\prox}{\mathrm{prox}}
\newcommand{\Id}{\mathrm{Id}}
\newcommand{\diag}{\operatorname{diag}}
\newcommand{\col}{\operatorname{col}}
\newcommand{\zer}{\operatorname{zer}}
\newcommand{\dist}{\mathrm{dist}}
\newcommand{\nc}{\mathrm{N}}
\newcommand{\0}{\mathbf{0}}
\newcommand{\1}{\mathbf{1}}
\newcommand{\gph}{\operatorname{gph}}
\newcommand{\Rmnum}[1]{\expandafter\@slowromancap\romannumeral #1@}
\newcommand{\eod}{\ensuremath{\hfill\Box}}
\newcommand{\qedd}{\ensuremath{\hfill \blacksquare}}
\newcommand{\ball}[2]{\{{#1}\} +{#2}\mathbb B}
\newcommand{\bomega}{\bm{\omega}}
\newcommand{\omegaopt}{\bomega^{\star}}
\newcommand{\btheta}{\boldsymbol{\theta}}
\newcommand{\innerprod}[2]{\left\langle {#1}, {#2} \right\rangle}
\newcommand{\specialcell}[1]{\ifmeasuring@#1\else\omit$\displaystyle#1$\ignorespaces\fi}
\newtheorem{assumption}{Assumption}
\newtheorem{proposition}{Proposition}
\newtheorem{corollary}{Corollary}
\newtheorem{definition}{Definition}
\newtheorem{lemma}{Lemma}
\newtheorem{example}{Example}
\newtheorem{theorem}{Theorem}{\it}{}
\newtheorem{remark}{Remark}
\newcommand{\VI}{\mathrm{VI}}
\newcommand{\comment}[1]{\textcolor{red}{[#1]}}
\title{Optimal {selection and tracking of} generalized Nash equilibria in monotone games }
\author{Emilio Benenati, Wicak Ananduta, and Sergio Grammatico
\thanks{The authors are with the Delft Center of Systems and Control (DCSC), TU Delft, the Netherlands. E-mail addresses: \texttt{\{e.benenati, w.ananduta, s.grammatico\}@tudelft.nl}. }
\thanks{This work was partially supported by the ERC under research project COSMOS (802348). }
}
\begin{document}
	
	\maketitle
\begin{abstract}
A fundamental open problem in monotone game theory is the computation of a  specific generalized Nash equilibrium (GNE) among all the available ones, e.g. the optimal equilibrium with respect to a system-level objective. The existing GNE seeking algorithms have in fact convergence guarantees toward an arbitrary, possibly inefficient, equilibrium. In this paper, we solve this open problem by leveraging results from fixed-point selection theory and in turn derive distributed algorithms for the computation of an optimal GNE in monotone games. 	We then extend the technical results to the time-varying setting and propose  an algorithm that tracks the sequence of optimal equilibria up to an asymptotic error, whose bound depends on the local computational capabilities of the agents.
\end{abstract}
	
	\begin{IEEEkeywords}
	 Multi-agent systems, Nash equilibrium seeking, Optimization
	\end{IEEEkeywords}
\vspace{-10pt}
	\section{Introduction}
% 	\wicaksay{Clearly separate the introduction into two parts: 1. what exist in the literature and 2. what we do. Practically, use the term "this paper" or "in this paper" as minimal as possible (possibly only once), to indicate the separation. Motivation, problem definition, and literature review fall into the first part. The number of references in the second part would be minimal. Check Sergio's other papers for inspiration.}

\subsubsection*{Motivation}
Numerous engineering systems of recent interest, such as smart electrical grids {\cite{shilov21b, belgioioso21},} traffic {control} systems {\cite{ Bakhshayesh21},} and {wireless} communication systems {\cite{pang2008, Scutari2012, Wang14}} can be modelled as a \emph{generalized game}, that is, a system of multiple agents aiming at optimizing {their} individual, inter-dependent objectives, while satisfying some common constraints. A typical operating point for these systems is the Generalized Nash Equilibrium (GNE), where no agent can unilaterally improve their objective function \cite{facchinei10}. 
	\par The recent literature has witnessed the development of {theory and} algorithms for {computing} a \textit{variational} GNE (v-GNE) \cite{facchinei07a,facchinei10,kulkarni12}, which exhibits desirable properties of fairness and stability.  \textit{Semi-decentralized} {GNE seeking algorithms}, {where a reliable central coordinator gathers and broadcasts aggregate information,} have been proposed for strongly monotone \cite{paccagnan18, belgioioso18} and {merely} monotone games \cite{yin11, belgioioso17,belgioioso20}. A breakthrough idea in \cite{yi19}, later generalized for non-strongly monotone games {\cite{yi19b, franci20,belgioioso20distributed},} enables a \textit{distributed} computation of GNEs {by exploiting a suitable consensus protocol \cite{ bianchi22},} {thus requiring a peer-to-peer information exchange.}   %thus allowing to limit the local computational burden and the amount of private information shared. }
\par	 Existing results present, {however}, two fundamental shortcomings that might limit their practical application. {First,} unless strong assumptions are considered (namely, strong monotonicity of the pseudogradient), a game may {have} infinitely many v-GNEs \emph{and {the vast majority of the} existing algorithms provide no characterization of the equilibrium computed.}  {For instance, a Nash equilibrium can be arbitrarily inefficient with respect to system-level efficiency metrics (e.g., overall social cost) \cite{marden:14}.} {Such uncertainty on the obtained equilibrium is often unacceptable.} {A notable exception is the Tikhonov regularization algorithm \cite{yin11}, which  guarantees convergence to the minimum-norm solution. {In addition,} the method in \cite{dreves18, dreves19} seeks  the {(not necessarily variational)} GNE  closest to a desired strategy via a double-layer algorithm.} Second, decision-making agents often operate in a time-dependent environment and, due to the limited computation capabilities and  to the time required to  exchange information, it can be impossible to ensure a time-scale separation between the environment and the algorithm dynamics. This results in non-constant objectives and constraints between the {discrete-time} algorithmic iterations, as discussed in \cite{Simonetto2020}, {and the references therein,} for the particular case of optimization problems. Only few works, {e.g.,} \cite{lu2020,meng21},  consider this setting in the case of game {equilibrium problems} and only with a strong monotonicity assumption {on the game pseudogradient mapping.}

\subsubsection*{Optimal equilibrium selection and tracking}
{We can formulate the first issue, identified in the seminal work {\cite[Sect. 6]{facchinei10},} as} an \emph{optimal GNE selection} problem, that is, the problem of computing {a} GNE of a game (among the potentially infinitely many) that satisfies a selection criterion. This criterion {characterizes the desired equilibrium and} can be formalized as a system-level \emph{selection function} to be optimized over the set of GNEs. {For example,} {the system-level objective of an electricity market can be to {minimize} the deviation from an {efficient} operating set-point \cite{Simonetto2020}; for multiple autonomous vehicles, it can be to {minimize} the overall travel time of the network.}  {Meanwhile,} {the second} issue  can be cast as an \emph{optimal GNE tracking} problem, i.e., the problem of tracking the sequence of optimal GNEs of a time-varying game, with finite computation time and limited information on the future instances of the game {available.} As the GNE set is in general not a singleton, the tracking objective {should be} again chosen by means of a (time-varying) selection function. These problems, although of {high} practical interest, have never been addressed in the literature. 

{Under mild assumptions on the selection function,} the {optimal} GNE selection problem {in a monotone game} is a {special} case of a Variational Inequality ($\operatorname{VI}$) \cite{facchinei07} defined over the set of v-GNEs. {On the other hand,} as shown in \cite{franci20,belgioioso17,belgioioso20}, operator splitting techniques \cite{bauschke11} can be leveraged to characterize v-GNEs as the zeros of a monotone operator and, in turn, as the fixed-point set of a suitable operator. {Therefore, here we can cast} the problem as {that of} fixed-point selection \cite{yamada05}. {In the literature, e.g., {\cite{yamada05, xu03, cegielski14},} the latter} can be solved by the Hybrid Steepest Descent Method (HSDM), {whose iterations depend on the fixed-point operator (whose definition depends on the {primitives} of the game) and the monotone operator that defines the VI, namely the gradient of the selection function {in our setting}. \par

\subsubsection*{Contributions}
	In the first part of {the} paper {(Sections \ref{sec:op_gne} and \ref{sec:special_cases}),} we propose the first {distributed} algorithms for {solving} the optimal GNE selection {problem.} Our method employs the Forward-Backward-Forward (FBF) operator \cite{franci20} combined with the HSDM. We show that the proposed algorithm guarantees convergence to the \emph{optimal} v-GNE {set} in monotone games. Moreover, for {a special class} of monotone games, namely cocoercive games with affine coupling constraints, we also show that the preconditioned Forward-Backward (pFB) \cite{belgioioso17} can be paired with the HSDM to derive optimal GNE selection algorithms. Technically, {our contribution is to} {show that these operators {fulfill special properties that guarantee} the convergence of the HSDM toward the solution set of the corresponding fixed-point selection $\operatorname{VI}$.} {Compared to the methods in \cite{ yin11, dreves18, dreves19}, our proposed algorithms significantly  generalize the class of selection functions and, being single-layer, they provide  a considerable advantage in computational and communication burden compared to \cite{dreves18, dreves19}.}
\par In the second part of the paper {(Section \ref{sec:tv_GNEselect}),} we formalize the optimal GNE tracking problem {as a time-varying fixed-point selection problem.} {Thus,} {as} a solution {framework,} we propose the \emph{restarted HSDM}, {which adapts its operators when the problem changes.} In line with the results in the time-varying optimization literature \cite{Bastianello2020,Simonetto17}, we show convergence up to a tracking error which depends on the {problem data} and can be controlled by a suitable {tuning} of the algorithm parameters. {Similarly to the equilibrium selection problem, the restarted HSDM works with the aforementioned fixed-point operators to solve the optimal GNE tracking problem for the corresponding classes of monotone games.} 

\subsubsection*{Paper organization}
In Section \ref{sec:preliminaries}, we survey the required mathematical background and present a generalization {of the class of operators that comply with the conditions for applying the HSDM.} In Section \ref{sec:op_gne}, we formalize the optimal GNE selection problem and we {explain} our FBF-based algorithm  for general monotone games, {while} Section \ref{sec:special_cases} {discusses}  the pFB-based algorithm for {cocoercive} games. In Section \ref{sec:tv_GNEselect}, we formalize the optimal GNE tracking problem and we present the performance properties of the restarted HSDM algorithm. Finally, Section \ref{sec:illustrative_ex} illustrates the advantages of our methods on a peer-to-peer electricity market case study.

	\section{Mathematical preliminaries}
	\label{sec:preliminaries}
	%\comment{looks too long. We need to delete unnecessary definitions.}
	
	\paragraph*{Notation} The set of real numbers is denoted by $\bb R$.   {The vector of all $1$ (or $0$) with dimension $n$ are denoted by $\1_n$ ($\bm 0_n$). We omit the subscript when the dimension is clear from the context.}   
	The operator $\col(\cdot)$ stacks the arguments column-wise. For a group of vectors $x_i$, $i \in \mc I =\{1,2,\dots,N\}$, we use the bold symbol to denote their column concatenation, i.e., $\bm x := \col(\{x_i\}_{i \in \mc I})$. The cardinality of a set is denoted by $|\cdot|$. The operator $\langle x, y \rangle$ denotes the inner product. We denote by $\|\cdot\|$ the Euclidean norm {and by $\|\cdot\|_p$ the $p$-norm.} 
	Let $P \succ 0$ be symmetric. For $x,y \in \bb R^n$,  $\langle x, y \rangle_P = \langle x, Py \rangle$  denotes the $P$-weighted Euclidean inner product. The graph of an operator $A \colon \bb R^n \rightrightarrows \bb R^n$ is denoted by $\gph (A)$. $\zer(A)$ defines the set of zeros of operator $A$, i.e., $\zer(A) := \{x \in \dom(A) \mid 0 \in A(x)\}$ whereas $\fix(A)$ defines the set of fixed points of $A \colon \bb R^n \to \bb R^n$, i.e., $\fix(A) := \{x \in \dom(A) \mid A(x)=x \}$.
	
	\paragraph*{Convex functions}
	A continuously differentiable function $f \colon \bb R^n \to \bb R$ is $\sigma$-strongly convex with respect to a $p$-norm, with $\sigma >0$, if, for all $x,x' \in \dom f$,
	$ f(x') \geq f(x) + \langle \nabla f(x), x'-x \rangle + \frac{\sigma}{2}\|x'-x\|_{p}^2.$
	%Moreover, it also holds that
	%$  \langle \nabla f(x') -\nabla f(x), x'-x \rangle \geq {\sigma}\|x'-x\|_{p}^2.$
	Additionally, $f$ is convex if the previous inequality hold for $\sigma =0$. {The projection onto a closed convex set $C$ is denoted by $	\proj_C^\Psi(x) = \argmin_{z\in C}\| x-z\|_{\Psi}$, where $\Psi \succ 0$. For a convex function $f$ with subdifferential  $\partial f$ and $\Psi \succ 0$, the operator $\prox_{\partial f}^\Psi (x) := \argmin_z f(z) + \tfrac{1}{2}\|z-x \|_{\Psi}^2$ \cite[Def. 12.23]{bauschke11}. For example, for the indicator function of a closed convex set $C$, $\iota_C$, where $\partial \iota_C= \nc_C$ {being} the normal cone operator \cite[Ex. 1.25, 16.13]{bauschke11}, $\prox_{\iota_C}^\Psi(x)= \proj_C^\Psi(x)$ \cite[Ex. 12.25]{bauschke11}.  }
	
	%\paragraph{monotone, paramonotone, strongly monotone.}  
	\paragraph*{Operator theoretic definitions}
	An operator $A \colon \bb R^n \rightrightarrows \bb R^n$ is  \emph{monotone} if, for any $(x,y) \in \gph (A)$ and $(x',y') \in \gph (A)$,
	$ \langle y - y', x-x' \rangle \geq 0 $ \cite[Def. 20.1]{bauschke11}, %(ii) \emph{paramonotone} if $A$ is monotone and $\langle y - y', x-x' \rangle = 0 \Rightarrow (x,y') \in \gph(A) \text{ and } (x',y) \in \gph(A)$, for all $(x,y), (x',y') \in \gph(A)$, 
	and $\beta$-\emph{strongly monotone} if $A-\beta \Id$, {where $\Id$ is the identity operator,} is monotone. 
	%\paragraph{nonexpansive, attracting nonexpansive, quasi-nonexpansive,Lipschitz continuous, cocoercive.}
	Let $C$ be a nonempty subset of $\bb R^n$. A single-valued operator $\mc T \colon C \to \bb R^n$ is \emph{Lipschitz continuous} if there exists a constant $L>0$, such that, for all $x, x' \in \bb R^n$, 
	$ \| \mc T(x) - \mc T(x')\| \leq L \|x-x'\|$ {\cite[Def. 1.47]{bauschke11}}. In particular, the operator  $\mc T$ is (i) \emph{nonexpansive} if {$L=1$}, (ii)  \emph{attracting nonexpansive} if $\mc T$ is nonexpansive with $\fix(\mc T)\neq \varnothing$ and $\|\mc T(x)-z\| < \| x - z\|$, for all $z \in \fix(\mc T)$ and all $x \notin \fix(\mc T)$; and (iii) \emph{quasi-nonexpansive} if $\fix(\mc T)\neq \varnothing$ and $\| \mc T(x)-z\| \leq \|x -z \|$, for all $z \in \fix(\mc T)$ and $x \in \bb R^n$. %; %\blue{ 
	%and (iv) \emph{$\eta$-attracting quasi-nonexpansive} (or strongly quasi-nonexpansive) if $\fix(\mc T)\neq \varnothing$ and, for $\eta > 0$, $\| \mc T(x)-z\|^2 \leq \|x -z \|^2 - \eta \|x - \mc T(x) \|^2 $. 
	Moreover, $\mc T$ is $\alpha$-averaged nonexpansive, for $\alpha \in (0,1)$, if there exists a nonexpansive operator $\mc R \colon C \to \bb R^n$ such that $\mc T = (1-\alpha)\Id + \alpha \mc R$.  {If $\mc T$ is averaged nonexpansive with $\fix(\mc T) \neq \varnothing$, then $\mc T$ is attracting \cite[Sec. 2.A]{yamada05}. Additionally $\mc T$ is $\beta$-\emph{cocoercive} if $ \langle \mc T(x) - \mc T(y), x-y\rangle \geq \beta \| \mc T(x) - \mc T(y)\|$.
	
	Now, let $C$ be a non-empty, closed, and convex subset of $\bb R^n$, $\mc T\colon \bb R^n \to \bb R^n$ be quasi-nonexpansive under the $\Psi$-induced norm $\|\cdot\|_{\Psi}$ for some positive definite matrix $\Psi$, i.e.,
	$\| \mc T(x)-z\|_\Psi \leq \|x -z \|_\Psi$, for all $z \in \fix(\mc T) \neq \varnothing$ and $x \in \bb R^n$. 
	We define the distance of a point $x \in \bb R^n$ to $C$ by {
	%\begin{equation}
	$	\dist_\Psi(x,C) := \inf_{z \in C} \|x-z\|_{\Psi},
	$. % \end{equation}
	%whereas the projection onto $C$ is denoted by % $\proj_C^\Psi \colon \bb R^n \to \bb R^n$, i.e., 
	%\begin{equation}
	}  
	%end{equation} 
	For $r \geq 0$, we define the set %$C_{\geq r}^\Psi$ by % and $C_{\leq r}^\Psi$ by
	\begin{align}
		C_{\geq r}^\Psi := \{x \in \bb R^n \mid \dist_\Psi(x,C) \geq r \}. \label{eq:Cgeq}
		%\\
		%C_{\leq r}^\Psi := \{x \in \bb R^n \mid \dist_\Psi(x,C) \geq r \}.
	\end{align}
	Furthermore, let us define the function %$D_\Psi \colon [0,\infty) \to [0,\infty]$ by 
	\begin{equation}
		D_\Psi (r) \hspace{-2pt}:= \hspace{-2pt} \begin{cases}
%			\underset{x \in (\fix(\mc T))_{\geq r}^\Psi \cap C}{\inf} \left( \dist_\Psi(x,\fix(\mc T))  - \dist_\Psi(\mc T(x),\fix(\mc T)) \right), \\
			{\inf} \ \ \left( \dist_\Psi(x,\fix(\mc T))  - \dist_\Psi(\mc T(x),\fix(\mc T)) \right), \\
			\operatorname{s.t. } \  {x \in (\fix(\mc T))_{\geq r}^\Psi \cap C},
			\  \text{if } (\fix(\mc T))_{\geq r}^\Psi \hspace{-2pt}\cap\hspace{-2pt} C\hspace{-2pt} \neq\hspace{-2pt} \varnothing, \\
			+\infty, \quad \text{otherwise.}
		\end{cases}
		\label{eq:D}
	\end{equation}
	For $\Psi = I$, we omit the subscript of $D$. We sometimes refer to $D_{\Psi}$ as the \emph{shrinkage function} under the norm $\|\cdot\|_{\Psi}$. 
		
	The function $D_\Psi$ has the properties stated {next} in Proposition \ref{prop:D} {(see \cite[Prop. 2.6]{Cegielski13} for the case $\Psi=I$).} %however the proof of Proposition \ref{prop:D} follows verbatim from that of  \cite[Proposition 2.6]{Cegielski13}. 
	\begin{proposition}
		\label{prop:D}
		{Let $\Psi$ be positive definite.} 
		For the function $D_\Psi$ defined in \eqref{eq:D}, it holds that:
		\begin{enumerate}[(i)]
			\item {$D_\Psi$ is positive semidefinite and non-decreasing;}% $D_\Psi (0)= 0$ and $D_\Psi(r) \geq 0$, for all $r \geq 0$;
			%\item %If $r_1 \geq r_2 \geq 0$, then $D_\Psi(r_1) \geq D_\Psi(r_2)$;
			\item $D_\Psi(\dist(x,\fix(\mc T))) \leq \| x - \mc T(x) \|_\Psi$ for all $x\in C$.%\eod 
		\end{enumerate}
	\end{proposition}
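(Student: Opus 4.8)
The plan is to establish the three claims—nonnegativity, monotonicity, and the upper bound (ii)—from two elementary facts: quasi-nonexpansiveness under $\|\cdot\|_\Psi$ forces the $\Psi$-distance to $\fix(\mc T)$ to be non-increasing along $\mc T$, and the $\Psi$-distance to a set satisfies a reverse triangle inequality. Throughout I would work with $\epsilon$-minimizers rather than assume that the infima defining the distances are attained, since $\fix(\mc T)$ need not be compact.

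For positive semidefiniteness, I would fix any feasible point $x \in (\fix(\mc T))_{\geq r}^\Psi \cap C$ and show the objective inside the infimum in \eqref{eq:D} is nonnegative. Given $\epsilon > 0$, pick $z_\epsilon \in \fix(\mc T)$ with $\|x - z_\epsilon\|_\Psi \leq \dist_\Psi(x,\fix(\mc T)) + \epsilon$. Quasi-nonexpansiveness yields $\dist_\Psi(\mc T(x),\fix(\mc T)) \leq \|\mc T(x) - z_\epsilon\|_\Psi \leq \|x - z_\epsilon\|_\Psi \leq \dist_\Psi(x,\fix(\mc T)) + \epsilon$; letting $\epsilon \to 0$ gives $\dist_\Psi(\mc T(x),\fix(\mc T)) \leq \dist_\Psi(x,\fix(\mc T))$, so the objective is $\geq 0$ and hence $D_\Psi(r) \geq 0$ (the empty-feasible-set case is $+\infty \geq 0$). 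I would also note that, when $\fix(\mc T) \cap C \neq \varnothing$, evaluating at a fixed point gives $D_\Psi(0) = 0$.

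For monotonicity, I would observe directly from \eqref{eq:Cgeq} that $r_1 \leq r_2$ implies $(\fix(\mc T))_{\geq r_2}^\Psi \subseteq (\fix(\mc T))_{\geq r_1}^\Psi$, so the feasible set of $D_\Psi(r_2)$ is contained in that of $D_\Psi(r_1)$. An infimum over a smaller set can only be larger, whence $D_\Psi(r_1) \leq D_\Psi(r_2)$; the $+\infty$ convention is consistent, since if the $r_2$-feasible set is empty the inequality is immediate, and if it is nonempty then so is the $r_1$-feasible set.

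Finally, for (ii), I would set $r = \dist_\Psi(x,\fix(\mc T))$ and note that $x$ itself satisfies $\dist_\Psi(x,\fix(\mc T)) = r \geq r$ with $x \in C$, so $x$ is feasible for $D_\Psi(r)$ and thus $D_\Psi(r) \leq \dist_\Psi(x,\fix(\mc T)) - \dist_\Psi(\mc T(x),\fix(\mc T))$. It remains to bound the right-hand side by $\|x - \mc T(x)\|_\Psi$, which is the reverse triangle inequality: for every $z \in \fix(\mc T)$, $\dist_\Psi(x,\fix(\mc T)) \leq \|x - z\|_\Psi \leq \|x - \mc T(x)\|_\Psi + \|\mc T(x) - z\|_\Psi$, so taking the infimum over $z \in \fix(\mc T)$ gives $\dist_\Psi(x,\fix(\mc T)) - \|x - \mc T(x)\|_\Psi \leq \dist_\Psi(\mc T(x),\fix(\mc T))$. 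Chaining the two inequalities completes the proof. No step poses a genuine obstacle; the only care required is the non-attainment of the infima, handled by the $\epsilon$-argument above.
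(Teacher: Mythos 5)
Your proof is correct, and it is worth noting that the paper itself contains no proof of this proposition: it simply defers to \cite[Prop.~2.6]{Cegielski13} for the case $\Psi = I$. Your write-up is in effect the routine generalization of that cited argument to the weighted norm, and it proceeds exactly along the canonical lines: quasi-nonexpansiveness under $\|\cdot\|_\Psi$ makes the objective in \eqref{eq:D} pointwise nonnegative; the nestedness $(\fix(\mc T))_{\geq r_2}^\Psi \subseteq (\fix(\mc T))_{\geq r_1}^\Psi$ for $r_1 \leq r_2$ gives monotonicity; and feasibility of $x$ at the level $r = \dist_\Psi(x,\fix(\mc T))$ combined with the triangle inequality gives item (ii). Indeed, the paper's own appendix arguments (e.g., the bound $D(a^{(k)}) \leq a^{(k)} - \dist(\mc T(\bm\omega^{(k)}), \fix(\mc T)) + \beta U$ in the online-tracking analysis) silently re-use precisely your feasibility step in the Euclidean case.

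Two minor remarks. First, your $\epsilon$-minimizer machinery is dispensable: quasi-nonexpansiveness alone forces $\fix(\mc T)$ to be closed (if $z_k \in \fix(\mc T)$ and $z_k \to z$, then $\|\mc T(z) - z_k\|_\Psi \leq \|z - z_k\|_\Psi \to 0$, so $\mc T(z) = z$), and in $\bb R^n$ the distance to a nonempty closed set is always attained, compactness notwithstanding; your $\epsilon$-argument is nevertheless valid and costs nothing. Second, you silently read the argument of $D_\Psi$ in item (ii) as $\dist_\Psi(x,\fix(\mc T))$, whereas the paper's statement writes the unweighted $\dist$. Your reading is the right one: with the literal Euclidean distance, your feasibility step $x \in (\fix(\mc T))_{\geq r}^\Psi$ would require $\dist(x,\fix(\mc T)) \leq \dist_\Psi(x,\fix(\mc T))$, which holds for all $x$ only when $\Psi \succeq I$, not for general $\Psi \succ 0$; so the subscript omission in the statement is best understood as a typo (consistent with the convention, stated just after \eqref{eq:D}, of dropping subscripts when $\Psi = I$), and had you taken the statement literally you should have flagged that the claim needs either this correction or a rescaling of $\Psi$.
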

	
	\begin{definition}[Quasi-shrinking \cite{yamada05}]
		\label{def:q_shrinking1}
		A quasi-nonexpansive operator $\mc T \colon \bb R^n \to \bb R^n$ is \emph{quasi-shrinking} on a non-empty, closed, and convex set $C \subseteq \bb R^n$ if $\fix(\mc T)\cap C \neq \varnothing$ and $D(r)= 0 \Leftrightarrow r=0$, where $D(r)$ is defined as in \eqref{eq:D}. %\eod
	\end{definition}
	\iffalse 
	\begin{definition}[Quasi-shrinking (alternative definition) \cite{Cegielski13}]
		\label{def:q_shrinking2}
		A quasi-nonexpansive operator $\mc T \colon \bb R^n \to \bb R^n$ is \emph{quasi-shrinking} on a non-empty, closed, and convex set $C \subseteq \bb R^n$ if $\fix(\mc T)\cap C \neq \varnothing$ and, for any sequence $(x^{(k)})_{k \in \bb N} \subseteq C$, the following implication holds:
		\begin{align}
			&\lim_{k\to \infty} \left( \|x^{(k)} - \proj_{\fix(\mc T)}(x^{(k)}) \| - \| \mc T(x^{(k)}) - \proj_{\fix(\mc T)}(\mc T(x^{(k)})) \| \right)= 0 \notag \\
			&\qquad \Rightarrow \lim_{k\to \infty}\|x^{(k)} - \proj_{\fix(\mc T)}(x^{(k)}) \| = 0.
			\label{eq:q_shrinking2}
		\end{align}
		\eod
	\end{definition}
	\fi 
	\begin{remark}
		\label{rem:q_shrinking}
		Suppose that a quasi-nonexpansive operator $\mc T$ is quasi-shrinking on C, i.e., $D(r) = 0 \Leftrightarrow r=0$. Then, it also holds that $D_\Psi(r)=0 \Leftrightarrow r=0$, for any   $\Psi \succ 0$. %\red{[proof?]}. %Furthermore, it also holds that
		%\eod
	\end{remark}
	%\fi 
	%\blue{
	
	\begin{example} %\wicaksay{nice to have, but consider to delete if space is limited}
		The Euclidean projection onto $C$, $\proj_C$ is quasi-shrinking and its shrinkage function (defined in (\ref{eq:D})) is
		{$$D(r)=\inf_{\{u|\dist(u, C)\geq r\}} \dist(u, C) - \underbrace{\dist(\proj_C(u),C)}_{=0} = r .$$}
		%\eod
	\end{example}

{Finally, we identify a class of quasi-shrinking operators, as formally stated in Lemma \ref{le:demi_closedness_quasi_shrinking}, which generalizes the result in \cite[Prop. 2.11]{Cegielski13} and is useful for our analysis.}
\begin{definition}[Demiclosed operator {\cite[Def. 4.26]{bauschke11}}]
	Let $C\subseteq\R^{{n}}$ {be a} closed {set.} An operator $\mc T\colon C\to \R^{{n}}  $ is demiclosed {at} $u\in\R^{{n}}$ if $\mc T(\bomega^{\infty})=u$, for any sequence $(\bomega_k)_{k\in\N}\in C$  such that $\lim_{k\xrightarrow[]{}\infty}\bomega_k=\bomega^{\infty}$ and $\lim_{k\xrightarrow[]{}\infty}\mc T (\bomega_k)=u$. %\eod
\end{definition}	
\begin{lemma}
\label{le:demi_closedness_quasi_shrinking}
    Let $\mc T$ be quasi-nonexpansive, with $\fix(\mc T)\neq \varnothing$. Let ${\mc T_2}$ be an operator such that ${\Id}-\mc T_2$ is demiclosed {at} $0$ and such that $\fix({\mc T_2})\subseteq \fix(\mc T) $. {Assume that} for any $\omegaopt\in\fix(\mc T)$, 
    \begin{equation}
    \label{eq:condition_quasi_shrinking}
        \| \mc T(\bomega)-\omegaopt \|^2_{\Psi} \leq \| \bomega-\omegaopt \|^2_{\Psi} - \gamma \| \bomega - \mc T_2(\bomega) \|^2_{\Psi},
    \end{equation}
	{for some $\gamma>0$ and $\Psi \succ 0$.} 
    Then, $\mc T$ is quasi-shrinking on any compact {convex} set $C$ such that $C\bigcap\fix(\mc T)\neq \varnothing$. %\eod 
\end{lemma}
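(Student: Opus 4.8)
The plan is to prove the equivalence $D_\Psi(r)=0 \Leftrightarrow r=0$ for the shrinkage function of the $\Psi$-norm defined in \eqref{eq:D}; since this equivalence is invariant under the choice of weighting matrix (the norm-equivalence principle underlying Remark \ref{rem:q_shrinking}), it coincides with the Euclidean statement $D(r)=0\Leftrightarrow r=0$ demanded by Definition \ref{def:q_shrinking1}. First I would observe that \eqref{eq:condition_quasi_shrinking} in particular makes $\mc T$ quasi-nonexpansive under $\|\cdot\|_\Psi$: for every $x\in C$, taking $z=\proj_{\fix(\mc T)}^\Psi(x)$ gives $\dist_\Psi(\mc T(x),\fix(\mc T))\le \|\mc T(x)-z\|_\Psi \le \|x-z\|_\Psi = \dist_\Psi(x,\fix(\mc T))$. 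Hence the objective in \eqref{eq:D} is nonnegative, and evaluated at any point of $C\cap\fix(\mc T)\neq\varnothing$ it equals zero, which settles the easy implication $D_\Psi(0)=0$.

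For the nontrivial direction I would argue by contradiction, assuming $D_\Psi(r)=0$ for some $r>0$. Then $D_\Psi(r)$ is finite, so $(\fix(\mc T))_{\geq r}^\Psi\cap C\neq\varnothing$ and there is a minimizing sequence $(\bomega_k)_{k\in\N}$ in it along which $\dist_\Psi(\bomega_k,\fix(\mc T)) - \dist_\Psi(\mc T(\bomega_k),\fix(\mc T)) \to 0$. Setting $\omegaopt_k := \proj_{\fix(\mc T)}^\Psi(\bomega_k)$, so that $\|\bomega_k-\omegaopt_k\|_\Psi = \dist_\Psi(\bomega_k,\fix(\mc T))$, and combining $\dist_\Psi(\mc T(\bomega_k),\fix(\mc T)) \le \|\mc T(\bomega_k)-\omegaopt_k\|_\Psi$ with \eqref{eq:condition_quasi_shrinking} evaluated at $\omegaopt_k$, I obtain
\[
 \gamma \| \bomega_k - \mc T_2(\bomega_k)\|_\Psi^2 \le \dist_\Psi(\bomega_k,\fix(\mc T))^2 - \dist_\Psi(\mc T(\bomega_k),\fix(\mc T))^2 .
\]
The crucial step is then to factor the right-hand side as $(a_k-b_k)(a_k+b_k)$ with $a_k := \dist_\Psi(\bomega_k,\fix(\mc T))$ and $b_k := \dist_\Psi(\mc T(\bomega_k),\fix(\mc T))$: the first factor vanishes by construction, while the second is bounded because $0\le b_k \le a_k$ and $(\bomega_k)$ lies in the compact set $C$. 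Therefore $\|\bomega_k-\mc T_2(\bomega_k)\|_\Psi \to 0$, i.e. $(\Id-\mc T_2)(\bomega_k)\to 0$.

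To close the argument I would use the compactness of $C$ to extract a subsequence with $\bomega_k \to \bomega^\infty \in C$; since $x\mapsto \dist_\Psi(x,\fix(\mc T))$ is continuous and the feasible set is closed, $\dist_\Psi(\bomega^\infty,\fix(\mc T)) \ge r > 0$. On the other hand, from $\bomega_k\to\bomega^\infty$ together with $(\Id-\mc T_2)(\bomega_k)\to 0$, the demiclosedness of $\Id-\mc T_2$ at $0$ forces $\bomega^\infty \in \fix(\mc T_2) \subseteq \fix(\mc T)$, whence $\dist_\Psi(\bomega^\infty,\fix(\mc T))=0$ --- a contradiction. Thus $D_\Psi(r)>0$ for every $r>0$, which completes the equivalence and hence the quasi-shrinking claim on $C$.

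The main obstacle I anticipate is the passage from the hypothesis, which controls only the \emph{difference} of distances, to the residual $\|\bomega_k-\mc T_2(\bomega_k)\|_\Psi$: this relies on applying \eqref{eq:condition_quasi_shrinking} at the \emph{projection} $\omegaopt_k$ rather than at an arbitrary fixed point, and on the difference-of-squares factorization that converts the mere decay of $a_k-b_k$ into decay of the residual. A secondary subtlety is that $\mc T$ is not assumed continuous, so the infimum in \eqref{eq:D} need not be attained and $\dist_\Psi(\mc T(\cdot),\fix(\mc T))$ need not be continuous; this is exactly why the limit point of the minimizing sequence must be controlled through the demiclosedness of $\Id-\mc T_2$ at $0$ instead of through continuity of $\mc T$.
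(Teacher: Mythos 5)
Your proposal is correct and follows essentially the same route as the paper's proof: a contradiction argument on a minimizing sequence in $(\fix(\mc T))_{\geq r}^{\Psi}\cap C$, applying \eqref{eq:condition_quasi_shrinking} at the projection $\proj_{\fix(\mc T)}^{\Psi}(\bomega_k)$, converting the vanishing difference of distances into a vanishing residual $\|\bomega_k-\mc T_2(\bomega_k)\|_\Psi$ via the difference-of-squares factorization bounded by compactness of $C$, and then invoking Bolzano--Weierstrass together with demiclosedness of $\Id-\mc T_2$ at $0$ to contradict the closedness of $(\fix(\mc T))_{\geq r}^{\Psi}\cap C$. The only cosmetic differences are that you factor with the true distance $\dist_\Psi(\mc T(\bomega_k),\fix(\mc T))$ rather than with $\|\mc T(\bomega_k)-\proj_{\fix(\mc T)}^{\Psi}(\bomega_k)\|_\Psi$ as the paper does, and that you spell out the trivial direction $D_\Psi(0)=0$ and the $\Psi$-to-Euclidean norm translation that the paper leaves to Remark \ref{rem:q_shrinking}.
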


\begin{proof}
	See Appendix \ref{appendix:proof:le:demi_closedness_quasi_shrinking}.
\end{proof}

	\iffalse
	\begin{lemma}
		\label{le:quasi_shrinking_avrgd}
		%\wicaksay{can be moved to appendix, is used for pFB and PPP. If here, it gives another example of quasi-shrinking operator}
	An averaged operator $\mc T \colon \bb R^n \to \bb R^n$ is quasi-shrinking on any closed bounded convex set $C$ such that $C \cap \fix(\mc T) \neq \varnothing$.
		\eod 
	\end{lemma}
	\fi

	\section{Optimal selection of generalized Nash equilibria}
	\label{sec:op_gne}
	%	\subsection{Formulation of generalized Nash equilibrium seeking}
	%	\paragraph{most general mathematical formulation}
	\subsection{Generalized Nash equilibrium problem}
	\label{sec:gnep}
	
%	\comment{Condense this part to the very minimal}
	%In this section, we formally define the problems of Generalized Nash Equilibrium (GNE) selection {and tracking.} To do so, we start by firstly describing the standard generalized Nash equilibrium problem (GNEP), i.e. the problem of computing a solution of a generalized game, defined by a set of inter-dependent optimization problems.
	
	Let {us consider} $N$ agents, denoted by the set $\mc I := \{1, 2, \dots, N\}$, {with} inter-dependent optimization problems: 
	\begin{subequations}
		\begin{empheq}[left={\forall i \in \mc I \colon \empheqlbrace\,}]{align}
			\underset{x_i \in \mc X_i}{\operatorname{min}} \quad & J_i(\bm x):= {\ell_i(x_i)  + } f_i(\bm x) \label{eq:cost_f} \\
			\operatorname{s.t.} \quad & \sum_{j \in \mc I} g_j(x_j) \leq 0, \label{eq:coup_const}
		\end{empheq}
		\label{eq:gen_game}%
	\end{subequations}
	where $x_i \in \bb R^{n_i}$ is the decision variable of agent $i$ whereas $\bm x := \col((x_i)_{i \in \mc I}) \in \bb R^n$ is a concatenated vector of the decision variables of all agents. {Let us} use $\bm x_{-i}= \col(\{x_j\}_{j \in \mc I\backslash \{i\}})$ to denote the concatenated decision variables of all agents except agent $i$.   Let $\mc X_i \subseteq \bb R^{n_i}$ denote the local feasible set of $x_i$ and $J_i \colon \bb R^n \to \bb R$ denote the cost function of agent $i$ that depends on the decision variables of other agents. 
	Moreover, \eqref{eq:coup_const} {represents} a separable coupling constraint  with the function $g_j \colon \bb R^{n_j} \to \bb R^{m}$ associated with agent $j$.
	
	%The coupling constraint in \eqref{eq:coup_const} can have some structure, e.g., agent $i$ is only coupled with a subset of agents, and this structure can be exploited in the algorithm design process. However, for simplicity of exposition and without loss of generality, we consider the formulation in \eqref{eq:coup_const}.
	
	%	\paragraph{least restrictive assumptions}
	%Problem \eqref{eq:gen_game} belongs to the class of generalized games. 
	We denote the collective feasible set of {the game in} \eqref{eq:gen_game} by  %$\bm{\Omega}$, i.e., 
		{ \begin{equation}
				{\bm{\Omega} := \prod_{i \in \mc I} \mc X_i \cap \Big\{\bm x \mid \sum_{j \in \mc I} g_j(x_j) \leq 0  \Big\}.}\label{eq:X}
			%	\bm{\Omega} := \bm{\mc X} \cap \bm \Gamma,\label{eq:X}
				%	\left(  \prod_{i \in \mc I} \mc X_i \right) \bigcap \left\{\bm x \mid  \sum_{j \in \mc I} g_j(x_j) \leq 0 \right\}. 
		\end{equation}}%
		%where $\bm{\mc X} :=  \prod_{i \in \mc I} \mc X_i$ and $\bm \Gamma := \left\{\bm x \mid \sum_{j \in \mc I} g_j(x_j) \leq 0  \right\}$. 
		Here, we look for equilibrium solutions to \eqref{eq:gen_game} where no agent has the incentive to unilaterally deviate, namely, GNE:
	\begin{definition}%[GNE]
		A set of strategies $\bm x^\ast := \col((x_i^\ast)_{i \in \mc I})$ is a generalized Nash equilibrium (GNE) of the game in \eqref{eq:gen_game} if $\bm x^\ast \in {\bm{\Omega}}$ and, for each $i \in \mc I$,
		\begin{equation}
			J_i(\bm x^\ast) \leq  J_i(x_i,\bm x_{-i}^\ast),
		\end{equation}
		for any $x_i \in \mc X_i \cap \{y \mid g_i(y) \leq - \sum_{j \in \mc I \backslash \{i\} } g_j(x_j^\ast) \}$. \eod 
	\end{definition}

	Furthermore, we focus on the class of  jointly convex GNEP and {hence,} consider the following assumptions on Problem \eqref{eq:gen_game} \cite[{Assms}  1--2]{belgioioso17}. We note that  {\cite{yi19,yi19b,franci20,belgioioso20distributed,belgioioso20}} consider the case {of affine constraint functions.}
	\begin{assumption}
		\label{as:gen_game}
		{In \eqref{eq:gen_game},} for each $i \in \mc I$, {the functions}  {$f_i(\cdot,\bm x_{-i})$,} for any $\bm x_{-i}$, and $g_i(\cdot)$ are {component-wise} convex and continuously differentiable; $\ell_i$ is convex and lower semicontinuous. {For each $i \in \mc I$, the set} $\mc X_i$   is nonempty, compact, and convex. The global feasible set { $\bm{\Omega}$ defined in \eqref{eq:X}} is non-empty and satisfies Slater's constraint qualification {\cite[Eq. (27.50)]{bauschke11}.} %\eod %
	\end{assumption}
	\begin{assumption}
		\label{as:pseudograd}
		The  mapping 
		{\begin{equation}
			F(\bm x)	:=\col(({\nabla_{x_i} f_i} (\bm x))_{i\in\mc I}), \label{eq:pseudograd}
		\end{equation}}
		 {with $f_i$ as in \eqref{eq:cost_f},} is monotone. %\eod 
	\end{assumption}

 {As in {\cite{belgioioso17,yi19,yi19b,franci20,belgioioso20distributed,belgioioso20},}} we can formulate the problem of finding a GNE of the game in \eqref{eq:gen_game} as that of a monotone inclusion. To this end, we introduce the dual variable $\lambda_i \in \bb R^{m}$, {for each $i \in \mc I$,} to be associated with the coupling constraint \eqref{eq:coup_const}. % and the following Lagrangian function, {for each $i \in \mc I$}:
%	\begin{equation}
	%	\mc L_i(\bm x, \lambda_i) := J_i(\bm x) + 
	%	\langle \lambda_i, \sum_{j \in \mc I} g_j(x_j) \rangle. \label{eq:Lagr_f}
%	\end{equation}
	%By denoting as $\nc_{\mc C}$ the normal cone operator of a convex set $\mc C$, 
	{{Furthermore,} we focus on a subset of GNEs, namely variational GNE (v-GNE), indicated by  equal optimal dual variables, $\lambda_i^\ast = \lambda^\ast$, for all $i \in \mc I$. As discussed in \cite{facchinei10,kulkarni12}, a v-GNE enjoys several desirable properties, such as   fairness and larger social stability than non-variational ones.  Under Assumptions \ref{as:gen_game}--\ref{as:pseudograd}, the set of v-GNEs of the game in \eqref{eq:gen_game} {is non-empty} \cite[Prop. 12.11]{palomar10}. The Karush-Kuhn-Tucker (KKT) optimality conditions of a v-GNE of the game in \eqref{eq:gen_game}, denoted by $\bm x^\ast$, {are}:
	\begin{subequations}
		\begin{empheq}[left={\hspace{-5pt} \forall i \in \mc I \colon \empheqlbrace\,}]{align}
			&	\0 \in \nc_{\mc X_i} ( x_i^\ast)+ \partial_{x_i} J_i(\bm x^\ast) + \langle \nabla g_i(x_i^\ast), {\lambda}^\ast  \rangle, \label{eq:KKT_1}\\
			&	 \0 \in \nc_{\bb R^m_{\geq 0}}({\lambda}^\ast) - \sum_{j \in \mc I} g_j(x_j^\ast).\label{eq:KKT_2}
		\end{empheq}
		\label{eq:KKT}%
	\end{subequations}
	%for some dual variables $\lambda^\ast$. 
} 
	
	To obtain a v-GNE via a fully distributed algorithm, we incorporate a consensus scheme {on} the dual variables. In the full information case, one typically assumes that there exists a communication network over which the agents exchange information to update their dual variables. Let us represent this communication network  as an undirected graph $\mc G^\lambda = (\mc I, \mc E^\lambda)$ and assume that $\mc G^\lambda$ is connected. Furthermore, we denote the Laplacian of $\mc G^\lambda$ by $L$ and the neighbors of agent $i$ in $\mc G^\lambda$ by $\mc N_i^\lambda$, i.e., $\mc N_i^\lambda := \{j \in \mc I \mid (i,j) \in \mc E^\lambda \}$. Additionally, let $\mc N_i^J$ denote the set of agents whose decision variable $x_j$ influences the cost function $J_i$. For simplicity, we assume that $\mc N_i^J \subseteq \mc N_i^\lambda$. 
	
	Now, let us denote $\nu_i \in \bb R^m$ as the consensus variable of agent $i$, and $\bm \omega =(\bm x, \bm \lambda, \bm \nu) \in \bb R^{n_\omega}$, where $\bm \lambda = \col(\{\lambda_i\}_{i \in \mc I})$,  $\bm \nu = \col(\{\nu_i\}_{i \in \mc I})$, and $n_\omega = n+2Nm$. Then, we can define the operators $\mc A\colon {\bm{\mc X}} \times \bb R^{Nm}_{\geq 0} \times \bb R^{Nm} \to \bb R^{n_\omega} $, $\mc B\colon \bb R^{n_\omega} \to \bb R^{n_\omega}$, and $\mc C \colon \bb R^{n}\times \bb R_{\geq 0}^{Nm} \times \bb R^{Nm} \to \bb R^{n_\omega}$, as follows:
	{
		\begin{align}
		\mc A(\bm \omega)  
		&:= \prod_{i \in \mc I} (\nc_{\mc X_i} {+ \partial \ell_i} )(x_i) \times \nc_{\bb R^{Nm}_{\geq 0}}(\bm \lambda) \times \{\0_{Nm}\}, \label{eq:op_A}\\
		\mc B(\bm \omega) %&\colon \bb R^{n_\omega} \to \bb R^{n_\omega} \notag \\
		&:= \col({F(\bm x)}, (L \otimes I_m) \bm \lambda, \0_{Nm} ), \label{eq:op_B}\\
		\mc C(\bm \omega) %&\colon \bb R^{n}\times \bb R_{\geq 0}^{Nm} \times \bb R^{Nm} \to \bb R^{n_\omega} \notag \\
		&:= \col((\langle \nabla g_i(x_i), \lambda_i \rangle)_{i \in \mc I}, - (g_i(x_i))_{i \in \mc I} - (L \otimes I_m)\bm \nu, \notag \\
		& \qquad \quad (L \otimes I_m) \bm \lambda  ). \label{eq:op_C}
	\end{align} }%
	In turn, we can translate the GNEP {in \eqref{eq:gen_game}} as a monotone inclusion problem, i.e., 
	\begin{equation}
%		\text{find }\bm \omega \text{ such that } \0 \in (\mc A + \mc B + \mc C)(\bm \omega), \label{eq:mon_incl}
		\text{find }\bm \omega \text{ such that } {\bm \omega \in \zer(\mc A + \mc B + \mc C).} \label{eq:mon_incl}
	\end{equation}
	%i.e., $\bm \omega \in \zer(\mc A + \mc B + \mc C)$. 
	Similarly to \cite[Thm. 2]{yi19}, we can show that for any $\bm \omega$ such that \eqref{eq:mon_incl} holds, we obtain the pair $(\bm x,\lambda)$  that satisfies the KKT conditions in \eqref{eq:KKT} if Assumptions \ref{as:gen_game}-\ref{as:pseudograd} hold (see Appendix \ref{ap:prop_ABC} for details).  Furthermore, due to the maximal monotonicity of $(\mc A + \mc B + \mc C)$ (Lemma \ref{le:max_mon_ABC} in Appendix \ref{ap:prop_ABC}), $\zer(\mc A + \mc B + \mc C)$ is convex \cite[Prop. 23.39]{bauschke11}. Additionally, since the set of v-GNE of the game is bounded {as it is a subset of $\bm{\mc X}$,} the set of solutions of the inclusion in \eqref{eq:KKT} and {the set} $\zer(\mc A + \mc B + \mc C)$ are  bounded \cite[Prop. 3.3]{auslender00}. 
	
	\subsection{Optimal equilibrium selection problem}
	{The inclusion problem in \eqref{eq:mon_incl} {may have multiple} solutions. In this section, we want to find {{an} equilibrium} solution  that minimizes a selection function, denoted by } 
	 $\phi \colon \bb R^{n_\omega} \to \bb R$, i.e., 
		\begin{equation}
		\left\{
		\begin{array}{rl}
			\underset{\bm \omega}{\argmin}  & \phi(\bm \omega)  \\		
			\operatorname{s.t.}  & \bm \omega \in \zer(\mc A+\mc B+\mc C).
		\end{array}	
		\right.
		\label{eq:gne_opt1}%
	\end{equation}
 
	For example, we can consider {the selection function}
	\begin{equation}
		{\phi_{\mathrm{ex}}}(\bm \omega) = \| Q \bm \omega - \bm \omega^{\mathrm{ref}}\|, \label{eq:psi_ex}
	\end{equation}
	for some $Q \succcurlyeq 0$. When $Q = I$ and $\omega^{\mathrm{ref}}=\0$, the objective is to find a minimum norm v-GNE. The vector $\omega^{\mathrm{ref}}$ can be any desired strategy of the agents, and thus the objective is to find {the} v-GNE closest to this strategy, {as discussed in \cite{dreves18,dreves19}.}  In some engineering applications, such as electrical networks, \eqref{eq:psi_ex} can represent system level objectives (see Section \ref{sec:illustrative_ex}).
	{In} the remainder of the paper, we consider the following {technical} assumption on the selection function, {which, together with the convexity of $\zer(\mc A + \mc B + \mc C)$, guarantees that the {optimization} problem in \eqref{eq:gne_opt1} is convex.}
	\begin{assumption}%[Selection function]
		\label{as:phi}
		The  function $\phi$ {in \eqref{eq:gne_opt1}} is  %separable, i.e., $\phi(\bm \omega) = \sum_{i \in \mc I} \phi_i(\omega_i)$, where $\phi_i \in \bb R^{n_{\omega_i}} \to \overline{\bb R}$, for each $i \in \mc I$, is 
		continuously differentiable,  convex, {and has $L_{\phi}$-Lipschitz continuous gradient}. %\eod
	\end{assumption}

	As {a} first step {towards} computing an optimal variational GNE, we %\emiliosay{I think it becomes clearer if we already make the connection between (\ref{eq:mon_incl}) and $\fix(\mc T)$ explicit} 
 leverage existing results to derive operators $\mc T$ with the property {that}
	\begin{equation}
	\label{eq:equivalence_zer_fix}
	    \bomega \in \zer (\mc A + \mc B + \mc C) \Leftrightarrow \bomega \in \fix(\mc T),
	\end{equation}
{{and such that {the Banach-Picard iteration of $\mc T$  \cite[Sect. 5.2]{bauschke11}} guarantees convergence to a solution of the inclusion in \eqref{eq:mon_incl}.} For instance, for cocoercive generalized games, a preconditioned forward-backward (pFB) {operator} {presents the desired characteristics}\cite{yi19}, 
whereas the {forward-reflected-backward (FRB) operator \cite{malitsky20}} or the forward-backward-forward (FBF) {operator} \cite{tseng00} meets {these requirements} even for general monotone games. Furthermore, {we require that the operator $\mc T$ in \eqref{eq:equivalence_zer_fix} can be evaluated} in a distributed manner.} 
{Therefore, by \eqref{eq:equivalence_zer_fix} {and Assumption \ref{as:phi},} the optimal equilibrium selection problem in \eqref{eq:gne_opt1} can be cast as a fixed-point selection $\operatorname{VI}$:} 
% {By Assumption \ref{as:phi} and supposing the existence of $\mc T$ that satisfies \eqref{eq:equivalence_zer_fix},} 
\iffalse 
 We can then reformulate \eqref{eq:gne_opt1} as a fixed-point selection problem, that is, given a monotone operator $\mc F$ and an operator $\mc T$ with a non-empty fixed point set $\fix(\mc T)$, 
	{\begin{equation}
		\text{find } \bm \omega^\star \text{ s.t. } \inf_{\bm \omega \in \fix(\mc T)} \innerprod{\bm \omega - \bm \omega^\star}{\mc F (\bm \omega^\star)} \geq 0, \label{eq:fps}
	\end{equation}}
	which is a special class of variational inequality (VI) problems.  {Specifically,} by Assumption \ref{as:phi}, we can consider the gradient of $\phi$ as the monotone operator $\mc F$ in \eqref{eq:fps}. Therefore, the solutions to Problem \eqref{eq:gne_opt1} equal to those of the {$\operatorname{VI}(\fix(\mc T), \mc F)$} in \eqref{eq:fps}, {with $\mc F = \nabla \phi$ and $\mc T$ satisfying \eqref{eq:equivalence_zer_fix}, which can be equivalently written as}
	\fi
	\begin{equation}
		\text{find } \bm \omega^\star \text{ s.t. } \inf_{\bm \omega \in \fix(\mc T)} \innerprod{\bm \omega - \bm \omega^\star}{{\nabla \phi} (\bm \omega^\star)} \geq 0. \label{eq:gne_opt}
	\end{equation}
	\iffalse 
	\begin{equation}
		\left\{
		\begin{array}{rl}
			\underset{\bm \omega}{\min}  & \phi(\bm \omega)  \\		
			\operatorname{s.t.}  & \bm \omega \in \fix(\mc T).
		\end{array}	
		\right. 
		\label{eq:gne_opt}%
	\end{equation}
%{which can be generalized further }
\fi 
	\subsection{Distributed optimal equilibrium selection algorithm}
		\label{sec:distributed_GNE_selection}
	{With the aim of solving the $\operatorname{VI}$ in \eqref{eq:gne_opt},} we consider a fixed-point selection algorithm called the hybrid steepest descent method (HSDM)  \cite{yamada05}, which is defined by the following {discrete-time dynamical system} or iteration:
	\begin{equation}
		\label{eq:hsdm}
		\bomega^{(k+1)}=\mc T(\bomega^{(k)})-\beta^{(k)} {\nabla \phi }(\mc T(\bomega^{(k)})).
	\end{equation} 
	The HSDM can solve Problem  \eqref{eq:gne_opt} when $\mc T$ is quasi-nonexpansive and quasi-shrinking  with bounded $\fix(\mc T)$, as formally stated next.
	\begin{assumption}
		\label{as:beta}
		The step size of the HSDM $\beta^{(k)}$ satisfies:
		\begin{enumerate}[(i)]
			\item \label{as:beta_nonsummable} $\lim_{k\to \infty} \beta^{(k)} = 0$, $\sum_{k\geq 1} \beta^{(k)} =  \infty$;
			\item \label{as:beta_squaresummable} $\sum_{k\geq 1} (\beta^{(k)})^2 < \infty$. \eod 
		\end{enumerate}
		\end{assumption}
		\begin{remark}
			{The sequence} $\beta^{(k)}= {\beta_0}/{k^p}$, for any $\beta_0 >0$ and $p \in ({1/2}, 1]$, satisfies Assumption \ref{as:beta}. \eod 
		\end{remark}
		\begin{assumption}
		\label{as:quasi_nonexp}
		    $\mc T$ is quasi-nonexpansive. \eod
		\end{assumption}
		\begin{assumption}
		\label{as:quasi_shrinking}
		   There exists a nonempty bounded closed convex set $C$ on which $\mc T$ is quasi-shrinking. \eod
		\end{assumption}

		\begin{lemma}[{From} {\cite[Thm. 5]{yamada05}}]
		\label{th:hsdm_cvx}
		%Let $\mc F$ be a {monotone} and Lipschitz continuous operator and 
		Let Assumption \ref{as:phi} hold and {$\Omega^\star$} be the set of solutions of  {the $\operatorname{VI}$ in \eqref{eq:gne_opt},} with non-empty and bounded $\fix(\mc T)$. Suppose that $\mc T$ satisfies Assumptions \ref{as:quasi_nonexp} {and} \ref{as:quasi_shrinking} and that  $(\bm{\omega}^{(k)})_{k \geq 0} \subset C$. If the step size $\beta^{(k)}$ satisfies Assumption \ref{as:beta}.\ref{as:beta_nonsummable}, then the HSDM in \eqref{eq:hsdm} generates {a} sequence $(\bm \omega^{(k)})_{k \in \bb N}$ such that $$ \lim_{k\to \infty} \dist(\bm \omega^{(k)},\Omega^\star) = 0. $$ %\eod
		\end{lemma}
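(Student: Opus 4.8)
The statement is a specialization of the HSDM convergence result \cite[Thm. 5]{yamada05}, so the plan is to verify that Assumptions \ref{as:phi}, \ref{as:quasi_nonexp}, \ref{as:quasi_shrinking} and \ref{as:beta}.\ref{as:beta_nonsummable} supply exactly the premises of that theorem and then invoke it; to locate where the real work sits, I first outline the mechanism. As preliminaries, since $(\bomega^{(k)})_{k\geq0}\subset C$ with $C$ bounded and $\mc T$ is quasi-nonexpansive (Assumption \ref{as:quasi_nonexp}), the points $\mc T(\bomega^{(k)})$ stay in a bounded set (their distance to any fixed $z\in\fix(\mc T)$ does not exceed that of $\bomega^{(k)}$); as $\nabla\phi$ is $L_\phi$-Lipschitz it is bounded there, so $M:=\sup_k\|\nabla\phi(\mc T(\bomega^{(k)}))\|<\infty$. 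Moreover $\fix(\mc T)$ is bounded by hypothesis and closed, hence compact, so $\phi^\star:=\min_{\bomega\in\fix(\mc T)}\phi(\bomega)$ is attained and $\Omega^\star=\argmin_{\bomega\in\fix(\mc T)}\phi(\bomega)\neq\varnothing$, with $\Omega^\star\subseteq\fix(\mc T)$.

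The first key step is to show $d_k:=\dist(\bomega^{(k)},\fix(\mc T))\to0$, which is where quasi-shrinking enters. By the definition \eqref{eq:D} of the shrinkage function (with $\Psi=I$, $r=d_k$, and $\bomega^{(k)}\in C$) one gets $\dist(\mc T(\bomega^{(k)}),\fix(\mc T))\le d_k-D(d_k)$, while the gradient correction in \eqref{eq:hsdm} displaces the iterate by at most $\beta^{(k)}M$; since $\dist(\cdot,\fix(\mc T))$ is $1$-Lipschitz this yields the scalar recursion
\begin{equation*}
d_{k+1}\le d_k-D(d_k)+\beta^{(k)}M .
\end{equation*}
Because $D$ is nonnegative, non-decreasing, with $D(r)=0\Leftrightarrow r=0$ (Definition \ref{def:q_shrinking1}), and $\beta^{(k)}M\to0$, a standard descent-with-vanishing-perturbation argument gives $d_k\to0$: if $d_k\geq\varepsilon>0$ on a tail then $D(d_k)\geq D(\varepsilon)>0$ and, for $k$ large, $d_{k+1}\le d_k-D(\varepsilon)/2$, eventually forcing $d_k<0$, a contradiction; hence $\liminf_k d_k=0$, and the vanishing upward increments $d_{k+1}-d_k\le\beta^{(k)}M$ upgrade this to $\lim_k d_k=0$.

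Next I would establish subsequential optimality. Expanding $\|\bomega^{(k+1)}-\omegaopt\|^2$ from \eqref{eq:hsdm} for a fixed $\omegaopt\in\Omega^\star$ and using quasi-nonexpansiveness of $\mc T$ at $\omegaopt\in\fix(\mc T)$ gives
\begin{equation*}
\|\bomega^{(k+1)}-\omegaopt\|^2\le\|\bomega^{(k)}-\omegaopt\|^2-2\beta^{(k)}\big(\phi(\mc T(\bomega^{(k)}))-\phi^\star\big)+(\beta^{(k)})^2M^2,
\end{equation*}
where the middle term follows from convexity of $\phi$ (Assumption \ref{as:phi}). Since $\dist(\mc T(\bomega^{(k)}),\fix(\mc T))\le d_k\to0$ and $\phi$ is $M$-Lipschitz on the bounded region, $\phi(\mc T(\bomega^{(k)}))-\phi^\star\ge -M d_k\to0$. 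Were instead $\phi(\mc T(\bomega^{(k)}))-\phi^\star\ge\eta>0$ on a tail, then for large $k$ the display would read $\|\bomega^{(k+1)}-\omegaopt\|^2\le\|\bomega^{(k)}-\omegaopt\|^2-\eta\beta^{(k)}$, and summation with $\sum_k\beta^{(k)}=\infty$ (Assumption \ref{as:beta}.\ref{as:beta_nonsummable}) would drive the left side to $-\infty$; hence $\liminf_k\phi(\mc T(\bomega^{(k)}))=\phi^\star$. Passing to a subsequence realizing this $\liminf$ along which, by compactness, $\mc T(\bomega^{(k)})$ converges to some $\bar\omega$, the facts $d_k\to0$ and closedness of $\fix(\mc T)$ give $\bar\omega\in\fix(\mc T)$ with $\phi(\bar\omega)=\phi^\star$, i.e. $\bar\omega\in\Omega^\star$; since $\|\bomega^{(k+1)}-\mc T(\bomega^{(k)})\|\le\beta^{(k)}M\to0$, this shows $\liminf_k\dist(\bomega^{(k)},\Omega^\star)=0$.

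The main obstacle is upgrading this $\liminf$ to the full limit asserted in the statement. The inequality above is only quasi-Fej\'er with respect to $\omegaopt$ up to the error $2\beta^{(k)}Md_k$, which need not be summable under Assumption \ref{as:beta}.\ref{as:beta_nonsummable} alone, so the usual quasi-Fej\'er shortcut (which would need square-summability) is unavailable. This is precisely the delicate point handled by \cite[Thm. 5]{yamada05}: it exploits the monotonicity of $\nabla\phi$ together with the fact that, as the gradient of a convex function, $\nabla\phi$ is \emph{paramonotone}, so that every cluster point of $(\bomega^{(k)})$ belonging to $\fix(\mc T)$ solves the $\operatorname{VI}$ in \eqref{eq:gne_opt}, and combines this with quasi-shrinking and the step-size conditions to force convergence of the whole sequence to $\Omega^\star$. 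I would therefore conclude by checking that the preliminaries above (boundedness of $(\bomega^{(k)})$ and of $\nabla\phi(\mc T(\bomega^{(k)}))$, quasi-nonexpansiveness, quasi-shrinking on $C$, boundedness of $\fix(\mc T)$, and $\beta^{(k)}\to0$ with $\sum_k\beta^{(k)}=\infty$) are exactly the hypotheses of \cite[Thm. 5]{yamada05}, which then yields $\lim_{k\to\infty}\dist(\bomega^{(k)},\Omega^\star)=0$.
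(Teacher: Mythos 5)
Your proposal is correct and takes essentially the same route as the paper: the paper states this lemma as a direct import of \cite[Thm. 5]{yamada05} with no independent proof, which is exactly what you do when you verify that Assumptions \ref{as:phi}, \ref{as:quasi_nonexp}, \ref{as:quasi_shrinking} and \ref{as:beta}.\ref{as:beta_nonsummable} (together with boundedness of $\fix(\mc T)$ and $(\bomega^{(k)})_{k\geq 0}\subset C$) instantiate that theorem's hypotheses and then invoke it. Your expository sketch of the internal mechanism (quasi-shrinking forcing $\dist(\bomega^{(k)},\fix(\mc T))\to 0$, the convexity-based descent estimate, and the correctly identified delicacy of upgrading the $\liminf$ to a full limit without square-summable steps) is sound, but it is supplementary to the citation that constitutes the paper's actual argument.
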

%	\emiliosay{should we maybe show the equivalence between the VI and the GNE selection problem earlier in the section?}
	
	Therefore,   our main {technical} task {is to find} a suitable operator $\mc T$ {that can be evaluated} in a distributed manner and {that} satisfies both \eqref{eq:equivalence_zer_fix} and Assumptions \ref{as:quasi_nonexp}--\ref{as:quasi_shrinking}, required for the convergence of the HSDM sequence. % In this regard, we turn our attention to operator splitting methods, which have been exploited to design distributed GNE seeking algorithms, e.g. in \cite{belgioioso17, belgioioso19, yi19, franci20, belgioioso20,ananduta21}.

	%We now characterize several classes of generalized games, which have been studied in the GNEP literature and suitable distributed fixed-point methods that can be applied to the HSDM for their GNE selection problems \eqref{eq:gne_opt}. 

%	\label{sec:mon_gam_select}
	%The broadest class of generalized games that we consider in this work is that of monotone games. Interestingly, this class of games include that of mixed-strategy mixed-integer games discussed in \cite{ananduta21}. 
	
	{ {Under mere} monotonicity of the pseudogradient mapping (Assumption \ref{as:pseudograd}), perhaps the most obvious choice is the FRB splitting, which, however, is not quasi-nonexpansive\footnote{{The FRB {iteration} does not generate a Fej\'{e}r monotone sequence \cite[Prop. 2.3]{malitsky20}, implying that it is not quasi-nonxepansive {and violates Assumption \ref{as:quasi_nonexp}.}}} (and, thus, {it} is not quasi-shrinking).}  
{Another viable option is the FBF splitting method \cite{tseng00}, which works for v-GNE seeking {in} monotone games satisfying Assumptions \ref{as:gen_game}--\ref{as:pseudograd}, as shown in  \cite{belgioioso17,franci20}.} {As our first {technical} result, we show that the FBF algorithm satisfies both the desired property in (\ref{eq:equivalence_zer_fix}) and  Assumptions \ref{as:quasi_nonexp}--\ref{as:quasi_shrinking}. {To that end,}	firstly, we {compactly} state the FBF operator for \eqref{eq:mon_incl}, as follows:
	\begin{align}
		\mc T_{\mathrm{FBF}}( \bm \omega)
		& := ((\Id- \Psi^{-1}(\mc B + \mc C) )(\Id +{\Psi^{-1}\mc A})^{-1} \notag \\
		&\quad \cdot (\Id- \Psi^{-1}(\mc B + \mc C)) + \Psi^{-1}(\mc B + \mc C))(\bm \omega),
		\label{eq:T_FBF}
	\end{align}
	where $\Psi \succ 0$ is a diagonal positive definite matrix. The FBF requires the forward operator, which is $(\mc B + \mc C)$, to be Lipschitz continuous. A sufficient condition for this requirement is given in Assumption \ref{as:Lips} (see Lemma \ref{le:lips_BC} in Appendix \ref{ap:prop_ABC}). Under maximal monotonicity and Lipschitz continuity, it holds that $\zer(\mc A + \mc B + \mc C) = \fix(\mc T_{\mathrm{FBF}})$ (see Lemma \ref{le:fix_FBF} in Appendix \ref{ap:le:fix_FBF}). In addition, by denoting $L_B$ as the Lipschitz constant of $\mc B+\mc C$, we define the step-size  matrix $\Psi$ in Assumption \ref{as:stepsizeFBF}, which guarantees the convergence of the sequence generated by the fixed-point iteration with $\mc T_{\mathrm{FBF}}$ toward a point in $\zer(\mc A + \mc B + \mc C)$.
	
	%\comment{Define single Lipschitz constant?}\wicaksay{Yes, but only for $g_i$. }
	\begin{assumption}%[Lipschitz continuity of coupling function]
		\label{as:Lips}
		The mapping {$F(\bm x)$} in \eqref{eq:pseudograd}
		is $L_F$-Lipschitz continuous. Furthermore, for each $i \in \mc I$, the  function  {$g_i$ in \eqref{eq:coup_const}} has a  bounded and {$L_{ g}$-Lipschitz} continuous gradient. %\eod 
	\end{assumption}
	\begin{assumption}%[Step sizes of Algorithm \ref{alg:fbf}]
		\label{as:stepsizeFBF}
		{It holds that $ |\Psi^{-1}| \leq 1/L_B$,} where $L_B>0$ {is} the Lipschitz constant of $\mc B + \mc C$ and $ \Psi = \diag(\rho^{-1}, \sigma^{-1}, \tau^{-1})$, where $\rho = \diag(\{\rho_i I_{n_i}\}_{i \in \mc I})$, $\tau = \diag(\{\tau_i I_{m}\}_{i \in \mc I})$, and $\sigma = \diag(\{\sigma_i I_{m}\}_{i \in \mc I})$.  %Furthermore,  $\beta^{(k)} \in [0,\infty)$, for all $k\geq 0$,  $\lim_{k\to \infty} \beta^{(k)} = 0$, $\sum_{k\geq 1} \beta^{(k)} = + \infty$, {and $\sum_{k\geq 1} (\beta^{(k)})^2 < + \infty$.}  
		%\eod 
	\end{assumption}
	
	We are now ready to present the distributed FBF for seeking an optimal variational GNE based on the selection function $\phi(\bm \omega)$ via the HSDM as shown in Algorithm \ref{alg:fbf}. 
	%Note that the FBF requires the {forward}  operator, which is $\mc B+\mc C$, to be Lipschitz continuous, which is satisfied by Assumption \ref{as:Lips} (see Lemma \ref{le:lips_BC} in Appendix \ref{ap:prop_ABC}). 

	%The FBF requires the forward operator, which is $(\mc B + \mc C)$, to be Lipschitz continuous. A sufficient condition for this requirement is given in Assumption \ref{as:Lips} (see Lemma \ref{le:lips_BC} in Appendix \ref{ap:prop_ABC}). Furthermore, the step-size matrix $\Psi$, which guarantees convergence toward a fixed point of $\mc T_{\mathrm{FBF}}$ is given in Assumption \ref{as:stepsizeFBF}.

	%As shown in Lemma \ref{th:hsdm_cvx}, when a fixed-point operator in the HSDM \eqref{eq:hsdm}, if it is quasi-nonexpansive and quasi-shrinking. Indeed, we observe that the FBF operator $\mc T_{\mathrm{FBF}}$ satisfies these properties, as we formally stated next.

	%Furthermore,  the FBF can also be used to solve the variational inequality obtained in step 1 of Algorithm \ref{alg:non-var_GNE}. 

	{To have a convergence guarantee as stated in Lemma \ref{th:hsdm_cvx}, the FBF operator must satisfy Assumptions \ref{as:quasi_nonexp} and \ref{as:quasi_shrinking}.  {Let us} show that this is the case in the following lemma.
%\iffalse 

	\begin{lemma}
		\label{le:FBF_q_nonexp}
		Let Assumptions \ref{as:gen_game}, \ref{as:pseudograd}, and \ref{as:Lips} hold. The operator  $\mc T_{\mathrm{FBF}}$ in \eqref{eq:T_FBF}, where $\mc A$, $\mc B$, and $\mc C$ are defined in \eqref{eq:op_A}--\eqref{eq:op_C} and $\Psi$ is defined in Assumption \ref{as:stepsizeFBF}, is quasi-nonexpansive and quasi-shrinking {on any compact convex set $C$ such that $C \cap \fix(\mc T_{\mathrm{FBF}}) \neq \varnothing$  (see} Definition \ref{def:q_shrinking1}). %\eod
	\end{lemma}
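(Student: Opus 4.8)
The plan is to verify Assumptions \ref{as:quasi_nonexp} and \ref{as:quasi_shrinking} for $\mc T_{\mathrm{FBF}}$ separately, using the fixed-point identity $\fix(\mc T_{\mathrm{FBF}}) = \zer(\mc A + \mc B + \mc C)$ from Lemma \ref{le:fix_FBF} and the general quasi-shrinking criterion of Lemma \ref{le:demi_closedness_quasi_shrinking}. The organizing device is the (preconditioned) forward--backward operator
\[
\mc T_2 := (\Id + \Psi^{-1}\mc A)^{-1}(\Id - \Psi^{-1}(\mc B + \mc C)),
\]
so that, writing $\bar{\bm\omega} := \mc T_2(\bm\omega)$, the FBF map in \eqref{eq:T_FBF} rewrites as $\mc T_{\mathrm{FBF}}(\bm\omega) = \bar{\bm\omega} + \Psi^{-1}((\mc B+\mc C)(\bm\omega) - (\mc B+\mc C)(\bar{\bm\omega}))$. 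This $\mc T_2$ will serve as the auxiliary operator required by Lemma \ref{le:demi_closedness_quasi_shrinking}.

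For quasi-nonexpansiveness, fix $\omegaopt \in \fix(\mc T_{\mathrm{FBF}}) = \zer(\mc A+\mc B+\mc C)$. I would reproduce Tseng's estimate in the $\Psi$-weighted inner product: since $\omegaopt = \mc T_2(\omegaopt)$, firm nonexpansiveness of the resolvent $(\Id + \Psi^{-1}\mc A)^{-1}$ in $\|\cdot\|_\Psi$, monotonicity of $\mc A$ and of $\mc B+\mc C$, and the $L_B$-Lipschitz continuity of $\mc B+\mc C$ (Assumption \ref{as:Lips} via Lemma \ref{le:lips_BC}) combine, after expanding $\|\mc T_{\mathrm{FBF}}(\bm\omega) - \omegaopt\|_\Psi^2$, into the descent inequality
\[
\|\mc T_{\mathrm{FBF}}(\bm\omega) - \omegaopt\|_\Psi^2 \leq \|\bm\omega - \omegaopt\|_\Psi^2 - \gamma\,\|\bm\omega - \mc T_2(\bm\omega)\|_\Psi^2,
\]
with a constant of the form $\gamma = 1 - |\Psi^{-1}|^2 L_B^2$. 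The step-size condition $|\Psi^{-1}| \leq 1/L_B$ in Assumption \ref{as:stepsizeFBF} keeps $\gamma \geq 0$ (and strictly positive when the bound is strict), and dropping the last term yields quasi-nonexpansiveness of $\mc T_{\mathrm{FBF}}$ in $\|\cdot\|_\Psi$, which is Assumption \ref{as:quasi_nonexp}.

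For quasi-shrinking, I apply Lemma \ref{le:demi_closedness_quasi_shrinking} with the pair $(\mc T_{\mathrm{FBF}}, \mc T_2)$. Three facts are needed. First, $\fix(\mc T_2) = \zer(\mc A+\mc B+\mc C) = \fix(\mc T_{\mathrm{FBF}})$: indeed $\mc T_2(\bm\omega) = \bm\omega$ is equivalent, after applying $\Psi$ and rearranging, to $\0 \in (\mc A+\mc B+\mc C)(\bm\omega)$; in particular $\fix(\mc T_2) \subseteq \fix(\mc T_{\mathrm{FBF}})$. Second, $\Id - \mc T_2$ is demiclosed at $0$: in $\bb R^{n_\omega}$ the resolvent is single-valued and nonexpansive, hence continuous, and $\mc B+\mc C$ is Lipschitz continuous, so $\mc T_2$ and $\Id - \mc T_2$ are continuous, and in finite dimension continuity implies demiclosedness. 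Third, the inequality \eqref{eq:condition_quasi_shrinking} is precisely the descent inequality already derived, with the same $\gamma > 0$ and $\Psi$. Lemma \ref{le:demi_closedness_quasi_shrinking} then delivers quasi-shrinking of $\mc T_{\mathrm{FBF}}$ on any compact convex $C$ meeting $\fix(\mc T_{\mathrm{FBF}})$, which is Assumption \ref{as:quasi_shrinking}.

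The main obstacle is the descent inequality with a provably positive constant: the preconditioning by $\Psi$ must be carried through Tseng's argument so that $L_B$ and the bound $|\Psi^{-1}| \leq 1/L_B$ combine to keep $\gamma$ nonnegative, and some care is needed to guarantee strict positivity (e.g.\ via a strict step-size inequality), since Lemma \ref{le:demi_closedness_quasi_shrinking} requires $\gamma > 0$. The remaining ingredients — the fixed-point characterization of $\mc T_2$ and the continuity/demiclosedness of $\Id - \mc T_2$ — are routine in the finite-dimensional setting.
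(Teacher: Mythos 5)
Your proposal is correct and follows essentially the same route as the paper's proof: the paper likewise takes $\mc T_2=(\Id+\Psi^{-1}\mc A)^{-1}(\Id-\Psi^{-1}(\mc B+\mc C))$, establishes quasi-nonexpansiveness through the descent inequality \eqref{eq:q_nonexp} (citing \cite[Cor.~1, Prop.~2]{franci20} instead of re-deriving Tseng's estimate), uses $\fix(\mc T_2)=\zer(\mc A+\mc B+\mc C)=\fix(\mc T_{\mathrm{FBF}})$, and concludes quasi-shrinking by Lemma \ref{le:demi_closedness_quasi_shrinking}. Two sub-steps differ, though neither changes the architecture. First, for demiclosedness of $\Id-\mc T_2$ at $0$ the paper proves a dedicated result (Lemma \ref{lem:demiclosed_FBF}) via closedness of $\gph(\mc A)$ for the maximally monotone $\mc A$, whereas you argue through plain continuity of the resolvent composed with the Lipschitz forward step; in $\R^{n_\omega}$, where all convergence is strong, your argument is equally valid and slightly more elementary (the graph-closedness route is what survives in infinite dimension, where continuity alone does not give demiclosedness). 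Second, the boundary issue you flag is genuine but is sidestepped in the paper: the inequality it imports from \cite[Prop.~2]{franci20} carries the coefficient $(L_B/\mu_{\min}(\Psi))^2$, which is strictly positive under Assumption \ref{as:stepsizeFBF} even at equality $|\Psi^{-1}|=1/L_B$, so the paper can invoke Lemma \ref{le:demi_closedness_quasi_shrinking} without strictness; your Tseng-style constant $\gamma=1-|\Psi^{-1}|^2L_B^2$ vanishes exactly at that boundary, so under your self-contained derivation one would indeed need $|\Psi^{-1}|<1/L_B$ (or the sharper constant of \cite{franci20}) to secure $\gamma>0$. You identified precisely the right fix, so this is an honest edge case rather than a gap in the argument.
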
 
	\begin{proof}
		See Appendix \ref{pf:le:FBF_q_nonexp}.
	\end{proof}
	Furthermore, we observe that the HSDM sequence generated by using $\mc T_{\mathrm{FBF}}$ is bounded, as formally stated next.

	\begin{lemma}
		\label{le:bounded_HSDM} 
		Let Assumptions {\ref{as:gen_game}--\ref{as:beta} and \ref{as:Lips}--\ref{as:stepsizeFBF}} hold. Then, the sequence $(\bm \omega^{(k)})_{k \in \bb N}$ generated by the HSDM method in \eqref{eq:hsdm} with  $\mc T =\mc T_{\mathrm{FBF}}$ in \eqref{eq:T_FBF}, where $\mc A$, $\mc B$, and $\mc C$ are defined in \eqref{eq:op_A}--\eqref{eq:op_C} and $\Psi$ is defined in Assumption \ref{as:stepsizeFBF}, is bounded, i.e., for any arbitrary $\bm \omega^\star \in \fix(\mc T_{\mathrm{FBF}})$, it holds that
		%\begin{equation*}
		$	\|\bm \omega^{(k)} - \bm \omega^\star \| \leq R(\bm \omega^\star),$ 
		%\end{equation*}
		for some positive finite $R(\bm \omega^\star)$.
		%\eod
	\end{lemma}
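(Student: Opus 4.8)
The plan is to reduce the claim to the boundedness of the distance of the iterates to the (compact) fixed-point set, and then exploit the quasi-shrinking property to confine the orbit. Since $\fix(\mc T_{\mathrm{FBF}})=\zer(\mc A+\mc B+\mc C)$ is nonempty, closed, convex and \emph{bounded} (as recalled above), fix $\bm\omega^\star\in\fix(\mc T_{\mathrm{FBF}})$ and set $d_k:=\dist(\bm\omega^{(k)},\fix(\mc T_{\mathrm{FBF}}))$. Because $\|\bm\omega^{(k)}-\bm\omega^\star\|\le d_k+\sup_{p\in\fix(\mc T_{\mathrm{FBF}})}\|p-\bm\omega^\star\|$ and the last supremum is finite, it suffices to prove that $(d_k)_k$ is bounded; the constant $R(\bm\omega^\star)$ is then obtained by adding this finite supremum.

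Next I would derive a one-step recursion for $d_k$ by splitting \eqref{eq:hsdm} into the operator step $\bm u^{(k)}:=\mc T_{\mathrm{FBF}}(\bm\omega^{(k)})$ and the gradient step $\bm\omega^{(k+1)}=\bm u^{(k)}-\beta^{(k)}\nabla\phi(\bm u^{(k)})$. For the operator step, the quasi-shrinking of $\mc T_{\mathrm{FBF}}$ (Lemma \ref{le:FBF_q_nonexp}, Definition \ref{def:q_shrinking1}) together with Proposition \ref{prop:D} yields $\dist(\bm u^{(k)},\fix(\mc T_{\mathrm{FBF}}))\le d_k-D(d_k)$, with $D$ the shrinkage function \eqref{eq:D}, which is nonnegative, non-decreasing and strictly positive off the origin. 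For the gradient step the crucial point is to bound the perturbation \emph{uniformly in $k$}: writing $p_k:=\proj_{\fix(\mc T_{\mathrm{FBF}})}(\bm u^{(k)})$, using that $\Id-\beta^{(k)}\nabla\phi$ is nonexpansive for $\beta^{(k)}\le 2/L_\phi$ (Baillon--Haddad, since $\phi$ is convex with $L_\phi$-Lipschitz gradient by Assumption \ref{as:phi}, \cite{bauschke11}), and comparing $\bm\omega^{(k+1)}$ with $p_k-\beta^{(k)}\nabla\phi(p_k)$, I obtain $\dist(\bm\omega^{(k+1)},\fix(\mc T_{\mathrm{FBF}}))\le\dist(\bm u^{(k)},\fix(\mc T_{\mathrm{FBF}}))+\beta^{(k)}\|\nabla\phi(p_k)\|$. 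Since $p_k$ lies in the \emph{compact} set $\fix(\mc T_{\mathrm{FBF}})$ and $\nabla\phi$ is continuous, $\|\nabla\phi(p_k)\|\le G:=\sup_{p\in\fix(\mc T_{\mathrm{FBF}})}\|\nabla\phi(p)\|<\infty$. Combining the two steps gives
\[
d_{k+1}\le d_k-D(d_k)+\beta^{(k)}G .
\]

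Boundedness then follows from a two-regime ``drift'' argument exploiting Assumption \ref{as:beta}.\ref{as:beta_nonsummable} (only $\beta^{(k)}\to 0$ is used here). Fix any threshold $r^\star>0$ and set $\delta:=D(r^\star)>0$. Choose $K$ so large that $\beta^{(k)}\le 2/L_\phi$ and $\beta^{(k)}G<\delta$ for all $k\ge K$. For $k\ge K$: if $d_k\ge r^\star$ then $D(d_k)\ge\delta$ by monotonicity, so $d_{k+1}<d_k$; whereas if $d_k<r^\star$ then $d_{k+1}<r^\star+\delta$. Hence the sequence can never exceed $\max\{\max_{j\le K}d_j,\;r^\star+\delta\}$, so $(d_k)_k$ --- and therefore $(\bm\omega^{(k)})_k$ --- is bounded.

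The main obstacle is exactly the uniform control of the gradient-step perturbation. The naive triangle-inequality/Lipschitz bound produces a factor $(1+\beta^{(k)}L_\phi)$ multiplying $d_k$; since $\sum_k\beta^{(k)}=\infty$, such a bound permits geometric blow-up and is useless for boundedness. Circumventing this requires the \emph{full nonexpansiveness} of $\Id-\beta^{(k)}\nabla\phi$ and the evaluation of $\nabla\phi$ on the \emph{bounded} set $\fix(\mc T_{\mathrm{FBF}})$, which is what makes the perturbation $\beta^{(k)}G$ uniform. Two further technical points must be handled with care: (i) $\mc T_{\mathrm{FBF}}$ is quasi-nonexpansive and quasi-shrinking in the $\Psi$-weighted norm of Assumption \ref{as:stepsizeFBF}, whereas the above nonexpansiveness of the gradient step is Euclidean, so the two estimates must be reconciled (via norm equivalence or a $\Psi$-weighted Baillon--Haddad estimate, together with Remark \ref{rem:q_shrinking}); and (ii) the quasi-shrinking of Lemma \ref{le:FBF_q_nonexp} holds on a compact convex set $C$, so one must first fix a sufficiently large ball $C\supseteq\fix(\mc T_{\mathrm{FBF}})$ and verify self-consistently that the orbit never escapes it, which is guaranteed by the same drift estimate once the radius of $C$ exceeds $r^\star+\delta$ plus the initial excursion.
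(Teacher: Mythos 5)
Your reduction and the per-step estimates are individually sound --- the Baillon--Haddad nonexpansiveness of $\Id-\beta^{(k)}\nabla\phi$ for $\beta^{(k)}\le 2/L_{\phi}$, and evaluating $\nabla\phi$ at the projection $p_k$ onto the bounded set $\fix(\mc T_{\mathrm{FBF}})$ to get a uniform perturbation $\beta^{(k)}G$, are nice observations --- but the core of your drift argument has a gap that your closing point (ii) names without resolving. The inequality $d_{k+1}\le d_k-D(d_k)+\beta^{(k)}G$ needs a shrinkage function that is strictly positive off the origin \emph{independently of where the orbit is}, whereas Lemma \ref{le:FBF_q_nonexp} only gives quasi-shrinking on a compact convex set $C$: the function in \eqref{eq:D} is an infimum over $(\fix(\mc T))_{\geq r}^{\Psi}\cap C$, so $\delta:=D_C(r^\star)$ is non-increasing as $C$ grows (and, from the estimate in the proof of Lemma \ref{le:demi_closedness_quasi_shrinking}, may decay like the reciprocal of the diameter of $C$). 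Your self-consistency loop therefore does not close: the switching index $K$ at which $\beta^{(k)}G<\delta$ depends on $\delta$, hence on $C$; before $K$ the only available control is $d_k\le d_0+G\sum_{j<K}\beta^{(j)}$, which diverges with $K$ because $\sum_k\beta^{(k)}=\infty$; and $C$ must contain that excursion --- so a larger $C$ forces a smaller $\delta$, a larger $K$, a larger excursion, and again a larger $C$. Closing this loop would require a quantitative lower bound on $D_C$ against the size of $C$ combined with the decay rate of $\beta^{(k)}$, which contradicts your claim that ``only $\beta^{(k)}\to 0$ is used'': some summability-type control is genuinely needed, and indeed the lemma assumes all of Assumption \ref{as:beta}, including \ref{as:beta}.\ref{as:beta_squaresummable}.

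This is precisely where the paper takes a different, compactness-free route: it does not use quasi-shrinking for boundedness at all. From \eqref{eq:q_nonexp} it first deduces the strict attracting-type inequality \eqref{eq:attr_q_nonexp} for every $\bm\omega\notin\fix(\mc T_{\mathrm{FBF}})$ (showing $\tilde{\bm\omega}=\bm\omega$ would force $0\in(\mc A+\mc B+\mc C)(\tilde{\bm\omega})$); then, using only this and the boundedness of $\fix(\mc T_{\mathrm{FBF}})$, it obtains --- following \cite[Lem.~1]{ogura03} --- a \emph{uniform} attraction estimate on the unbounded exterior region, $\inf_{\|\bm\omega-\bm\omega^\star\|\geq R}\left(\|\bm\omega-\bm\omega^\star\|-\|\mc T_{\mathrm{FBF}}(\bm\omega)-\mc T_{\mathrm{FBF}}(\bm\omega^\star)\|\right)>0$, which is exactly the $C$-independent substitute for your $D(r^\star)>0$ that your argument lacks; finally it invokes the HSDM boundedness theorem \cite[Thm.~2]{ogura03}, whose proof uses the monotonicity and Lipschitz continuity of $\nabla\phi$ together with the square-summable step sizes to tame the linearly growing perturbation $\|\nabla\phi(\mc T_{\mathrm{FBF}}(\bm\omega^{(k)}))\|\le \|\nabla\phi(\bm\omega^\star)\|+L_{\phi}\|\mc T_{\mathrm{FBF}}(\bm\omega^{(k)})-\bm\omega^\star\|$ far from the fixed-point set. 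A secondary caveat: your fallback of ``norm equivalence'' to reconcile the $\Psi$-norm quasi-nonexpansiveness with the Euclidean gradient-step estimate cannot work inside a drift argument whose decrement $D(d_k)$ may be arbitrarily small, since norm equivalence puts a multiplicative constant $\kappa>1$ on $d_k$ that no vanishing decrement absorbs (this particular point is repairable --- Definition \ref{def:q_shrinking1} and Lemma \ref{le:FBF_q_nonexp} already concern the Euclidean shrinkage function, and \eqref{eq:D} does not require nonexpansiveness in the same norm --- but the circularity in the choice of $C$ remains the fatal gap).
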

		\begin{proof}
			See Appendix \ref{pf:le:bounded_HSDM}.
		\end{proof}
		
{{\begin{algorithm}[t]
	\caption{Optimal  v-GNE selection via FBF and HSDM}
	\label{alg:fbf}
	\textbf{Initialization.} Set $x_i^{(0)} \in \mc X_i$, $\lambda_i^{(0)} \in \bb R_{\geq 0}^m$, and $\nu_i^{(0)} \in \bb R^m$, for all $i \in \mc I$.
	
	\textbf{Iteration of each agent $i \in \mc I$.}
	\begin{enumerate}
		\item Receives $x_j^{(k)}$ from agent {$j \in \mc N_i^J$} and $\lambda_j^{(k)}, \nu_j^{(k)}$ from agent $j \in \mc N_i^\lambda$. 
		\item Updates:
		{\small 	\begin{align*}
			\tilde x_i^{(k)} \hspace{-2pt} &=\hspace{-1pt}  {\prox_{\ell_i + \iota_{\mc X_i}}^{\rho_i}}\hspace{-5pt}\left(x_i^{(k)} \hspace{-3pt}- \hspace{-2pt} \rho_i \big(\nabla_{x_i}{f}_i(\bm x^{(k)}) \hspace{-2pt} + \hspace{-2pt} \nabla g_i(x_i^{(k)})^\top \hspace{-1pt} \lambda_i^{(k)}\big)\hspace{-1pt}\right)\hspace{-2pt}, \\
			\tilde \lambda_i^{(k)} &= \proj_{\geq 0}\Big(\Big.\lambda_i^{(k)} + \tau_i\Big(\Big. g_i(x_i^{(k)}) \\ & \qquad \quad \qquad\Big.\Big. +\sum_{j \in \mc N_i^\lambda} \left(\nu_i^{(k)}- \nu_j^{(k)} - \lambda_i^{(k)} + \lambda_j^{(k)} \right) \Big) \Big), \\
			\tilde \nu_i^{(k)} &= \nu_i^{(k)} - \sigma_i\textstyle\sum_{j \in \mc N_i^\lambda} \left(\lambda_i^{(k)}- \lambda_j^{(k)} \right).
		\end{align*}
		}
		\item Receives $\tilde x_j^{(k)}$ from agent $j \in \mc N_i^J$ and $\tilde \lambda_j^{(k)}, \tilde \nu_j^{(k)}$ from agent $j \in \mc N_i^\lambda$.
		\item Updates: %\emiliosay{Should $J$ become $f$?}
		\begin{align}
			\accentset{\circ}	 x_i^{(k)} &= \tilde x_j^{(k)} - \rho_i \Big(\Big. \nabla_{x_i}f_i( \tilde{\bm x}^{(k)}) -\nabla_{x_i}f_i(\bm x^{(k)}) \notag \\
			& \quad +\nabla g_i(\tilde x_i^{(k)})^\top \tilde \lambda_i^{(k)}- \nabla g_i(x_i^{(k)})^\top \lambda_i^{(k)} \Big.\Big), \notag \\	
				\accentset{\circ} \lambda_i^{(k)} &= \tilde \lambda_i^{(k)} + \tau_i \Big(\Big. g_i(\tilde{x}_i^{(k)}) - g_i(x_i^{(k)}) \notag  \\
			& \quad  +\textstyle\sum_{j \in \mc N_i^\lambda}\Big(\Big. \tilde \nu_i^{(k)} - \nu_i^{(k)} - \tilde \nu_j^{(k)} + \nu_j^{(k)} \Big.\Big) \notag\\
			& \quad - \textstyle\sum_{j \in \mc N_i^\lambda}\Big(\Big. \tilde \lambda_i^{(k)} - \lambda_i^{(k)} - \tilde \lambda_j^{(k)} + \lambda_j^{(k)} \Big.\Big) \Big.\Big), \notag\\
			\accentset{\circ} \nu_i^{(k)} &= \tilde \nu_i^{(k)} -  \sigma_i  \textstyle\sum_{j \in \mc N_i^\lambda} \left( \tilde \lambda_i^{(k)} - \lambda_i^{(k)} - \tilde \lambda_j^{(k)} + \lambda_j^{(k)} \right).  \notag
		\end{align}
		\item Sends $(\accentset{\circ} x_i^{(k)}, \accentset{\circ}\lambda_i^{(k)}, \accentset{\circ}\nu_i^{(k)})$ to a coordinator and receives back $\nabla_{\omega_i} \phi( \accentset{\circ} {\bm x}^{(k)}, \accentset{\circ}{\bm \lambda}^{(k)}, \accentset{\circ}{\bm \nu}^{(k)}) $, where $\omega_i = (x_i, \lambda_i, \nu_i)$.
		\item Updates: 
		\begin{align}
					&\hspace{-10pt}	(x_i^{(k+1)}, \lambda_i^{(k+1)}, \nu_i^{(k+1)}) \notag \\
					&\hspace{-10pt}	=( \accentset{\circ} x_i^{(k)}, \accentset{\circ}\lambda_i^{(k)}, \accentset{\circ}\nu_i^{(k)}) - \beta^{(k)} \nabla_{\omega_i} \phi( \accentset{\circ} {\bm x}^{(k)}, \accentset{\circ}{\bm \lambda}^{(k)}, \accentset{\circ}{\bm \nu}^{(k)}). \label{eq:hsdm_fbf}
		\end{align}
	\end{enumerate}
\end{algorithm} }
	
{Thus}, we can {now} show that Algorithm \ref{alg:fbf} generates a sequence that converges toward the solution set of the {optimal} GNE selection problem {in} \eqref{eq:gne_opt}.
	\begin{theorem}
		\label{prop:FBF}
		Let Assumptions {\ref{as:gen_game}--\ref{as:beta} and \ref{as:Lips}--\ref{as:stepsizeFBF}} hold. Let $\Omega^\star$ be the set of solutions {to} Problem {\eqref{eq:gne_opt}} with $\mc T = \mc T_{\mathrm{FBF}}$ defined in \eqref{eq:T_FBF}, where $\mc A$, $\mc B$, and $\mc C$ are defined in \eqref{eq:op_A}--\eqref{eq:op_C}. Furthermore, let $(\bm \omega^{(k)})_{k \in \bb N}$, where $\bm \omega^{(k)}=(\bm x^{(k)}, \bm \lambda^{(k)}, \bm \nu^{(k)})$, be the sequence generated by Algorithm \ref{alg:fbf}. Then, 
		{$ \lim_{k\to \infty} \dist(\bm \omega^{(k)},\Omega^\star) = 0,$}
		and $(\bm x^{(k)})_{k \in \bb N}$ converges to {an} optimal v-GNE of the game in \eqref{eq:gen_game}. 
		%\eod
	\end{theorem}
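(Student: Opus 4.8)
The plan is to reduce the statement to a direct application of Lemma \ref{th:hsdm_cvx} (Yamada's HSDM convergence), verifying its hypotheses one by one using the auxiliary results already established. The first step is to recognize that Algorithm \ref{alg:fbf} is exactly the HSDM iteration \eqref{eq:hsdm} instantiated with $\mc T = \mc T_{\mathrm{FBF}}$: since $\mc A$ in \eqref{eq:op_A} is separable across agents, its resolvent $(\Id + \Psi^{-1}\mc A)^{-1}$ decomposes into the per-agent proximal/projection steps of items (1)--(2), while $\mc B + \mc C$ is the forward map whose two evaluations (the backward-corrected steps in (3)--(4)) assemble exactly $\accentset{\circ}{\bm\omega}^{(k)} = \mc T_{\mathrm{FBF}}(\bm\omega^{(k)})$ in the compact form \eqref{eq:T_FBF}; the final update \eqref{eq:hsdm_fbf} is then the steepest-descent correction $\bm\omega^{(k+1)} = \accentset{\circ}{\bm\omega}^{(k)} - \beta^{(k)}\nabla\phi(\accentset{\circ}{\bm\omega}^{(k)})$, matching \eqref{eq:hsdm}.

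Next I would collect the structural prerequisites. By Assumption \ref{as:phi}, $\phi$ is convex with Lipschitz gradient, so $\nabla\phi$ is an admissible monotone operator for the $\operatorname{VI}$ in \eqref{eq:gne_opt}. By Lemma \ref{le:fix_FBF} we have $\fix(\mc T_{\mathrm{FBF}}) = \zer(\mc A + \mc B + \mc C)$, which is non-empty (existence of a v-GNE under Assumptions \ref{as:gen_game}--\ref{as:pseudograd}) and bounded, being a subset of the compact set $\bm{\mc X}$. Lemma \ref{le:FBF_q_nonexp} then supplies Assumption \ref{as:quasi_nonexp} (quasi-nonexpansiveness of $\mc T_{\mathrm{FBF}}$) and, crucially, quasi-shrinking on any compact convex $C$ meeting $\fix(\mc T_{\mathrm{FBF}})$. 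Finally, Assumption \ref{as:beta}.\ref{as:beta_nonsummable} on the step size is assumed.

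The main work -- and the step I expect to be delicate -- is producing a single compact convex set $C$ that simultaneously (a) contains the whole trajectory $(\bm\omega^{(k)})$ and (b) meets $\fix(\mc T_{\mathrm{FBF}})$, so that Lemma \ref{le:FBF_q_nonexp} certifies Assumption \ref{as:quasi_shrinking} on exactly that $C$. The danger here is circularity: the quasi-shrinking estimate is only available on a \emph{predetermined} compact set, yet Lemma \ref{th:hsdm_cvx} requires the iterates to live there. This is resolved by Lemma \ref{le:bounded_HSDM}, which bounds the iterates, $\|\bm\omega^{(k)} - \bm\omega^\star\| \le R(\bm\omega^\star)$ for a fixed $\bm\omega^\star \in \fix(\mc T_{\mathrm{FBF}})$, \emph{without} presupposing any such set (this is where Assumption \ref{as:beta}.\ref{as:beta_squaresummable} and the step-size matrix of Assumption \ref{as:stepsizeFBF} enter). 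I would then take $C$ to be the closed ball of radius $R(\bm\omega^\star)$ about $\bm\omega^\star$: it is compact, convex, contains every $\bm\omega^{(k)}$, and contains $\bm\omega^\star \in \fix(\mc T_{\mathrm{FBF}})$; quasi-nonexpansiveness moreover yields $\|\mc T_{\mathrm{FBF}}(\bm\omega^{(k)}) - \bm\omega^\star\| \le R(\bm\omega^\star)$, so the images stay in $C$ as well.

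With all hypotheses of Lemma \ref{th:hsdm_cvx} in force, I would conclude $\dist(\bm\omega^{(k)}, \Omega^\star) \to 0$. To translate this into the claim on $(\bm x^{(k)})$, I would use $\Omega^\star \subseteq \fix(\mc T_{\mathrm{FBF}}) = \zer(\mc A + \mc B + \mc C)$ together with the KKT equivalence \eqref{eq:KKT}: the $\bm x$-component of any point of $\Omega^\star$ is a v-GNE of \eqref{eq:gen_game}, and since it minimizes $\phi$ over the equilibrium set it is an optimal v-GNE. Boundedness of the trajectory then guarantees convergent subsequences, each limit lying in $\Omega^\star$ by the distance result, whence every accumulation point of $(\bm x^{(k)})$ is an optimal v-GNE, which is the asserted convergence.
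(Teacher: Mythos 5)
Your proposal is correct and follows essentially the same route as the paper's proof: identify Algorithm \ref{alg:fbf} as the HSDM \eqref{eq:hsdm} applied to $\mc T_{\mathrm{FBF}}$, verify the hypotheses of Lemma \ref{th:hsdm_cvx} via Lemmas \ref{le:fix_FBF} and \ref{le:FBF_q_nonexp}, and resolve the circularity exactly as the paper does, by using the a priori bound of Lemma \ref{le:bounded_HSDM} to build the closed ball $\mathfrak{B}(\bm\omega^\star)$ of radius $R(\bm\omega^\star)$ around a fixed point, on which quasi-shrinking is then certified. Your closing subsequence argument translating $\dist(\bm\omega^{(k)},\Omega^\star)\to 0$ into convergence of $(\bm x^{(k)})$ to an optimal v-GNE is a correct (and slightly more explicit) rendering of a step the paper leaves implicit.
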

	\begin{proof}
		See Appendix \ref{pf:prop:FBF}.
	\end{proof}

	{\begin{remark}
		A central coordinator and step 5 of Algorithm \ref{alg:fbf} are not needed if $\phi$ is a separable function, i.e., $\phi(\bm \omega) = \sum_{i\in \mc I} \phi_i(\omega_i)$. In this case, step 6 can be immediately executed by using local information $(\accentset{\circ} x_i^{(k)}, \accentset{\circ}\lambda_i^{(k)}, \accentset{\circ}\nu_i^{(k)})$ {only,} as long as each agent $i$ {knows} {the gradient} $\nabla \phi_i$. 
		%\eod
	\end{remark}}

%\paragraph{Distributed algorithm for GNE selection with separable $\phi$}

\section{Optimal equilibrium selection in cocoercive games}% with linear coupling constraints}	
\label{sec:special_cases}
{
In this section, we discuss {a} special class of monotone games, namely cocoercive games with affine coupling constraints. {These games arise as a generalization of the widely studied class of strongly monotone games \cite{yi19},\cite{belgioioso18}. Differently from the strong monotonicity assumption, however, cocoercivity alone does not guarantee the uniqueness of the v-GNE. } %for which there exists a variational equilibrium seeking algorithm {that {is} possibly more efficient than the FBF algorithm.} Then, we show that the {pFB} can also be extended to an {optimal} GNE selection algorithm via the HSDM. \par
%	\label{sec:cocoercive_gam_select}
	%\paragraph{Define the games and give examples} 
%	{We consider the class of games with} linear coupling constraints and cocoercive pseudogradient, as formally stated in the following assumptions. %\red{[gives examples.]}
	\begin{assumption}[{\cite[{Assm.} 5]{belgioioso20}}]
		\label{as:pseudograd_cocoercive}
		The   {mapping $F$ in \eqref{eq:pseudograd}} is $\eta$-cocoercive. %\eod
	\end{assumption}
\iffalse	{\begin{remark}
		The paper \cite{yi19} assumes strong monotonicity of $F$, which, together with Lipschitz continuity, implies cocoercivity. %\eod %  (see \cite[Lem. 5 \& 7]{yi19}).
	\end{remark}}
	\fi
	\begin{assumption}[{{\cite[Eq. (3)]{belgioioso20}}}]
		\label{as:coup_const_lin}
		For each $i \in \mc I$, the function $g_i$ in \eqref{eq:coup_const} is affine, i.e., $g_i(x_i):= A_i x_i - b_i$, for some matrix $A_i \in \bb R^{m \times n_i}$ and vector $b_i \in \bb R^m$. %\eod
	\end{assumption}

	%\paragraph{preconditioned forward-backward}
	For this particular class of games, the preconditioned forward-backward (pFB) splitting \cite{yi19} can efficiently compute a variational GNE. We note that, {although} \cite{yi19} considers games with strongly monotone pseudogradient, the FB splitting only requires cocoercivity of the forward operator{\cite[Thm. 26.14]{bauschke11}.}} {Compared with the FBF, the pFB has the advantages of only having one communication round per iteration (as opposed to two) and {larger step size bounds}. A {numerical} performance comparison is provided in \cite{franci20}.}

	Given the particular structure of the coupling constraint as stated in Assumption \ref{as:coup_const_lin}, we can rewrite the operators in \eqref{eq:mon_incl} as follows:
	\begin{align}
		\mc A(\bm \omega) &:= \prod_{i \in \mc I} (\nc_{\mc X_i} {+ \partial \ell_i} )(x_i)  \times \nc_{\bb R^{Nm}_{\geq 0}}(\bm \lambda) \times \{\0_{Nm}\}, \label{eq:op_A2}\\
		\mc B(\bm \omega) &:= \col({F(\bm x)}, (L \otimes I_m) \bm \lambda+\bm b, \0_{Nm} ), \label{eq:op_B2}\\
		\mc C(\bm \omega) &:= \col(\bm A^\top \bm \lambda, -  \bm A \bm x - (L \otimes I_m)\bm \nu, (L \otimes I_m) \bm \lambda  ), \label{eq:op_C2}
	\end{align} 
	where $\bm A = \diag(\{A_i\}_{i \in \mc I})$ and $\bm b = \col(\{b_i\}_{i \in \mc I})$. Thus, 
	the pFB operator for the monotone inclusion in \eqref{eq:mon_incl} based on the operators $\mc A$, $\mc B$, and $\mc C$  in \eqref{eq:op_A2}--\eqref{eq:op_C2} is given by \cite[Eq. (24)]{yi19}:
	\begin{equation}
		\mc T_{\mathrm{pFB}} (\bm \omega) :=  (\Id + \Phi^{-1} (\mc A + \mc C))^{-1}(\Id - \Phi^{-1}\mc B)(\bm \omega),
		\label{eq:T_pFB}
	\end{equation}
	where $\Phi \succ 0$ is a symmetric positive definite preconditioning matrix, defined as
	\begin{equation*}
		\Phi := \begin{bmatrix}
			\rho^{-1}  & -\bm A^\top & 0\\
			-\bm A  & \tau^{-1} & -L \otimes I_m \\
			0 & - L \otimes I_m & \sigma^{-1}
		\end{bmatrix},
	\end{equation*} 
	{with} $\rho,\theta,\tau \in \bb R_{>0}^N$ {being} step sizes similarly defined as those of the FBF algorithm. Then, we can have an extension of the pFB for the v-GNE optimal selection of cocoercive games, as stated in Algorithm \ref{alg:pFB} {with  step size rules} given in {Assumptions \ref{as:beta} and} \ref{as:stepsizepFB}. {Finally,} {we formally state the convergence property of Algorithm \ref{alg:pFB} in Theorem \ref{prop:pFB}.}

	\begin{assumption}[\hspace{0.5pt}{\cite[Eq. (27) and Thm. 3]{yi19}}]%[Step sizes of Algorithm \ref{alg:pFB}]
		\label{as:stepsizepFB}
		It holds that $\rho_i \leq (\max_{j=1,\dots,n_i} \sum_{k=1}^m |[A_i]_{jk}| \delta)^{-1}$, $\tau_i  \leq (\max_{j=1,\dots,n_i} \sum_{k=1}^m |[A_i]_{jk}| + 2|\mc N_i^\lambda|+\delta)^{-1}$, and $\sigma_i \leq (2|\mc N_i^\lambda|+\delta)^{-1}$, for all $i \in \mc I$, where $\delta > 1/(\min (\eta, (2 \max_{i \in \mc I} |\mc N_i^\lambda| )^{-1}) )$. %whereas $\beta^{(k)}$ $\beta^{(k)} \in [0,\infty)$, for all $k\geq 0$,  $\lim_{k\to \infty} \beta^{(k)} = 0$, $\sum_{k\geq 1} \beta^{(k)} = + \infty$, and $\sum_{k\geq 1} (\beta^{(k)})^2 < + \infty$.
	%	 \eod 
	\end{assumption}
	%\begin{remark}
	%	Some examples for the choice of $\beta^{(k)}$ that satisfy Assumption \ref{as:stepsizepFB} is $\beta^{(k)}= {\beta_0}/{k^p}$, where $\beta_0 >0$ and $p > 0.5$. \eod 
	%	\end{remark}
	\begin{theorem} 
		\label{prop:pFB}
		Let Assumptions \ref{as:gen_game}--\ref{as:beta}, \ref{as:Lips}, and \ref{as:pseudograd_cocoercive}--\ref{as:stepsizepFB}  hold. Let $\Omega^\star$ be the set of solutions {to} Problem \eqref{eq:gne_opt} with $\mc T = \mc T_{\mathrm{pFB}}$ defined in \eqref{eq:T_pFB}, where $\mc A$, $\mc B$, and $\mc C$ are defined in \eqref{eq:op_A2}--\eqref{eq:op_C2}. Furthermore, let $(\bm \omega^{(k)})_{k \in \bb N}$, where $\bm \omega^{(k)}=(\bm x^{(k)}, \bm \lambda^{(k)}, \bm \nu^{(k)})$, be the sequence generated by Algorithm \ref{alg:pFB}. Then, 
		$ \lim_{k\to \infty} \dist(\bm \omega^{(k)},\Omega^\star) = 0,$
		and $(\bm x^{(k)})_{k \in \bb N}$ converges to {an} optimal v-GNE of the game in \eqref{eq:gen_game}. %\eod
	\end{theorem}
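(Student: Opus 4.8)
The plan is to show that $\mc T_{\mathrm{pFB}}$ in \eqref{eq:T_pFB} satisfies every hypothesis of the HSDM convergence result, Lemma \ref{th:hsdm_cvx}, following the same template used for $\mc T_{\mathrm{FBF}}$ in Theorem \ref{prop:FBF}, but replacing the FBF-specific averagedness argument with the classical forward--backward analysis enabled by cocoercivity. Concretely, I would verify in turn that (a) $\fix(\mc T_{\mathrm{pFB}}) = \zer(\mc A+\mc B+\mc C)$ as in \eqref{eq:equivalence_zer_fix}, (b) $\mc T_{\mathrm{pFB}}$ is quasi-nonexpansive (Assumption \ref{as:quasi_nonexp}), (c) it is quasi-shrinking on a suitable compact convex set (Assumption \ref{as:quasi_shrinking}), and (d) the HSDM iterates stay bounded; then Lemma \ref{th:hsdm_cvx} delivers the conclusion.

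First I would establish that $\mc T_{\mathrm{pFB}}$ is $\alpha$-averaged in the $\Phi$-weighted norm $\norm{\cdot}_\Phi$. Under Assumption \ref{as:coup_const_lin} the operator $\mc C$ in \eqref{eq:op_C2} is a skew-symmetric linear map, so $\mc A+\mc C$ is maximally monotone and the resolvent $(\Id+\Phi^{-1}(\mc A+\mc C))^{-1}$ is firmly nonexpansive in $\norm{\cdot}_\Phi$. By Assumption \ref{as:pseudograd_cocoercive} the pseudogradient $F$ is $\eta$-cocoercive, and since the remaining nonzero block of $\mc B$ in \eqref{eq:op_B2} is the positive-semidefinite Laplacian map $(L\otimes I_m)\bm\lambda$, the operator $\mc B$ is cocoercive; the step-size bounds of Assumption \ref{as:stepsizepFB} are precisely those of \cite[Thm. 3]{yi19}, under which $\Phi^{-1}\mc B$ is cocoercive in $\norm{\cdot}_\Phi$ with a constant large enough that the composition $\mc T_{\mathrm{pFB}}$ is averaged in $\norm{\cdot}_\Phi$ \cite[Thm. 26.14]{bauschke11}. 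Standard forward--backward theory then gives $\fix(\mc T_{\mathrm{pFB}})=\zer(\mc A+\mc B+\mc C)$, which is non-empty and bounded because the v-GNE set is (Section \ref{sec:gnep}); and an averaged operator with non-empty fixed-point set is quasi-nonexpansive, so Assumption \ref{as:quasi_nonexp} holds.

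Next I would obtain quasi-shrinking via Lemma \ref{le:demi_closedness_quasi_shrinking}, taking $\mc T_2=\mc T_{\mathrm{pFB}}$ and $\Psi=\Phi$. Writing $\mc T_{\mathrm{pFB}}=(1-\alpha)\Id+\alpha\mc R$ with $\mc R$ nonexpansive in $\norm{\cdot}_\Phi$, the averaged-operator identity gives, for every $\omegaopt\in\fix(\mc T_{\mathrm{pFB}})$,
\[
\norm{\mc T_{\mathrm{pFB}}(\bomega)-\omegaopt}_\Phi^2 \leq \norm{\bomega-\omegaopt}_\Phi^2 - \tfrac{1-\alpha}{\alpha}\norm{\bomega-\mc T_{\mathrm{pFB}}(\bomega)}_\Phi^2,
\]
which is exactly \eqref{eq:condition_quasi_shrinking} with $\gamma=(1-\alpha)/\alpha$. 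Moreover $\fix(\mc T_2)=\fix(\mc T_{\mathrm{pFB}})$ trivially, and since $\mc T_{\mathrm{pFB}}$ is in particular nonexpansive, the demiclosedness principle \cite[Thm. 4.27]{bauschke11} guarantees that $\Id-\mc T_{\mathrm{pFB}}$ is demiclosed at $0$. Lemma \ref{le:demi_closedness_quasi_shrinking} then certifies that $\mc T_{\mathrm{pFB}}$ is quasi-shrinking on every compact convex set meeting $\fix(\mc T_{\mathrm{pFB}})$. For boundedness I would reproduce the argument of Lemma \ref{le:bounded_HSDM}, which uses only quasi-nonexpansiveness of $\mc T$, the $L_\phi$-Lipschitz continuity of $\nabla\phi$ (Assumption \ref{as:phi}), and the square-summability of $\beta^{(k)}$ (Assumption \ref{as:beta}(ii)); these all hold, so $(\bomega^{(k)})$ remains in a ball $C=\{\bomega:\norm{\bomega-\omegaopt}_\Phi\leq R\}$, a compact convex set containing $\omegaopt$. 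Taking this $C$ as the quasi-shrinking set and using Assumption \ref{as:beta}(i), Lemma \ref{th:hsdm_cvx} yields $\lim_{k\to\infty}\dist(\bomega^{(k)},\Omega^\star)=0$; and since $\fix(\mc T_{\mathrm{pFB}})=\zer(\mc A+\mc B+\mc C)$ and every such zero furnishes a v-GNE through the KKT system \eqref{eq:KKT}, the component $(\bm x^{(k)})$ converges to the set of optimal v-GNE.

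The main obstacle is the first step: verifying that the specific preconditioner $\Phi$ together with the step-size bounds of Assumption \ref{as:stepsizepFB} renders $\Phi^{-1}\mc B$ sufficiently cocoercive in $\norm{\cdot}_\Phi$. This is where $\eta$-cocoercivity of $F$ (rather than the strong monotonicity assumed in \cite{yi19}) must be combined carefully with the cocoercivity constant $1/\norm{L\otimes I_m}$ of the Laplacian block, which is exactly what the dependence of $\delta$ on $\eta$ and $\max_{i}|\mc N_i^\lambda|$ in Assumption \ref{as:stepsizepFB} encodes. Once averagedness in $\norm{\cdot}_\Phi$ is secured, the remainder follows the FBF template almost verbatim, the key simplification being that quasi-shrinking is obtained directly from averagedness through Lemma \ref{le:demi_closedness_quasi_shrinking}.
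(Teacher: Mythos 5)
Your proposal is correct and follows essentially the same route as the paper's proof: averagedness of $\mc T_{\mathrm{pFB}}$ under Assumption \ref{as:stepsizepFB} via \cite[Thm. 3]{yi19} (hence quasi-nonexpansiveness), quasi-shrinking through Lemma \ref{le:demi_closedness_quasi_shrinking} with $\mc T_2 = \mc T_{\mathrm{pFB}}$ using the averaged-operator inequality \cite[Prop. 4.35(iii)]{bauschke11} and demiclosedness \cite[Thm. 4.27]{bauschke11}, and boundedness of the iterates before invoking Lemma \ref{th:hsdm_cvx}. The only cosmetic difference is in the boundedness step, where the paper notes that averagedness makes $\mc T_{\mathrm{pFB}}$ attracting and directly applies \cite[Thm. 2]{ogura03}, whereas you propose to reproduce Lemma \ref{le:bounded_HSDM} --- which in fact relies on an attracting-type strict inequality rather than quasi-nonexpansiveness alone, but that inequality is supplied by the averagedness you already established, so your argument goes through.
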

	\begin{algorithm}%[H]
	\caption{\color{black} Optimal v-GNE selection  via pFB and HDSM for linearly coupled cocoercive games}
	\label{alg:pFB}
	\textbf{Initialization.} Set $x_i^{(0)} \in \mc X_i$, $\lambda_i^{(0)} \in \bb R_{\geq 0}^m$, and $\nu_i^{(0)} \in \bb R^m$, for all $i \in \mc I$.
	
	\textbf{Iteration of each agent $i \in \mc I$.}
	%\vspace{-10pt}
	\begin{enumerate}
		\item Receives $x_j^{(k)}$ from agent $j \in \mc N_i^J$ and $\lambda_j^{(k)}$ from agent $j \in \mc N_i^\lambda$. 
		\item Updates:
		\begin{align*}
			\accentset{\circ} x_i^{(k)} &= {\prox_{\ell_i + \iota_{\mc X_i}}^{\rho_i}}\left(x_i^{(k)} -  \rho_i (\nabla_{x_i}{f_i}(\bm x^{(k)}) + A_i^\top \lambda_i^{(k)})\right), \\
			\accentset{\circ} \nu_i^{(k)} &= \nu_i^{(k)} - \sigma_i\textstyle\sum_{j \in \mc N_i^\lambda} \left(\lambda_i^{(k)}- \lambda_j^{(k)} \right).
		\end{align*}
		\item Receives $\tilde \nu_j^{(k)}$ from agent $j \in \mc N_i^\lambda$. 
		\item Updates:
		\begin{align}
			&\accentset{\circ} \lambda_i^{(k)} = \proj_{\geq 0}\Big(\Big.\lambda_i^{(k)} + \tau_i\Big(\Big. A_i(2x_i^{(k+1)}-x_i^{(k)}) - b_i  \notag \\
			&\qquad \quad +\textstyle\sum_{j \in \mc N_i^\lambda} \left(2 \nu_i^{(k+1)}- 2\nu_j^{(k+1)} - \nu_i^{(k)} + \nu_j^{(k)}  \right) \notag \\
			&\qquad \quad +\textstyle\sum_{j \in \mc N_i^\lambda} \left( \lambda_i^{(k)} - \lambda_j^{(k)} \right) \Big.\Big) \Big.\Big). \notag 
		\end{align}
		\item Sends $(\accentset{\circ} x_i^{(k)}, \accentset{\circ}\lambda_i^{(k)}, \accentset{\circ}\nu_i^{(k)})$ to a coordinator and receives back $\nabla_{\omega_i} \phi( \accentset{\circ} {\bm x}^{(k)}, \accentset{\circ}{\bm \lambda}^{(k)}, \accentset{\circ}{\bm \nu}^{(k)}) $, where $\omega_i = (x_i, \lambda_i, \nu_i)$.
		\item Updates: 
		\begin{align}
			&\hspace{-10pt}(x_i^{(k+1)}, \lambda_i^{(k+1)}, \nu_i^{(k+1)}) \notag \\
			&\hspace{-10pt}=( \accentset{\circ} x_i^{(k)}, \accentset{\circ}\lambda_i^{(k)}, \accentset{\circ}\nu_i^{(k)}) - \beta^{(k)} \nabla_{\omega_i} \phi( \accentset{\circ} {\bm x}^{(k)}, \accentset{\circ}{\bm \lambda}^{(k)}, \accentset{\circ}{\bm \nu}^{(k)}). \label{eq:hsdm_pfb}
		\end{align}
	\end{enumerate}
\end{algorithm}

	\begin{proof}
		See Appendix \ref{pf:prop:pFB}.
	\end{proof}

\medskip
	\section{Online {tracking of} optimal generalized Nash equilibria}
	\label{sec:tv_GNEselect}
	    \subsection{{Online optimal equilibrium tracking problem}}
	    \label{sec:introduction_of_time_varying_GNE_sel}
	    %\wicaksay{An alternative. We might also introduce the tracking problem here. But I am not convinced that this is the better option. The positive point of this option is that we clearly define all the problems that we address in this section.}
	    %\emiliosay{The idea is interesting, here is a temptative version }
	    
	    In the second part of this paper, we {consider} the online GNE selection problem. 	Specifically, let us {introduce} the time-varying game: % indexed in the time step $t$:  
	    \begin{subequations}

	    	\begin{empheq}[left={{\forall t\in\N, \forall i \in \mc I} \colon \empheqlbrace\,}]{align}
	    		\underset{x_i \in \mc X_{i,t}}{\min} \quad & J_{i,t}(\bm x) \label{eq:cost_f_time_var} \\
	    		\operatorname{s.t.} \quad & \sum_{j \in \mc I} g_{j,t}(x_j) \leq 0, \label{eq:coup_const_time_var}
	    	\end{empheq}
	    	\label{eq:time_varying_game}%
	    \end{subequations}
    	{where $t$ denotes the time index.}  
	    {The problem is time-varying in the sense that the objective functions of the agents, as well as the constraints, may vary over time. We assume that each instance of the games {in \eqref{eq:time_varying_game}} satisfies Assumptions \ref{as:gen_game} and \ref{as:pseudograd}.}  The time-varying GNE selection problem {thus} concerns the tracking of the sequence $(\omegaopt_t)_{t\in\N}$:
	    \vspace{-2pt}
	    \begin{subequations}
	    	\begin{empheq}[left={\forall t\in\N \colon~\omegaopt_t :=  \empheqlbrace\,}]{align}
	    	\underset{\bomega}{\argmin} ~& \phi_t(\bomega) \\
	    	\operatorname{s.t.} ~&  \bomega \in \zer(\mc A_t + \mc B_t + \mc C_t).
	        \end{empheq}
	        \label{eq:online_GNE_selection}%
	    \end{subequations} 
	    {The problems in \eqref{eq:time_varying_game} and  \eqref{eq:online_GNE_selection} are a sequence in time of instances of \eqref{eq:gen_game} and \eqref{eq:gne_opt1}, respectively.} The operators $\mc A_t$, $\mc B_t$, {and} $\mc C_t$ are defined in \eqref{eq:op_A}--\eqref{eq:op_C}, for the game in \eqref{eq:time_varying_game} at time step $t$. The agents need to compute the action $\bomega_{t+1}$, having only access to the game formulation up to time $t$. This setup describes the case in which the agents act in a variable environment with limited computation capabilities, so that they cannot compute the exact optimal selection before changes in the problem (either in the selection function or in the game)  occur. \par
	    
	  For every $t {\in \bb N}$, and under a suitable choice of operator $\mc T_t$, such that \vspace{-2pt}
	   $$ \bomega \in \zer (\mc A_t + \mc B_t + \mc C_t) \Leftrightarrow \bomega \in \fix(\mc T_t),$$ 
	   $\omegaopt_t$ \eqref{eq:online_GNE_selection} can be equivalently found as
	   \iffalse
	   {
	    \begin{subequations}
	    	\begin{empheq}[left={\forall ~t\in\N \colon\omegaopt_t :=  \empheqlbrace\,}]{align}
	    		\argmin_{\bomega}  ~~& \phi_t(\bomega) \\
		    	\text{s.t.}~~ & \bomega \in \fix(\mc T_t).
	        \end{empheq}
	        \label{eq:online_fix_point_selection}
	    \end{subequations} }
\fi
     the solution of the {time-varying} fixed-point selection problem
    	\begin{equation}
    	\label{eq:time_varying_VI}
    	    \inf_{\bomega \in \fix(\mc T_t)} \langle \bomega-\omegaopt_t, \nabla\phi_{{t}}(\omegaopt_t) \rangle \geq 0.
    	\end{equation}}%
     {The sequence} $(\omegaopt_t)_{t\in\N}$ {is} well defined {when, for each $t \in \N$,} the solution of \eqref{eq:online_GNE_selection} {is} unique. {Let us} then introduce the following assumptions, which guarantee {uniqueness.}}
     
	   	\begin{assumption} %(Time-varying selection function)
	   	\label{as:phi_time_varying}
		The selection function $\phi_t \colon \bb R^{n_\omega} \to {\bb R}$ {in \eqref{eq:time_varying_VI}} is  
		continuously differentiable,  $\sigma$-strongly convex, and has $L_{\phi}$-Lipschitz continuous gradient for all $t \in \N$. %\eod 
		\end{assumption} 
		\begin{assumption} %(Uniformly quasi-nonexpansive operator)
			\label{as:quasi_nonexp_always}
			{The operator} $\mc T_t$ {in \eqref{eq:time_varying_VI}} is quasi nonexpansive with $ \fix(\mc T_t)\neq \varnothing$ for all $t \in \N$. %\eod
		\end{assumption}
	   { Under Assumptions  \ref{as:phi_time_varying} and \ref{as:quasi_nonexp_always}, by \cite[Prop. 1]{yamada05}, we find $\fix(\mc T_t)$ to be closed and convex for all $t$. By \cite[Thm. 2.3.3]{facchinei07}, the problem in (\ref{eq:time_varying_VI}) has a unique solution for all $t$. These assumptions also guarantee, by \cite[Thm. 2A.7]{Dontchev14}, that the solutions of \eqref{eq:time_varying_VI} coincide with the solutions of \eqref{eq:online_GNE_selection}.} In the remainder of this section, we build {upon} the results of Section \ref{sec:distributed_GNE_selection} to derive an HSDM-inspired algorithm for tracking $(\omegaopt_t)_{t\in\N}$.
	
	\begin{subsection}{Online fixed point tracking via the {restarted} Hybrid Steepest Descent Method}
	%	\paragraph {Restarted HSDM}
		The existing results on the HSDM algorithm study the asymptotic behavior with vanishing step size $(\beta^{(k)})_{k\in\N}$ {(see Assumption \ref{as:beta}).} However, in online scenarios,  decision makers {may} not have the computational capability to exactly compute the fixed point of the algorithm, {since} that would require an infinite amount of iterations in a limited time span before a new instance of the problem becomes available. {Thus, we propose} and study the (approximate) convergence properties of an algorithm that only performs a finite number of HSDM iterations per time step. {Consequently,} the sequence of step sizes becomes truncated and a sequence of vanishing step sizes, which is required for the convergence of the HSDM, cannot be defined. We therefore simplify the analysis by considering a constant sequence of step sizes.

	%	\emiliosay{now  the restarted HSDM is introduced right away, and theorems are stated on this formulation instead of abstract operators}
	
		Let us introduce the restarted HSDM algorithm. 
		{Given an initial state} $\bomega_1$, {for each $t\in \N$, {we propose the following:}}
		\begin{align}
	%	\begin{split}
				%\text{Step 1:} ~& 
				\bm y^{(k+1)}&:=
				\begin{cases}
					 \bomega_t, &  \text{for} ~ k=1,\\
					\mc T_t(\bm y^{(k)}) - \beta \nabla \phi_t( \mc T_t(\bm y^{(k)})), & \text{for}~k=2,..., K,
				\end{cases} \notag \\
				\bomega_{t+1}& := \bm y^{(K+1)}.
				\label{eq:algorithm_online_HSDM}
	%	\end{split}
		\end{align} 
	    In words, at each time step $t$ the auxiliary variable $\bm y^{(k)}$, with $k=1, ...K$, is updated with $K$ iterations of the HSDM. Then, the decision variable at time step $t+1$ is obtained as $\bm \omega_{t+1}=\bm y^{(K+1)}$. The algorithm is then restarted when the information on the selection function and game for the next time step {becomes} available. Next, let us postulate the following {technical} assumptions:
   
\begin{assumption}
\label{as:bounded_optimizer_sequence}
	There exists a compact set $\mc Y$ such that
		    $\omegaopt_t \in  \mc Y$ for all $t \in \N.$
		    %\eod 
\end{assumption}	   
\begin{assumption}
\label{as:bounded_gradient}
There exists $U\geq0$ such that 
%\begin{align*}
{$\sup_{\boldsymbol{\omega} \in \bigcup_{{\tau} \in\N} \mathrm{Im}(\mc T_{{\tau}}), t\in\N}\|{\nabla}\phi_t(\bomega)\|\leq U. $}
%\end{align*}
%\eod
\end{assumption}	   

Assumption \ref{as:bounded_optimizer_sequence} is {practically} reasonable, {since we can assume that we do not aim at tracking a divergent sequence.} Assumption \ref{as:bounded_gradient} {specifies an upper bound for the gradient of the selection function} and is in line with the online optimization literature ({see} \cite[Assm. 5]{zinkevich03}, \cite[Assm. 5]{DallAnese2016}, {among others).}

As shown in Section \ref{sec:distributed_GNE_selection}, the HSDM method converges to the solution of a selection problem over the fixed point set of a quasi-shrinking operator. {In the online scenario, assuming the operator $\mc T_t$ to be quasi-shrinking for all $t$ is not enough, as the  quasi-shrinking property might not hold asymptotically. {Thus,} we {also} {postulate} the following {technical} assumption: 
\iffalse
\begin{assumption} 
				\label{as:T_quasi_shrinking_always}
				{For all $t$,} $\mc T_t$ is a quasi-shrinking operator on any closed convex set $\mc C$ {such that $\mc C \cap \fix(\mc T_t) \neq \varnothing$.}  \eod 
			\end{assumption}
\fi 

\begin{assumption} (Uniformly quasi-shrinking operator)
\label{as:T_uniformly_quasi_shrinking}
For any closed convex set $C$ such that $C \cap \fix(\mc T_t)\neq \varnothing$, there exists $D:\R_{\geq 0}\to \R$ positive semidefinite such that $D_t(r)\geq D(r)$ for all $t {\in \N}$ and for all $r\geq0$, where $D_t(\cdot)$ is the shrinkage function of $\mc T_t$ defined as in \eqref{eq:D}.  %\eod
\end{assumption} 
\begin{remark}
    Assumption \ref{as:T_uniformly_quasi_shrinking} implies that $\mc T_t$ is quasi-shrinking on any closed, convex set $C$ such that $\mc C \cap \fix(\mc T_t) \neq \varnothing$, $ \forall t \in \N$. %\eod
\end{remark}
}

		The next lemma outlines a contraction property of the restarted HSDM to the {solution sequence of Problem \eqref{eq:time_varying_VI}} up to an additive error, which can be controlled by an appropriate choice of {the step size} $\beta$ and {the number of iterations} $K$.
		
	\begin{lemma}
			\label{lemma:finite_hsdm_bound_single_iteration} 
			Let Assumptions \ref{as:phi_time_varying}--\ref{as:T_uniformly_quasi_shrinking} hold. For any $t\in\N$, let $\bomega_{t+1}$ be generated by the restarted HSDM algorithm in (\ref{eq:algorithm_online_HSDM}). For any $\gamma>0$, there exist $K, \beta >0$, such that 	
			\begin{equation}
				\label{eq:contractive-like_bound}
				\|\bm \omega_{t+1}-\bm \omegaopt_t\|^2 \leq \left(1-{\tau(\beta)}\right)^K \| \bm \omega_{t}-\bm \omegaopt_t \|^2 + \gamma,
			\end{equation}
			where
			$ \tau(\beta):= 1- \sqrt{1-\beta(2\sigma-\beta L_{\phi}^2)} \in (0,1).$ %\eod 
		\end{lemma}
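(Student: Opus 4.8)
The plan is to establish a single-iteration bound on the squared distance $\|\bm y^{(k)}-\omegaopt_t\|^2$ along the inner recursion in \eqref{eq:algorithm_online_HSDM}, to iterate it $K$ times into a geometric bound, and finally to tune $\beta$ and $K$ so that the accumulated additive error drops below $\gamma$. Throughout I fix $t$ and abbreviate $\bm z:=\mc T_t(\bm y^{(k)})$, so that one inner step is $\bm y^{(k+1)}=\bm z-\beta\nabla\phi_t(\bm z)$. Two facts get used repeatedly. Since $\omegaopt_t\in\zer(\mc A_t+\mc B_t+\mc C_t)=\fix(\mc T_t)$, Assumption \ref{as:quasi_nonexp_always} gives the non-expansiveness $\|\bm z-\omegaopt_t\|\le\|\bm y^{(k)}-\omegaopt_t\|$; and the $\sigma$-strong convexity together with the $L_\phi$-Lipschitz continuity of $\nabla\phi_t$ (Assumption \ref{as:phi_time_varying}) are exactly the two ingredients that generate the contraction factor $1-\tau(\beta)$, with $\tau(\beta)\in(0,1)$ precisely when $\beta\in(0,2\sigma/L_\phi^2)$.

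First I would expand $\|\bm y^{(k+1)}-\omegaopt_t\|^2=\|\bm z-\omegaopt_t\|^2-2\beta\langle\nabla\phi_t(\bm z),\bm z-\omegaopt_t\rangle+\beta^2\|\nabla\phi_t(\bm z)\|^2$ and lower-bound the inner product by strong convexity, $\langle\nabla\phi_t(\bm z),\bm z-\omegaopt_t\rangle\ge\sigma\|\bm z-\omegaopt_t\|^2+\langle\nabla\phi_t(\omegaopt_t),\bm z-\omegaopt_t\rangle$. The decisive point is the residual cross term $\langle\nabla\phi_t(\omegaopt_t),\bm z-\omegaopt_t\rangle$: since $\bm z$ need not lie in $\fix(\mc T_t)$ this term is not sign-definite and cannot simply be dropped. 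I would split it through the projection $\bm p:=\proj_{\fix(\mc T_t)}(\bm z)$, invoking the variational inequality \eqref{eq:time_varying_VI} of $\omegaopt_t$ to obtain $\langle\nabla\phi_t(\omegaopt_t),\bm p-\omegaopt_t\rangle\ge0$, and bounding the leftover by Cauchy--Schwarz and the uniform gradient bound (Assumption \ref{as:bounded_gradient}, applicable because $\bm z,\omegaopt_t\in\mathrm{Im}(\mc T_t)$), so that $\langle\nabla\phi_t(\omegaopt_t),\bm z-\omegaopt_t\rangle\ge-U\,\dist(\bm z,\fix(\mc T_t))$. Together with $\|\bm z-\omegaopt_t\|\le\|\bm y^{(k)}-\omegaopt_t\|$, $\|\nabla\phi_t(\bm z)\|\le U$, and a relaxation of the resulting coefficient to $1-\tau(\beta)$, this yields the one-step estimate
\begin{equation*}
\|\bm y^{(k+1)}-\omegaopt_t\|^2\le(1-\tau(\beta))\,\|\bm y^{(k)}-\omegaopt_t\|^2+2\beta U\,\dist(\bm z,\fix(\mc T_t))+c\,\beta^2U^2,
\end{equation*}
for an absolute constant $c>0$.

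Next I would control the residual distance $d_k:=\dist(\bm y^{(k)},\fix(\mc T_t))$. Applying the uniform quasi-shrinking function $D$ from Assumption \ref{as:T_uniformly_quasi_shrinking} to the $\mc T_t$-step and the gradient bound $U$ to the subsequent perturbation produces the scalar recursion $d_{k+1}\le d_k-D(d_k)+\beta U$. As $D$ is positive semidefinite, non-decreasing, and vanishes only at the origin, this recursion confines $d_k$ to a neighborhood of $\fix(\mc T_t)$ whose radius tends to $0$ as $\beta\to0$, while the geometric weight $(1-\tau(\beta))^{K-k}$ suppresses the (possibly large) residual of the initial transient. Substituting this control back and summing the one-step estimates over $k=1,\dots,K$ gives
\begin{equation*}
\|\bomega_{t+1}-\omegaopt_t\|^2\le(1-\tau(\beta))^{K}\|\bomega_{t}-\omegaopt_t\|^2+E(\beta,K),
\end{equation*}
where $E(\beta,K)$ gathers the $O(\beta^2U^2)$ and distance-residual contributions, each weighted by $\sum_j(1-\tau(\beta))^j\le1/\tau(\beta)$. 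Since $\tau(\beta)=\Theta(\beta)$ as $\beta\to0$, every term of $E$ carries a surplus factor of $\beta$ (or of the vanishing neighborhood radius) after division by $\tau(\beta)$, so $E(\beta,K)\to0$; I would conclude by first choosing $\beta$ small to push $E$ below $\gamma$ and then $K$ large to absorb the transient, all uniformly in $t$ because $\sigma,L_\phi,U$ and $D$ do not depend on $t$.

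The step I expect to be the main obstacle is exactly the residual cross term $\langle\nabla\phi_t(\omegaopt_t),\mc T_t(\bm y^{(k)})-\omegaopt_t\rangle$: the variational-inequality optimality of $\omegaopt_t$ only signs this quantity at points of $\fix(\mc T_t)$, whereas the iterates lie off that set. Converting it into a vanishing error is where the uniform quasi-shrinking property is indispensable, since the naive estimate $\dist(\bm z,\fix(\mc T_t))\le\|\bm z-\omegaopt_t\|$ would leave a non-vanishing $\Theta(U/\sigma)$ error floor and defeat the claim that $\gamma$ can be made arbitrarily small. A secondary concern is uniformity in $t$, which is secured by the uniform forms of Assumptions \ref{as:bounded_optimizer_sequence}, \ref{as:bounded_gradient} and \ref{as:T_uniformly_quasi_shrinking}.
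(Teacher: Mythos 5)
Your proposal is correct and takes essentially the same route as the paper: the contraction factor $1-\tau(\beta)$ that you derive by expanding the square and invoking strong monotonicity plus Lipschitz continuity of $\nabla\phi_t$ is exactly what the paper imports from \cite[Lem. 4a]{yamada05} for the map $\mc T_t^{\beta}=\mc T_t-\beta\nabla\phi_t\circ\mc T_t$; your treatment of the cross term $\langle\nabla\phi_t(\omegaopt_t),\mc T_t(\bm y^{(k)})-\omegaopt_t\rangle$ via the projection onto $\fix(\mc T_t)$, the VI optimality of $\omegaopt_t$, and the uniform gradient bound $U$ is precisely the bound \eqref{eq:lemma:useful_bounds_claim_C} of Lemma \ref{lemma:useful_bounds}; and your scalar recursion $d_{k+1}\leq d_k-D(d_k)+\beta U$ controlled through the uniform shrinkage function, followed by geometric summation with $\sum_j(1-\tau(\beta))^j\leq 1/\tau(\beta)$ and the limit $\beta/\tau(\beta)\to 1/\sigma$ to tune first $\xi$, then $\beta$, then $K$, reproduces Lemmas \ref{lemma:bound_of_sequence}, \ref{le:auxiliary_lemma_online_bounds} and the choice $\xi=\gamma\sigma/(12U)$ in the paper's proof. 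The only deviation is bookkeeping: the paper first establishes burn-in bounds valid for $k\geq K$ and then iterates with a uniform per-step error $\tau(\beta)\gamma$, whereas you carry a per-step error proportional to $d_k$ and suppress the transient with the geometric weights $(1-\tau(\beta))^{K-k}$ — a presentational variant of the same argument, not a different one (and, like the paper's condition \eqref{eq:condition_shrinkage_function}, it makes $K$ depend on $\dist(\bomega_t,\fix(\mc T_t))$, which is consistent with the lemma as stated).
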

		\begin{proof}
		    See Appendix \ref{appendix:proof:lemma:finite_hsdm_bound_single_iteration}.
		\end{proof}
		\begin{remark}
			\label{remark:approximation_error_control_1}
        	For decreasing values of the {tolerable error} $\gamma$, the stepsize $\beta$ has to be decreased and the number of iterations $K$ has to be increased {(see the proof of Lemma \ref{lemma:finite_hsdm_bound_single_iteration}).} 
		\end{remark}
		\iffalse
		\begin{remark}
			\label{remark:approximation_error_control_1}
	{		 			It can be verified that  $\lim_{\beta\rightarrow0^+} \frac{\beta}{\tau(\beta)}= \frac{1}{\sigma}$, therefore for small values of the stepsize $\beta$ the additive error $\gamma$ becomes
			$$ \lim_{\beta\rightarrow 0^+} \gamma = \frac{6\xi}{\sigma}\|\mc \nabla \phi_t(\bm \omegaopt_t)\|.  $$
 The additive error $\gamma$ is then arbitrarily small for an appropriately small stepsize and large number of steps, such that \eqref{eq:condition_shrinkage_function} for an arbitrarely small $\xi$. \eod}
		\end{remark}
\fi
		%{ \paragraph{Tracking error of restarted HSDM for online fixed point selection}
		%	\label{sec:online_HSDM_for_optimization}
			We now proceed to show how the property in (\ref{eq:contractive-like_bound}) can be exploited to derive an error bound on {the trajectory tracking of the solution sequence of the problem in \eqref{eq:time_varying_VI} via the restarted HSDM  \eqref{eq:algorithm_online_HSDM}.}  {Thus,}  we introduce the following assumption:
			{
			\begin{assumption}
				\label{ass:var_lim}
				There exist scalars $\delta_1, \delta_2 \geq0 $  such that
				\begin{enumerate}[(i)]
					\item \label{ass:variability_limited}  $\sup_{t \in \N}\| \bm \omegaopt_{t+1}-\bm \omegaopt_{t} \| \leq \delta_1$;
					\item \label{ass:GNE_set_bounded_variations} 
					$\sup_{t \in \N} \dist(\omegaopt_t, (\fix(\mc T_{t+1})) \leq \delta_2$. %\eod%, for all $t$.
				\end{enumerate} 
				  
			\end{assumption}
			}
			
 Assumption {\ref{ass:var_lim}.\ref{ass:variability_limited} }is standard in online optimization (e.g. \cite[Assm. 1]{Simonetto2020}, \cite[Assm. 3.1]{Simonetto17}, and \cite[Assm. 3]{DallAnese2016}). { We note that Assumption {\ref{ass:var_lim}.\ref{ass:variability_limited}} implies Assumption {\ref{ass:var_lim}.\ref{ass:GNE_set_bounded_variations}.}} The latter is nevertheless introduced to distinguish the effects of the time variation of $\mc T_t$ (which influences both $\delta_1$ and $\delta_2$) {from} the one of $\phi_t$ (which only influences $\delta_1$). 
            \begin{remark}
               {If} $\mc T_t=\mc T$, for all $t\in\N$, and the time dependence can be expressed through a parametrization, that is,   $\phi_t(\bomega)=\phi(\bomega,t)$, then an estimate for $\delta_1$ can be found.   In fact, if $\phi(\bomega,t)$ is continuously differentiable, we find by \cite[Thm. 2F.7]{Dontchev14} that the solution mapping, that is, the mapping from $t$ to the solution of $\VI(\fix(\mc T), \nabla_{x}\phi(\cdot, t) ) $, is Lipschitz continuous in a neighbourhood of any ${t}$ with Lipschitz constant $\sigma^{-1}|\nabla_t \phi(\omegaopt_{{t}},{t}) |$. Thus, if the time variation between two consecutive {time steps} $t_1$ and $t_2$ is small enough, $\delta_1$ can be estimated as $\sigma^{-1}|\nabla_t \phi(\omegaopt_{t_1},{t_1}) |(t_2-t_1)$. The solution mapping is in general discontinuous when $\mc T_t$ is time-varying; thus, a similar estimate cannot be found in the general case.
            \end{remark}

			\begin{theorem}
				\label{prop:asymptotic_limit_online_hsdm}
				%Let Assumptions \ref{as:phi_time_varying}, \ref{as:quasi_nonexp_always},  \ref{as:bounded_optimizer_sequence}, \ref{ass:variability_limited},  \ref{as:T_quasi_shrinking_always} and \ref{ass:GNE_set_bounded_variations} hold.
				Let Assumptions \ref{as:phi_time_varying}--\ref{ass:var_lim} hold. Let the sequence $(\bm \omega_t)_{t\in\N}$ be generated {by} the restarted HSDM in \eqref{eq:algorithm_online_HSDM}. 
				%Let $\bar{\xi}>0$, $\bar{a}\geq \|\bm \omega_1-\bm \omega_1^{\star}\|$. Define the mapping $\bar{\gamma}:\R\to\R$
			%	\begin{equation}
				%	\label{eq:prop:asymptotic_limit_online_hsdm:definition_gamma_bar}
			%		\bar{\gamma}(\xi):= \frac{\beta}{\tau(\beta)} 6U({\xi}+\beta U).
			%	\end{equation}
				{ For any $\gamma>0$, there exist $\beta \in(0,\frac{2\sigma}{L^2_{{\phi}}})$ and $\bar{K}$, such that, for all $K\geq \bar K$, the sequence $(\bomega_t)_{t\in\N}$ is bounded and
				\begin{equation}
					\label{eq:asymptotic_tracking_error}
					\limsup_{t\rightarrow\infty} \| \bm \omega_{t}-\bm \omegaopt_{t}\|^2 \leq  \frac{({\gamma}+\delta_1^2)}{1/2-\alpha}, 
				\end{equation}
				where $\alpha=(1-\tau(\beta))^K<\frac{1}{2}$.
				} %\eod 
			\end{theorem}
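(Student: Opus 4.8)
The plan is to convert the single-step estimate of Lemma~\ref{lemma:finite_hsdm_bound_single_iteration} into a scalar recursion on the tracking error $e_t := \|\bomega_t - \omegaopt_t\|$ and then unroll it. First I would fix the prescribed tolerance $\gamma>0$ and invoke Lemma~\ref{lemma:finite_hsdm_bound_single_iteration} to select $\beta \in (0,\,2\sigma/L_\phi^2)$ together with a threshold $\bar K$ so that, for every $K\geq\bar K$, the per-step bound \eqref{eq:contractive-like_bound} holds with additive error $\gamma$ while at the same time $\alpha := (1-\tau(\beta))^K < \tfrac12$. This is consistent because $1-\tau(\beta)\in(0,1)$, so $(1-\tau(\beta))^K \downarrow 0$ as $K$ grows, and the additive error in \eqref{eq:contractive-like_bound} is the steady-state within-step error, which upper-bounds every finite-$K$ accumulation; hence raising $K$ only tightens the contraction factor without deteriorating $\gamma$. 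Note that this bound is uniform in $t$, as guaranteed by Assumptions~\ref{as:bounded_gradient} and \ref{as:T_uniformly_quasi_shrinking}.

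Next I would account for the moving target. By the triangle inequality and Assumption~\ref{ass:var_lim}.\ref{ass:variability_limited},
\begin{equation*}
	e_{t+1} = \|\bomega_{t+1} - \omegaopt_{t+1}\| \leq \|\bomega_{t+1} - \omegaopt_{t}\| + \|\omegaopt_{t} - \omegaopt_{t+1}\| \leq \|\bomega_{t+1} - \omegaopt_{t}\| + \delta_1 .
\end{equation*}
Squaring, using $(a+b)^2 \leq 2a^2 + 2b^2$, and applying \eqref{eq:contractive-like_bound} yields
\begin{equation*}
	e_{t+1}^2 \leq 2\,\|\bomega_{t+1} - \omegaopt_{t}\|^2 + 2\delta_1^2 \leq 2\alpha\, e_t^2 + 2(\gamma + \delta_1^2).
\end{equation*}
Here the hypothesis $\alpha<\tfrac12$ plays its essential role: it makes the contraction factor $2\alpha$ strictly less than $1$, so the recursion is stable.

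Finally I would unroll this linear recursion. Setting $c := 2(\gamma+\delta_1^2)$, induction gives $e_t^2 \leq (2\alpha)^{t-1} e_1^2 + c\sum_{j=0}^{t-2}(2\alpha)^j \leq e_1^2 + c/(1-2\alpha)$ for all $t$, so $(e_t)_{t\in\N}$ is bounded; since $\omegaopt_t$ lies in the compact set $\mc Y$ (Assumption~\ref{as:bounded_optimizer_sequence}), $(\bomega_t)_{t\in\N}$ is bounded as well. Passing to the limit superior and using $(2\alpha)^{t}\to 0$,
\begin{equation*}
	\limsup_{t\to\infty} e_t^2 \leq \frac{c}{1-2\alpha} = \frac{2(\gamma+\delta_1^2)}{1-2\alpha} = \frac{\gamma+\delta_1^2}{1/2-\alpha},
\end{equation*}
which is exactly \eqref{eq:asymptotic_tracking_error}. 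The main obstacle I anticipate is the bookkeeping in the first step, namely certifying that one fixed $\beta$ makes \eqref{eq:contractive-like_bound} hold with the prescribed $\gamma$ for the whole range $K\geq\bar K$ (and not merely for the single pair produced verbatim by Lemma~\ref{lemma:finite_hsdm_bound_single_iteration}) while that same range enforces $\alpha<1/2$; once this is settled, the remainder is an elementary contraction argument.
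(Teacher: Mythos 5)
Your recursion and unrolling step is exactly the paper's closing argument, but your first step contains a genuine gap: you assert that one fixed pair $(\beta,\bar K)$ makes the per-step bound \eqref{eq:contractive-like_bound} hold \emph{uniformly in $t$}, ``as guaranteed by Assumptions \ref{as:bounded_gradient} and \ref{as:T_uniformly_quasi_shrinking}.'' Those assumptions do not suffice. The condition under which Lemma \ref{le:auxiliary_lemma_online_bounds} (the quantitative engine behind Lemma \ref{lemma:finite_hsdm_bound_single_iteration}) delivers \eqref{eq:contractive-like_bound} is
\begin{equation*}
D_t(\xi)\;\geq\;\max\Bigl\{\,2\beta U,\;\tfrac{2\,\dist(\bomega_t,\fix(\mc T_t))}{K-1}\Bigr\},
\end{equation*}
and while Assumption \ref{as:T_uniformly_quasi_shrinking} bounds $D_t(\xi)\geq D(\xi)>0$ uniformly, the right-hand side contains $\dist(\bomega_t,\fix(\mc T_t))$, which depends on the trajectory itself. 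A fixed $\bar K$ works for all $t$ only if this distance is a priori bounded uniformly in $t$ --- but that is precisely the boundedness of $(\bomega_t)$ that you are trying to prove, so the argument as written is circular: the contraction needs bounded iterates, and bounded iterates are obtained from the contraction. A telltale symptom is that your proposal never invokes Assumption \ref{ass:var_lim}.\ref{ass:GNE_set_bounded_variations} (the constant $\delta_2$), even though it appears in the theorem's hypotheses.

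The paper breaks the circularity with an induction that your proposal skips. It first fixes $\xi=\gamma\sigma/(12U)$ and $\bar\beta$, then defines the self-consistent radius $\bar a$ in \eqref{eq:prop:asymptotic_limit_online_hsdm:condition_a_bar} and picks $\bar K$ large enough that $D(\xi)\geq 2(\bar a+\delta_2)/(\bar K-1)$, see \eqref{eq:prop:condition_rho_a_bar}. The induction hypothesis $\|\bomega_t-\omegaopt_{t-1}\|\leq\bar a$ combined with Assumption \ref{ass:var_lim}.\ref{ass:GNE_set_bounded_variations} yields $\dist(\bomega_t,\fix(\mc T_t))\leq\|\bomega_t-\omegaopt_{t-1}\|+\delta_2\leq\bar a+\delta_2$, so the shrinkage condition holds at step $t$, Lemma \ref{le:auxiliary_lemma_online_bounds} applies, and the resulting contraction --- thanks to the definition of $\bar a$, which guarantees $2\bar\alpha(\bar a^2+\delta_1^2)+\gamma\leq\bar a^2$ --- propagates the bound to $t+1$; a separate check handles $t=1$. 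Only after this uniform validity is established does the paper run the scalar recursion $e_{t+1}^2\leq 2\bar\alpha e_t^2+2(\gamma+\delta_1^2)$ and the geometric-series limit, identical to your final two steps. So your overall architecture is right, and even the obstacle you flag (one $\beta$ serving all $K\geq\bar K$) is real but secondary; the essential missing idea is the inductive a priori bound on $\dist(\bomega_t,\fix(\mc T_t))$ via $\delta_2$, without which the contraction estimate cannot legitimately be invoked at every time step.
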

			\begin{proof}
			    See Appendix \ref{appendix:proof:prop:asymptotic_limit_online_hsdm}.
			\end{proof}
			%   \begin{remark}
			%  \emiliosay{to be written better}
			%  For any $x^{\star}\in \fix{\mc T}$, $\bar{\gamma}\geq  \max \left\{  \frac{\beta}{\tau(\beta)} 6(\xi+\beta U) \|\mc F(x^{\star})\|, \frac{\beta^2}{\tau(\beta)}  (2U+3\|\mc F(x^{\star})\|)\|\mc F(x^{\star})\|  \right\}$
			%  \end{remark}
			
% 			\begin{remark}
% 				\label{remark:control_asymptotic_error}
% 				The value 
% 				$\sqrt{  \frac{2\alpha \delta_1^2 + \bar{\gamma}}{1-2\alpha}}$ in \eqref{eq:prop:asymptotic_limit_online_hsdm:condition_a_bar} (and, consequently, $\bar{a}$) is decreasing with increasing values of $K$. Therefore, the condition in (\ref{eq:prop:asymptotic_limit_online_hsdm:condition_D}) can be satisfied for any $\bar{\xi}$ by choosing $\beta$ small enough and $K$ large enough.
% 			\end{remark} 
	
			\begin{remark}
				In {Theorem} \ref{prop:asymptotic_limit_online_hsdm},  $\gamma$ is derived from the additive error in \eqref{eq:contractive-like_bound}. Thus, to control the approximation error in (\ref{eq:asymptotic_tracking_error}), $\beta$ {must} be chosen small {so} to obtain small values of ${\gamma}$,  as pointed out in Remark \ref{remark:approximation_error_control_1}. However, the value $\tau(\beta)$ tends to $0$ for small values of $\beta$. This leads to the denominator in (\ref{eq:asymptotic_tracking_error}) to be  small for small stepsizes, unless the number of iterations $K$ is increased. Therefore, a smaller step size leads to a better approximation error only if it is shouldered by an increase in the number of iterations of the algorithm per time step. %\eod 
			\end{remark}
			
		{In summary,} we find that the restarted HSDM \eqref{eq:algorithm_online_HSDM} asymptotically tracks the solutions trajectory of the online fixed point selection problem in \eqref{eq:time_varying_VI}, with an asymptotic error  that can be controlled up to the variability of the problem $\delta_1$, via an appropriate choice of  $\beta$, $K$, {as shown in {Theorem} \ref{prop:asymptotic_limit_online_hsdm}. }
			Additionally, we {emphasize that the results hold for a more general problem, i.e., one could replace $\nabla \phi_t$ in Problem \eqref{eq:time_varying_VI} with a strongly monotone operator to obtain} an extension to the fixed-point selection problem in \cite{yamada05}. In the next section, we use the restarted HSDM to solve the online GNE tracking problem {in}  \eqref{eq:online_GNE_selection}. 
		\subsection{ Distributed {optimal} equilibrium tracking algorithm for monotone games}
		\label{sec:online_GNE_selection}
		We recall from Section \ref{sec:distributed_GNE_selection}  that the set of variational GNEs for a monotone game  can be characterized as the set of fixed points  of the operator $ \mc T_{\mathrm{FBF}}$ defined in (\ref{eq:T_FBF}). {Thus, for the time-varying game in \eqref{eq:time_varying_game} at time $t$, let $\mc T_{\mathrm{FBF},t}$ be the FBF operator defined as:
			\vspace{-5pt}
		\begin{align}
			\mc T_{\mathrm{FBF},t}( \bm \omega)
			& := ((\Id- \Psi^{-1}(\mc B_t + \mc C_t) )(\Id +{\Psi^{-1}\mc A_t})^{-1} \notag \\
			&\quad \cdot (\Id- \Psi^{-1}(\mc B_t + \mc C_t)) + \Psi^{-1}(\mc B_t+ \mc C_t))(\bm \omega),
			\label{eq:T_FBF_t}
		\end{align}
	where $\mc A_t$, $\mc B_t$, and $\mc C_t$ are those in Problem \eqref{eq:online_GNE_selection} and associated with the game in \eqref{eq:time_varying_game} at time $t$.}  
	 The solutions of the time-varying GNE selection problem in $\eqref{eq:online_GNE_selection}$ {are equivalent to the solutions of \eqref{eq:time_varying_VI},} with $\mc T_t=\mc T_{\mathrm{FBF},t}$ for all $t$. By Lemma \ref{le:FBF_q_nonexp}, $\mc T_{\mathrm{FBF},t}$, for each $t$, is a quasi-nonexpansive, quasi-shrinking operator. Therefore, the restarted HSDM algorithm in \eqref{eq:algorithm_online_HSDM} can be employed for tracking the solution trajectory, with an asymptotic tracking error given by {Theorem} \ref{prop:asymptotic_limit_online_hsdm}. We introduce {an assumption for} the GNE selection problem, which is equivalent to Assumption \ref{ass:var_lim}.\ref{ass:GNE_set_bounded_variations}:
		
		{\begin{assumption} 
				\label{ass:zero_set_bounded_variations}
				{There exists a scalar $\delta_2 \geq 0$ such that}
				$ \sup_{t\in\N} \dist(\omegaopt_t, \mathrm{zer}(\mc A_{t+1} + \mc B_{t+1} + \mc C_{t+1})) \leq \delta_2$.  %\eod
			\end{assumption} }
		\begin{corollary}
		\label{cor:online_FBF}
			{Let us consider the online GNE tracking problem {in} \eqref{eq:online_GNE_selection} for the time-varying game {in} \eqref{eq:time_varying_game} that satisfies Assumptions \ref{as:gen_game}, \ref{as:pseudograd}, \ref{as:Lips}, for each $t\in\N$.  Suppose that Assumptions \ref{as:phi_time_varying}, \ref{as:bounded_optimizer_sequence}, \ref{as:bounded_gradient}, \ref{ass:var_lim}, {\ref{ass:zero_set_bounded_variations}} hold. Let $\mc T_t = \mc T_{\mathrm{FBF},t}$ satisfy Assumption \ref{as:T_uniformly_quasi_shrinking}. Then, for any $\gamma>0$ there exist $\beta\in(0,\frac{2\sigma}{L_{\phi}^2})$ and $\bar{K}$ such that, for any $K\geq \bar{K}$, the asymptotic tracking {error} of Algorithm \ref{alg:Online_fbf} is given by \eqref{eq:asymptotic_tracking_error}.	}	\end{corollary}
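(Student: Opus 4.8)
The plan is to recognize this corollary as a direct specialization of Theorem \ref{prop:asymptotic_limit_online_hsdm} to the operator choice $\mc T_t = \mc T_{\mathrm{FBF},t}$, so that the entire task reduces to discharging Assumptions \ref{as:phi_time_varying}--\ref{ass:var_lim} in the present setting and then invoking that theorem. Since the bound \eqref{eq:asymptotic_tracking_error} to be established is verbatim the conclusion of Theorem \ref{prop:asymptotic_limit_online_hsdm}, no new estimate is required; the work is purely one of verifying hypotheses.

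First I would settle the operator-theoretic assumptions. Assumptions \ref{as:phi_time_varying}, \ref{as:bounded_optimizer_sequence}, \ref{as:bounded_gradient}, and \ref{as:T_uniformly_quasi_shrinking} are imported directly from the hypotheses of the corollary, so they hold by assumption. For Assumption \ref{as:quasi_nonexp_always}, quasi-nonexpansiveness of $\mc T_{\mathrm{FBF},t}$ at each $t$ is precisely Lemma \ref{le:FBF_q_nonexp}, which applies because every instance of \eqref{eq:time_varying_game} satisfies Assumptions \ref{as:gen_game}, \ref{as:pseudograd}, \ref{as:Lips} by hypothesis; nonemptiness of $\fix(\mc T_{\mathrm{FBF},t})$ follows from the identity $\fix(\mc T_{\mathrm{FBF},t}) = \zer(\mc A_t + \mc B_t + \mc C_t)$ of Lemma \ref{le:fix_FBF} together with the nonemptiness of the v-GNE set guaranteed under Assumptions \ref{as:gen_game}--\ref{as:pseudograd}. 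This same identity also certifies, through the equivalence \eqref{eq:equivalence_zer_fix}, that the solutions of \eqref{eq:online_GNE_selection} coincide with those of \eqref{eq:time_varying_VI} for $\mc T_t = \mc T_{\mathrm{FBF},t}$, so the tracked sequence $\omegaopt_t$ is unambiguous.

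The only hypothesis needing a short translation is Assumption \ref{ass:var_lim}. Part \ref{ass:variability_limited} is assumed directly. For part \ref{ass:GNE_set_bounded_variations}, I would substitute the fixed-point/zero-set identity $\fix(\mc T_{\mathrm{FBF},t+1}) = \zer(\mc A_{t+1} + \mc B_{t+1} + \mc C_{t+1})$, which turns $\dist(\omegaopt_t, \fix(\mc T_{t+1}))$ into $\dist(\omegaopt_t, \zer(\mc A_{t+1} + \mc B_{t+1} + \mc C_{t+1}))$; the latter is bounded by $\delta_2$ exactly by Assumption \ref{ass:zero_set_bounded_variations}, so \ref{ass:var_lim}.\ref{ass:GNE_set_bounded_variations} holds with the same constant. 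With all of Assumptions \ref{as:phi_time_varying}--\ref{ass:var_lim} now in force, and Algorithm \ref{alg:Online_fbf} being the distributed realization of the restarted HSDM \eqref{eq:algorithm_online_HSDM} with $\mc T_t = \mc T_{\mathrm{FBF},t}$, I would conclude by applying Theorem \ref{prop:asymptotic_limit_online_hsdm} to obtain, for any $\gamma>0$, the existence of $\beta \in (0, \tfrac{2\sigma}{L_{\phi}^2})$ and $\bar K$ such that \eqref{eq:asymptotic_tracking_error} holds for all $K \geq \bar K$. I do not expect a genuine obstacle: the only points demanding care are the bookkeeping of the zero-set/fixed-point equivalence and the confirmation that the agent-wise updates of Algorithm \ref{alg:Online_fbf} implement exactly the abstract recursion \eqref{eq:algorithm_online_HSDM}, both inherited from the static analysis of Section \ref{sec:distributed_GNE_selection}.
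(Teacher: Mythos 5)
Your proposal is correct and takes essentially the same route as the paper: identify Algorithm \ref{alg:Online_fbf} with the restarted HSDM \eqref{eq:algorithm_online_HSDM} for $\mc T_t=\mc T_{\mathrm{FBF},t}$, discharge Assumption \ref{as:quasi_nonexp_always} via Lemma \ref{le:FBF_q_nonexp} (with $\fix(\mc T_{\mathrm{FBF},t})=\zer(\mc A_t+\mc B_t+\mc C_t)$ from Lemma \ref{le:fix_FBF}), read Assumption \ref{ass:zero_set_bounded_variations} as Assumption \ref{ass:var_lim}.\ref{ass:GNE_set_bounded_variations} through that identity, and invoke Theorem \ref{prop:asymptotic_limit_online_hsdm}. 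If anything, your verification is slightly more explicit than the paper's on the nonemptiness of $\fix(\mc T_{\mathrm{FBF},t})$ and the equivalence of \eqref{eq:online_GNE_selection} with \eqref{eq:time_varying_VI}, while the paper additionally notes (via the boundedness of $(\bomega_t)_{t\in\N}$ from Theorem \ref{prop:asymptotic_limit_online_hsdm}) that $\mc T_{\mathrm{FBF},t}$ is quasi-shrinking on a compact convex set containing the iterates and the optimizers, a point your argument covers by the hypothesized Assumption \ref{as:T_uniformly_quasi_shrinking}.
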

\begin{proof}
	See Appendix \ref{app:pf:cor:online_FBF}.
\end{proof}

		\begin{remark}
		\label{rem:finite_set_operators}
		    In Corollary \ref{cor:online_FBF}, Assumption \ref{as:T_uniformly_quasi_shrinking} is satisfied for example when at every time step $t$, the feasible set of Problem \eqref{eq:online_GNE_selection} is selected among the GNE sets of {finitely many} games. That is, consider a finite set of operators
		    $$ \mc A_h, \mc B_h, \mc C_h, ~~ \text{with}~ h\in\{1, ..., H\}, $$
		    and for each $h$, the associated FBF operator $\mc T_{\mathrm{FBF}}^h$.
		    Defining a mapping from the time step $t$ to the indexes of the operators
		    $\eta:\N\to \{1, ..., H\}, $
		    Problem \eqref{eq:online_GNE_selection} is defined by
		    \begin{equation*}
	    	\omegaopt_t :=  \left\{
	    	\begin{array}{rl}
	    		&	\argmin_{\bomega} \phi_t(\bomega)\\
	    		&	\text{s.t.} ~~ \bomega \in \zer(\mc A_{\eta(t)} + \mc B_{\eta(t)} + \mc C_{\eta(t)}).
	    	\end{array}	
	    	\right .
	    \end{equation*}%
	    	   Let us denote with $D^h(\cdot)$ the shrinkage function of $\mc T_{\mathrm{FBF}}^h$. By Lemma \ref{prop:FBF}, $\mc T_{\mathrm{FBF}}^h$ is quasi-shrinking and, therefore, $D^h(\cdot)$ is positive semidefinite. Assumption \ref{as:T_uniformly_quasi_shrinking} is then satisfied with $D(r)=\min_{h\in\{1, ..., H\}} D^h(r) $.	This problem class includes the case when only the selection function ${\phi}$ varies, {i.e., $H=1$.} 
%\eod
		\end{remark} 
		
		\begin{algorithm}
		\caption{{Optimal v-GNE tracking} via FBF and HSDM}
		\label{alg:Online_fbf}			
	 	\textbf{Initialization.}  Set $x_{i,0} \in \mc X_i$, $\lambda_{i,0} \in \bb R_{\geq 0}^m$, and $\nu_{i,0} \in \bb R^m$, for all $i \in \mc I$. \\
		\textbf{Iteration at time} {$t\in\N_0$} \textbf{of} \textbf{each agent} $i \in \mc I$: 
		\begin{enumerate}
			\item 	 Receives $J_{i,t}(\cdot)$, $g_{i,t}(\cdot)$, {and} $\mc X_{i,t}(\cdot)$. 
			\item	 Assigns $\hat{x}_i^{(1)}\leftarrow  x_{i,t}$, $\hat{\lambda}_i^{(1)}\leftarrow  \lambda_{i, t}$, {and}  $\hat{\nu}_i^{(1)}\leftarrow \nu_{i,t}$.
			\item \textbf{For} $k=1,...,K$: 
		\end{enumerate}
				\begin{enumerate}[(i)]
				\item  Receives $\hat{x}_j^{(k)}$ from agent {$j \in \mc N_i^J$} and $\hat{\lambda}_j^{(k)}, \hat{\nu}_j^{(k)}$ from agent $j \in \mc N_i^\lambda$. 
				\item Updates:				
			{\small 				\begin{align*}
			\tilde x_i^{(k)} \hspace{-1pt} &= \hspace{-1pt} {\prox_{\ell_{i,t} + \iota_{\mc X_{i,t}}}^{\rho_i}}\hspace{-3pt}\left(\hat{x}_i^{(k)} \hspace{-2pt}- \hspace{-2pt} \rho_i (\nabla_{x_i}{f}_{i,t}(\hat{\bm x}^{(k)}) \hspace{-1pt} \right. \\ 
			&\left. \quad + \nabla g_{i,t}(\hat{x}_i^{(k)})^\top \hat{\lambda}_i^{(k)})\hspace{-1pt}\right), \\
			\tilde \lambda_i^{(k)} &= \proj_{\geq 0}\Big(\Big.\hat\lambda_i^{(k)} + \tau_i\Big(\Big. g_{i,t}(\hat x_i^{(k)}) \\ & \quad +\textstyle\sum_{j \in \mc N_i^\lambda} \left(\hat\nu_i^{(k)}- \hat\nu_j^{(k)} - \hat\lambda_i^{(k)} + \hat\lambda_j^{(k)} \right) \Big.\Big) \Big.\Big), \\
			\tilde \nu_i^{(k)} &= \hat\nu_i^{(k)} - \sigma_i\textstyle\sum_{j \in \mc N_i^\lambda} \left(\hat\lambda_i^{(k)}- \hat\lambda_j^{(k)} \right).
		\end{align*}
	}
				\item Receives $\tilde x_j^{(k)}$ from agent $j \in \mc N_i^J$ and $\tilde \lambda_j^{(k)}, \tilde \nu_j^{(k)}$ from agent $j \in \mc N_i^\lambda$. 
				\item Updates: %\emiliosay{Should $J$ become $f$?}
				{\small 	\begin{align*}
				\begin{split}
					\accentset{\circ} x_i^{(k)} = & \tilde x_j^{(k)} - \rho_i \Big( \nabla_{x_i}f_{i,t}( \tilde{{\bm{x}}}^{(k)}) -\nabla_{x_i}f_{i,t}(\hat{\bm{x}}^{(k)}) + \\ 
					 & \nabla g_{i,t}(\tilde x_i^{(k)})^\top \tilde{\lambda}_i^{(k)}- \nabla g_{i,t}(\hat{x}_i^{(k)})^\top \hat{\lambda}_i^{(k)} \Big),  \\	
					\accentset{\circ} \lambda_i^{(k)} = & \tilde \lambda_i^{(k)} + \tau_i \Big(\Big. g_{i,t}(\tilde{x}_i^{(k)}) - g_{i,t}(\hat{x}_i^{(k)}) + \\
					& \textstyle\sum_{j \in \mc N_i^\lambda}\Big(\Big. \tilde \nu_i^{(k)} - \hat{\nu}_i^{(k)} - \tilde \nu_j^{(k)} + \hat{\nu}_j^{(k)} \Big.\Big) - \\
					& \textstyle\sum_{j \in \mc N_i^\lambda}\Big(\Big. \tilde \lambda_i^{(k)} - \hat{\lambda}_i^{(k)} - \tilde \lambda_j^{(k)} + \hat{\lambda}_j^{(k)} \Big.\Big) \Big.\Big) ,\\
					\accentset{\circ} \nu_i^{(k)} = & \tilde \nu_i^{(k)} -  \sigma_i  \textstyle\sum_{j \in \mc N_i^\lambda} \left( \tilde \lambda_i^{(k)} - \hat\lambda_i^{(k)} - \tilde \lambda_j^{(k)} + \hat\lambda_j^{(k)} \right) . \notag
					\end{split}
					\end{align*} }
				\item Sends $(\accentset{\circ}{x}_i^{(k)}, \accentset{\circ}{\lambda}_i^{(k)}, \accentset{\circ}{\nu}_i^{(k)})$ to a coordinator and receives $\nabla \phi^t_{\bomega_i}( \accentset{\circ} x_i^{(k)}, \accentset{\circ}\lambda_i^{(k)}, \accentset{\circ}\nu_i^{(k)})$, where $\bomega_i=(x_i, \lambda_i, \nu_i)$.
				\item Updates:
				\begin{align*}	
				\begin{split}
				    &	(\hat{x}_i^{(k+1)}, \hat{\lambda}_i^{(k+1)}, \hat{\nu}_i^{(k+1)}) \\
				    	& =( \accentset{\circ} x_i^{(k)}, \accentset{\circ}\lambda_i^{(k)}, \accentset{\circ}\nu_i^{(k)}) 
					 - \beta \nabla \phi^t_{\bomega_{i}}( \accentset{\circ} x_i^{(k)}, \accentset{\circ}\lambda_i^{(k)}, \accentset{\circ}\nu_i^{(k)}). 
				\end{split}
				\end{align*}
				\end{enumerate}
				\quad\quad \textbf{End For}
				\begin{enumerate}
				\item [4)] Assigns ${x}_{i,t}\leftarrow  \hat{x}_i^{(K+1)}$, ${\lambda}_{i,t}\leftarrow  \hat{\lambda}_i^{(K+1)}$, ${\nu}_{i,t}\leftarrow \hat{\nu}_i^{(K+1)}$.
\end{enumerate}				 

		\end{algorithm}
		
	\end{subsection}

	\begin{remark}
	   The result of this section holds similarly if we substitute the FBF operator with the pFB operator {in \eqref{eq:T_pFB},} which is quasi-shrinking (see the proof of {Theorem} \ref{prop:pFB}){, for cocoercive games {with affine coupling constraints.}} %\eod
	\end{remark}
	%%%%%%%%%%%%%%%%%%%%%%%%%%%%%%%%%%%%%%%%%%%%%%%%%%%%%%
	\section{Illustrative example}
	\label{sec:illustrative_ex}
	{We consider  a peer-to-peer electricity market clearing problem with operational constraints of the electrical network,  adapted from \cite{belgioioso21}. We assume that each bus of a distribution network consists of one agent that} has access to either a storage unit or a dispatchable generation unit. {Each agent $i \in \mc I$} has decision authority on the power generated $p^{\text{g}}_{i,h}$, the power bought from the main grid $p^{\text{mg}}_{i,h}$, the power drawn from the storage unit $p^{\text{st}}_{i,h}$, the power traded with the trading partners $p^{\mathrm{tr}}_{(i,j),h},~j\in\mc N_i$ and the phase at the bus $\theta_{i,h}$ over the horizon $h=1,...,H$. {Let us} denote $\bm x_{i,h}=\mathrm{col}(p^{\text{g}}_{i,h},p^{\text{mg}}_{i,h}, p^{\text{st}}_{i,h}, 
	\{p^{\mathrm{tr}}_{(i,j),h}\}_{j\in\mc N_i}, \theta_{i,h})$, for all $i\in\mc I$ and $h=1,...,H$,  and denote $\bm x_i := \mathrm{col}(\{\bm x_{i,h}\}_{h=1,...,H}) $, $\boldsymbol{x}:=\mathrm{col}(\{\bm x_i\}_{i\in\mc I})$. Each agent aims at minimizing {its local cost function \cite[Eq. (17)]{belgioioso21}:} 
    \begin{align}
    \begin{split}
    \label{eq:simulations_cost}
	 J_{i}(\bm x) = & \sum_{h=1}^H f^{\text{g}}_{i,h}(p^{\text{g}}_{i,h})+ f^{\text{tr}}_{i,h}(\{p^{\text{tr}}_{(i,j),h}\}_{j\in\mc N_i}) \\
	 &+ f^{\text{mg}}_{i,h}(p^{\text{mg}}_{i,h}, p^{\text{mg}}_{-i,h}),
	 \end{split}
	 \end{align}
	 where  $f^{\text{tr}}_{i,h}$ encodes the cost or revenue of the trading with other agents and  $f^{\text{mg}}_{i,h}$ encodes the cost of purchasing energy from the main grid as in \cite[{Eq. (11)}]{belgioioso21}, while $f^{\text{g}}_{i,h}$ is a linear function which encodes the cost of power generation. The local feasible sets $\mathcal{X}_i, ~i=1,...,N$ include the satisfaction of the power demand at the bus, as well as the operating constraints of the generators and storage units. The shared constraints are of the form
	 $g(\bm x)\leq \boldsymbol{0}_{n_c},$ with $g$  affine. They include  the operating limits of the grid, the trading reciprocity $\{p^{\mathrm{tr}}_{(i,j),h}=-p^{\mathrm{tr}}_{(j,i),h}, ~\forall~ i\in\mc N,~\forall~j\in\mc N_i\}$ and the linearized power flow equations with DC approximation $
	 \{ p^{\text{g}}_{i,h}+p^{\text{st}}_{i,h}+ \iota^{\text{mg}}_{i} \sum_{j\in\mc N} p^{\text{mg}}_{j,h} + \sum_{j\in\mc B_i} B_{ij}(\theta_{i,h}-\theta_{j,h})=0\}$,  where $\iota^{\text{mg}}_{i}\in\{0,1\}$ is $1$ if and only if $i$ is connected to the main grid, {$\mc B_i$ is the set of buses that are connected to bus $i$ on the electric grid} and $B$ is the susceptance matrix. {We note that the game satisfies Assumptions \ref{as:gen_game} and \ref{as:pseudograd}.}   
	  {In addition, we consider the IEEE {13-bus} distribution feeder for our numerical simulations, performed in \textsc{Matlab}.\\
	 We first simulate the  day-ahead market clearing (with 24 hourly time steps) via the standard FBF-based algorithm, which can obtain a v-GNE, and Algorithm \ref{alg:fbf}, which solves the optimal selection problem of this game.
	  Specifically, we consider  the GNE selection function: 
	 \begin{align}
	 \begin{split}
	     & \phi(\bm x) = \sum_{h=1}^H \{  \| \bm p^{\text{g}}_h - \bar{\bm p}^{\text{g}} \|_{Q_{\text{d}}} + \| \bm p^{\text{mg}}_h \|_{Q_{\text{mg}}} + \| \btheta_h - \bar{\btheta} \|_{Q_{\theta}} \\
	     &   +\| G\btheta_h \|_{Q_{\text{pf}}}  +\|  \bm p^{\text{tr}}_h \|_{Q_{\text{tr}}} + \| \bm p^{\text{st}}_h \|_{Q_{\text{st}}} \} + \| \bm\lambda \|_{Q_\lambda} +  \| \bm\nu \|_{Q_\nu},
	     \end{split}
     \label{eq:phi_p2p}
	 \end{align}
	 where we denoted in bold the column stack of the respective variables for each agent and the matrices $Q_{\star}$ are diagonal positive definite. We choose  $\bar{\bm p}^{\text{g}}$ to be the column vector of the maximum generation production for each agent, in order to maximize the renewable energy production, and $\bar{\btheta}$ to be a vector which elements are all equal to the phase of the node connected to the main grid, in order to reduce the grid imbalances. The cost factors {related} to ${\bm p}^{\text{mg}}, {\bm p}^{\text{st}}, {\bm p}^{\text{tr}} $ aim at reducing the burden on the transmission grid, increasing the lifespan of the storage units and reducing the load of the trading platform, respectively. The terms in $\bm\lambda$ and $\bm \nu$ act as regularization of the dual variables. Finally, $G$ is a matrix that maps the phase of the nodes to the power flowing through the lines. In this test, we aim at maximizing the lifespan of the grid lines by setting the non-zero elements of $Q_{\text{pf}}$ to be large. {The solution obtained by Algorithm \ref{alg:fbf} and that of the standard FBF are depicted} in Figure \ref{fig:day_ahead_market}. As expected, {since the v-GNE computed by Algorithm \ref{alg:fbf} minimizes the selection function \eqref{eq:phi_p2p},}  {it has} a lower load on the power lines {than that of the standard FBF. } \begin{figure}
	 	\centering
	 	\includegraphics[width=\columnwidth]{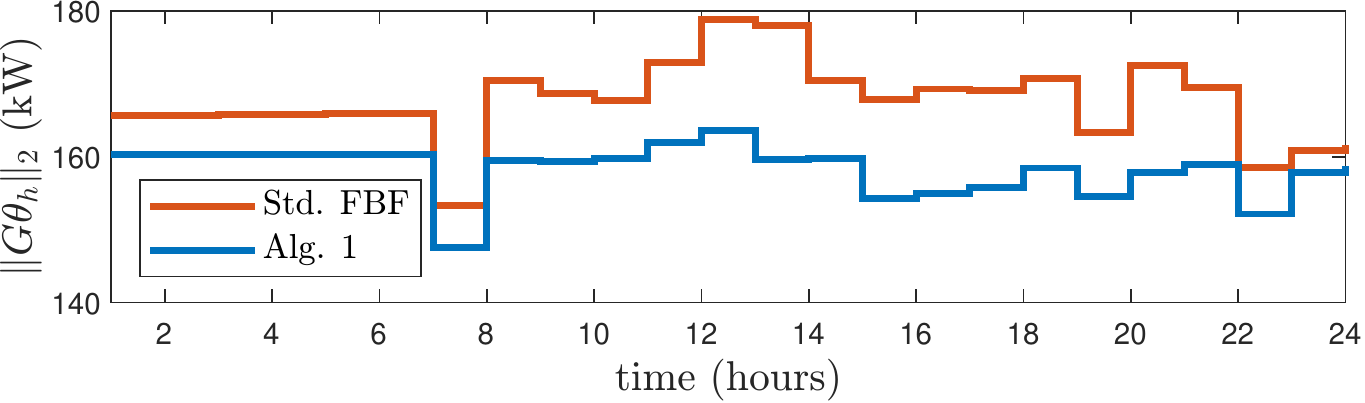}
	 	\caption{Power flowing through the lines (day-ahead market).}
	 	\label{fig:day_ahead_market}
	 \end{figure} \\
	 Secondly, we test Algorithm \ref{alg:Online_fbf} on a real-time market scenario, {formulated as a time-varying game.} The horizon is set to 2 hours, with a sampling time of $15$ minutes. The simulation is run over a 24 hour span for different values, thus resulting in 12 consecutive instances of GNE selection problems. Let us index these problems with $t=1,...,12$.  The cost {function} of each agent is given by \eqref{eq:simulations_cost}, with an additional term $f^{\text{st}}_{i, t}(\{p^{\text{st}}_{i,h}\}_{h=1,...,H})=\|{x}_{\text{ch},t}^i-\bar{x}_{\text{ch},t}^i\|^2_2$. This term penalizes the deviations of the storage units charge state at the end of the $t$-th horizon from the charge state planned in the day-ahead market clearance $\bar{x}_{\text{ch},t}^i$. The charge state at the end of the $t$-th horizon is given by ${x}_{\text{ch},t}^i = {x}_{\text{ch},t-1}^i-\sum_{h=1}^Hp^{\text{st}}_{i,h}$, and the initial state ${x}^i_{\text{ch},0}$ is known. {Because of the variability along the day of the power demand, the local power balance constraint defined in \cite[Eq. (6)]{belgioioso21} depends on $t$. } The {cost} functions and constraints {of the game} are therefore time-varying, with $t$ representing the time index. \ {Furthermore,} in this {scenario,} we aim at computing {a v-GNE that minimizes} the power flowing on the line connecting buses 632 and 671 during peak hours. {Thus, we consider \eqref{eq:phi_p2p} as the selection function at each $t$ where} the element of $Q_{\text{pf}}$  {related} to this line is {time-varying, i.e., it is} set high between {6AM} and {4PM.} { We note that this setup falls into the case considered in Remark \ref{rem:finite_set_operators}, whilst $\{\phi_t\}_{t=1,...,12}$ satisfies Assumption \ref{as:phi_time_varying}.} {We run the simulation for different values of the parameters $K$ and $\beta$ and  Figure \ref{fig:performance_tables} illustrates the results. An increasing $K$ %\wicaksay{or $\beta$} \emiliosay{no, increasing $\beta$ increases the residual of the point at which we converge} 
	 	results in a diminishing residual, that is, a better convergence to the GNE set, and a diminishing power load on the penalized line during peak hours, {as} expected from the imposed penalty term in the selection function. A diminishing $\beta$ implies a slower reduction of the cost function, which results in a higher cost for small values of $K$, as shown in Figure \ref{fig:table_power}. Figure \ref{fig:real_time_market} depicts some particular trajectories of the power flowing through the penalized line with $\beta=5\cdot 10^{-4}$. }
 
	 	 \begin{figure}
	     \centering
	     \subfigure[Average residual $\|T_{\text{FBF}}(\bomega_t)-\bomega_t\|$.]{
	     \includegraphics[width=0.9\columnwidth]{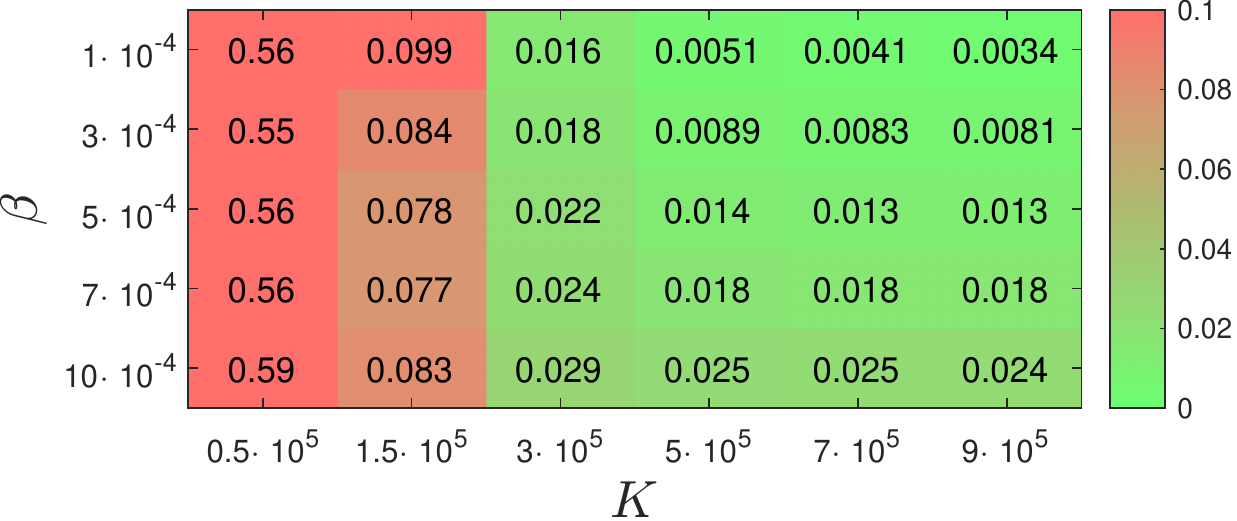}\label{fig:table_residuals}}
	     \subfigure[Average power flow on the line connecting buses 632 and 671 during peak hours (kW).]{
	     \includegraphics[width=0.9\columnwidth]{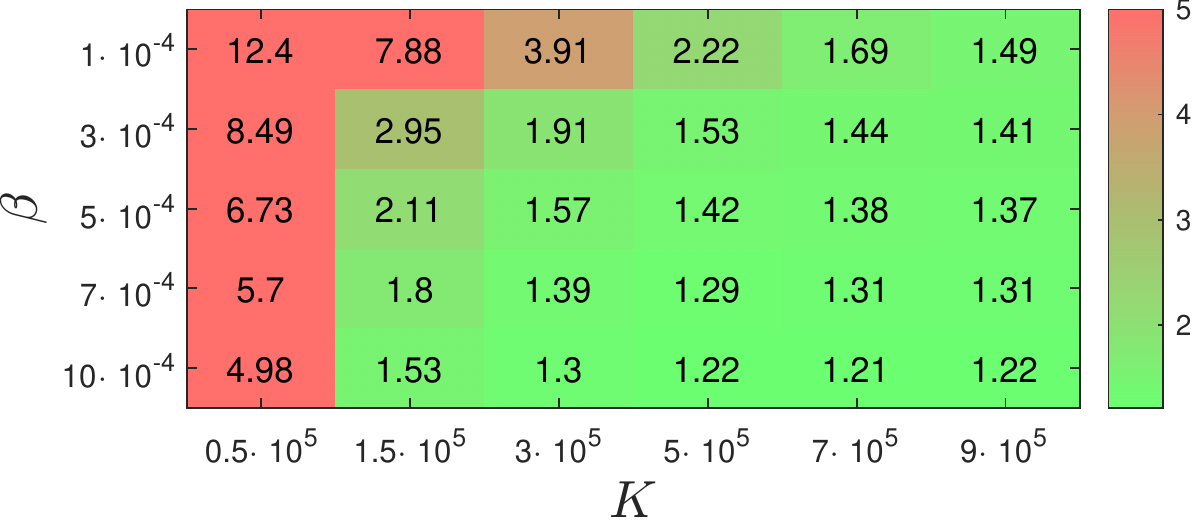}\label{fig:table_power}}
\caption{Algorithm performance for several restarted HSDM parameters. }
\label{fig:performance_tables}
	 \end{figure}
	 	 \begin{figure}
	     \centering
	     \includegraphics[width=\columnwidth]{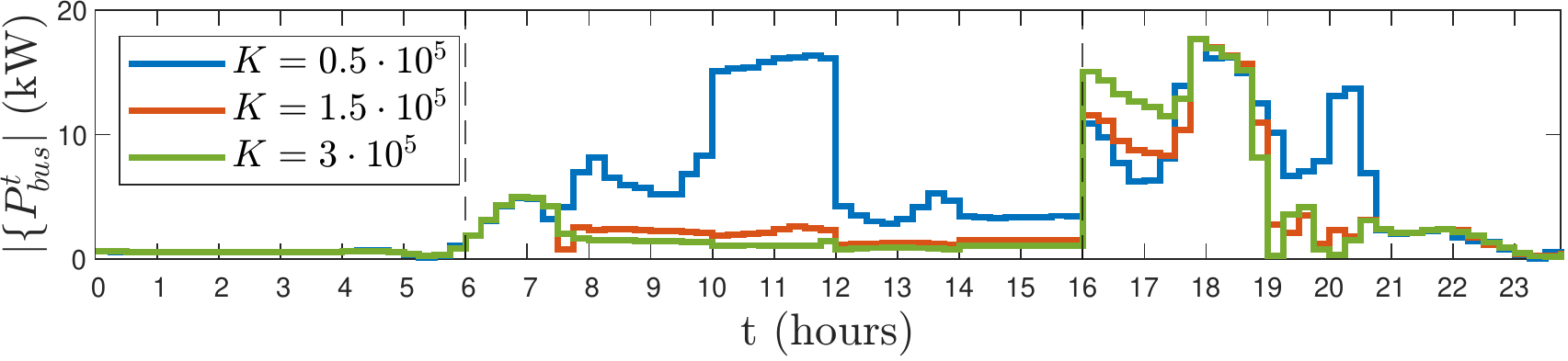}
	     \caption{Power flowing on the line connecting buses 632 and 671 (real-time market) with ${\beta=5\cdot 10^{-4}}.$}
	     \label{fig:real_time_market}
	 \end{figure}
  
		%%%%%%%%%%%%%%%%%%%%%%%%%%%%%%%%%%%%%%%%%%%%%%%%%%%%%%

	\section{Conclusion}
	%\color{black} 
{The optimal generalized Nash equilibrium selection problem in monotone games {can be solved} {distributively} by combining the hybrid steepest descent method with an appropriate fixed-point operator.  The key requirement to guarantee convergence  to the set of optimal generalized Nash equilibria is the  quasi-shrinking property, which {holds true} for {certain} fixed-point operators. The hybrid steepest descent method can also be modified to track a time-varying optimal generalized Nash equilibria. The resulting approach is suitable for real-time decision {making} {in multi-agent dynamic environments.}}

%\comment{Future work goes here}
	
	%%%%%%%%%%%%%%%%%%%%%%%%%%%%%%%%%%%%%%%%%%%%%%%%%%%%%%
	%\vspace{-8pt}
	\appendices
	\section{Proof of Lemma \ref{le:demi_closedness_quasi_shrinking}}
	\label{appendix:proof:le:demi_closedness_quasi_shrinking}
\color{black} Let us proceed by contradiction. We assume that there exists $r>0$ such that $D_{\Psi}(r)=0$. Then, by the definition of $D_{\Psi}(\cdot)$ {in \eqref{eq:D},} there exists a sequence $(\bomega_k)_{k\in\N}\in (\fix(\mc T)_{\geq r}^{\Psi})\bigcap C$ such that 
$$\lim_{k\xrightarrow[]{}\infty} \dist_{\Psi}(\bomega_k, \fix(\mc T)) - \dist_{\Psi}(\mc T (\bomega_k), \fix(\mc T)) =0. $$
By the definition of projection, {we have}
\begin{align}
%\begin{split}
    \dist_{\Psi}(\mc T (\bomega_k), \fix(\mc T)) &= \| \mc T (\bomega_k) - \proj^{\Psi}_{\fix(\mc T)}(\mc T(\bomega_k)) \|_\Psi 
    \notag\\ & \leq \| \mc T (\bomega_k) - \proj^{\Psi}_{\fix(\mc T)}(\bomega_k) \|_{\Psi}. \label{eq:proof:projection_bound}
%\end{split}
\end{align} 
By the quasi-nonexpansiveness of $\mc T$ and the latter inequality,
\begin{align*}
    \begin{split}
        0&\leq  \underbrace{\| \bomega_k - \proj^{\Psi}_{\fix(\mc T)} (\bomega_k)\|_{\Psi}}_{=\dist_{\Psi}(\bomega_k, \fix(\mc T))} -   \| \mc T (\bomega_k) - \proj^{\Psi}_{\fix(\mc T)}(\bomega_k) \|_{\Psi} \\
        &\leq \dist_{\Psi}(\bomega_k, \fix(\mc T)) - \dist_{\Psi}(\mc T (\bomega_k), \fix(\mc T)) \xrightarrow[]{k\xrightarrow{}\infty} 0.
    \end{split}
\end{align*} 
It follows that
$$ \lim_{k\xrightarrow{}\infty} \hspace{-2pt} \| \bomega_k - \proj^{\Psi}_{\fix(\hspace{-1pt} \mc T\hspace{-1pt} )} (\bomega_k)\|_{\Psi}\hspace{-1pt} - \hspace{-1pt}  \| \mc T (\bomega_k) - \proj^{\Psi}_{\fix(\hspace{-1pt} \mc T\hspace{-1pt} )}(\bomega_k) \|_{\Psi}\hspace{-2pt}  = \hspace{-2pt} 0. $$
{By} \eqref{eq:condition_quasi_shrinking}, {we then have that} 
\begin{align*}
    \begin{split}
         & \| \bomega_k  - \mc T_2(\bomega_k) \|^2_{\Psi}\leq  \\ 
         &  \tfrac{1}{\gamma}( \| \bomega_k \hspace{-1pt} - \hspace{-1pt}  \proj^{\Psi}_{\fix(\mc T)}(\bomega_k) \|^2_{\Psi} \hspace{-1pt} -\hspace{-1pt}  \| \mc T(\bomega_k)-\proj^{\Psi}_{\fix(\mc T)}(\bomega_k) \|^2_{\Psi} ) \hspace{-1pt}  \leq  \\
 & \tfrac{2d}{\gamma} ( \| \bomega_k \hspace{-1pt} - \hspace{-1pt}  \proj^{\Psi}_{\fix(\mc T)}(\bomega_k) \|_{\Psi} \hspace{-1pt} - \hspace{-1pt}  \| \mc T(\bomega_k) \hspace{-1pt} - \hspace{-1pt} \proj^{\Psi}_{\fix(\mc T)}(\bomega_k) \|_{\Psi} ),
    \end{split}
\end{align*}
where the latter inequality follows from $a^2-b^2=(a-b)(a+b)$ for $a,b\in\R$ and where we substituted  $d:=\sup_{\bomega\in C} \| \bomega_k - \bomega\|_{\Psi}$, which is finite {since the set} $C$ {is} compact. We conclude that
\begin{equation}
\label{eq:proof:convergence_residual}
    \lim_{k\xrightarrow{}\infty} \| \bomega_k - \mc T_2(\bomega_k) \|^2_{\Psi} =0.
\end{equation}
By the Bolzano-Weierstrass theorem and the boundedness of $\bomega_k$, there exists a convergent subsequence $(\bomega_{k_l})_{l\in\N}$ with accumulation point $\bomega^{\infty}$.  By \eqref{eq:proof:convergence_residual}, $  \lim_{l\xrightarrow{}\infty}  \mc T_2(\bomega_{k_l}) = \bomega^{\infty}.$

By the demiclosedness of ${\Id}- \mc T_2$ and by $\fix(\mc T_2) \subset \fix(\mc T)$, 
$ \bomega^{\infty} - \mc T_2(\bomega^{\infty}) =0 \Rightarrow \bomega^{\infty}\in \fix(\mc T_2) \Rightarrow \bomega^{\infty}\in  \fix(\mc T).$ However, since $ (\fix(\mc T)_{\geq r}^{\Psi})\bigcap C$ is a closed set, then $\bomega^{\infty}\in \fix(\mc T)_{\geq r}^{\Psi}$, which is in contradiction with $\bomega^{\infty}\in  \fix(\mc T)$. 
\qedd 
\iffalse
	\subsection{Proof of Lemma 2}
%	\begin{proof} 
		Since $\mc T$ is averaged, it is also strongly quasi-nonexpansive.	Furthermore, since it is nonexpansive, it is continuous and it follows immediately from \cite[Fact 1.19]{bauschke11} that $\Id - \mc T$ is closed at $0$ \cite[Definition 2.10]{Cegielski13}, i.e., 
		\begin{align*}
			\begin{split}
				& \Big(\lim_{k \to \infty} \omega^{(k)} = \omega \in \bb R^n \text{ and } \lim_{k \to \infty}\|\mc T(\omega) - \omega \| = 0 \Big) \\
				& \Rightarrow \omega \in \fix(\mc T).
			\end{split}
		\end{align*}
		Hence, $\mc T$ is quasi-shrinking by \cite[Prop. 2.11]{Cegielski13}. \qedd
		\fi

	\section{Properties of operators $\mc A$, $\mc B$, and $\mc C$ in \eqref{eq:op_A}--\eqref{eq:op_C}}
	\label{ap:prop_ABC}
	\begin{lemma}
		\label{le:max_mon_ABC}
		Let Assumption \ref{as:gen_game} hold. Then, the operators $\mc A$, $\mc B$, and $\mc C$ in \eqref{eq:op_A}--\eqref{eq:op_C} are maximally monotone. Thus, $\mc A+\mc B+\mc C$ is also maximally monotone. \eod 
	\end{lemma}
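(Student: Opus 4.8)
The plan is to establish maximal monotonicity of $\mc A$, $\mc B$, and $\mc C$ separately and then conclude for the sum via a domain-qualified sum theorem. For $\mc A$ I would exploit the block-separable structure in \eqref{eq:op_A}. Each local block $\nc_{\mc X_i}+\partial\ell_i$ coincides with $\partial(\iota_{\mc X_i}+\ell_i)$: since $\ell_i$ is convex, lower semicontinuous and finite on $\bb R^{n_i}$, the subdifferential sum rule applies (\cite[Cor.~16.48]{bauschke11}), so this block is the subdifferential of a proper lsc convex function and hence maximally monotone (\cite[Thm.~20.25]{bauschke11}). The factor $\nc_{\bb R^{Nm}_{\geq0}}$ is the normal cone of a nonempty closed convex set (thus maximally monotone) and the $\bm\nu$-block $\{\0\}$ is trivially maximally monotone. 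As a Cartesian product of maximally monotone operators, $\mc A$ is maximally monotone (\cite[Prop.~20.23]{bauschke11}).

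For $\mc B$ in \eqref{eq:op_B} I would note that it is single-valued with full domain $\bb R^{n_\omega}$. Its $\bm x$-block $F$ is continuous by Assumption \ref{as:gen_game} and monotone by Assumption \ref{as:pseudograd}; its $\bm\lambda$-block $(L\otimes I_m)\bm\lambda$ is the gradient of the convex quadratic $\tfrac12\langle\bm\lambda,(L\otimes I_m)\bm\lambda\rangle$, since the Laplacian of a connected undirected graph satisfies $L\otimes I_m\succeq0$, and the $\bm\nu$-block is zero. Hence $\mc B$ is a continuous, monotone, single-valued operator on all of $\bb R^{n_\omega}$, which is maximally monotone (\cite[Cor.~20.28]{bauschke11}).

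The crux is $\mc C$ in \eqref{eq:op_C}. I would first check monotonicity directly: in $\langle\mc C(\bm\omega)-\mc C(\bm\omega'),\bm\omega-\bm\omega'\rangle$ the two Laplacian cross-terms between the $\bm\lambda$- and $\bm\nu$-blocks cancel because $L\otimes I_m$ is symmetric, leaving $\sum_{i}\left[\langle\nabla g_i(x_i)\lambda_i-\nabla g_i(x_i')\lambda_i',x_i-x_i'\rangle-\langle g_i(x_i)-g_i(x_i'),\lambda_i-\lambda_i'\rangle\right]$. Writing $a_{i,k}:=\langle\nabla g_{i,k}(x_i),x_i-x_i'\rangle$, $b_{i,k}:=\langle\nabla g_{i,k}(x_i'),x_i-x_i'\rangle$ and $c_{i,k}:=g_{i,k}(x_i)-g_{i,k}(x_i')$, the gradient inequality for the convex components $g_{i,k}$ gives $a_{i,k}\geq c_{i,k}\geq b_{i,k}$, so the above equals $\sum_{i,k}\left[\lambda_{i,k}(a_{i,k}-c_{i,k})+\lambda_{i,k}'(c_{i,k}-b_{i,k})\right]\geq0$ because every multiplier is nonnegative on $\dom\mc C=\bb R^{n}\times\bb R^{Nm}_{\geq0}\times\bb R^{Nm}$. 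For maximality I would recognize $\mc C$ as the saddle-point (primal--dual) operator of the closed proper convex--concave function $\Phi(\bm x,\bm\nu;\bm\lambda)=\sum_i\langle\lambda_i,g_i(x_i)\rangle+\langle\bm\lambda,(L\otimes I_m)\bm\nu\rangle$ --- convex in $(\bm x,\bm\nu)$ for $\bm\lambda\succeq0$ and affine (hence concave) in $\bm\lambda$ --- and invoke the classical result that the monotone operator associated with such a saddle function is maximally monotone.

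Finally, since $\mc B$ has full domain, $\operatorname{int}\dom(\mc B+\mc C)=\bb R^{n_\omega}$ meets $\dom\mc A$ (e.g.\ at any $\bm x\in\bm{\mc X}$ with $\bm\lambda\succ0$), so the domain qualification of the sum theorem (\cite[Cor.~25.5]{bauschke11}) yields maximal monotonicity of $\mc A+\mc B+\mc C$. I expect the maximality of $\mc C$ --- not its monotonicity, which is the short computation above --- to be the main obstacle: for nonlinear $g_i$ the operator is monotone only on the non-open set $\{\bm\lambda\succeq0\}$, so continuity alone does not suffice, and one must verify that the boundary behaviour on $\{\lambda_{i,k}=0\}$ is correctly captured, with the normal-cone contribution accounted for consistently between $\mc A$ and $\mc C$. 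Under Assumption \ref{as:coup_const_lin} this difficulty disappears, as $\mc C$ then reduces to a bounded skew-symmetric linear map, which is automatically maximally monotone.
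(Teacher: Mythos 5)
Your handling of $\mc A$ and $\mc B$ coincides with the paper's (subdifferential/product structure for $\mc A$ via \cite[Thm.~20.25, Prop.~20.23]{bauschke11}; continuity plus monotonicity with full domain for $\mc B$ — note that, like the paper's own proof, you need Assumption~\ref{as:pseudograd} for monotonicity of $F$ even though the lemma statement lists only Assumption~\ref{as:gen_game}). For $\mc C$ you take a genuinely different route. The paper splits $\mc C=\mc C_1+\mc C_2$, with $\mc C_2$ the skew-symmetric Laplacian part (maximal by \cite[Ex.~20.35]{bauschke11}) and $\mc C_1$ the nonlinear multiplier part, proves monotonicity of $\mc C_1$ by the same convexity inequality you use, asserts maximality of $\mc C_1$ ``by continuity,'' and then sums via \cite[Cor.~25.5]{bauschke11}; you instead verify monotonicity of the whole $\mc C$ in one computation (correctly observing that the Laplacian cross-terms cancel by symmetry) and obtain maximality from Rockafellar's theorem on saddle operators of the closed convex--concave function $\Phi(\bm x,\bm\nu;\bm\lambda)$. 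Your route is more careful at exactly the point you flag: since the $g_i$ may be nonlinear, $\mc C$ is monotone only on the closed, non-open set $\{\bm\lambda\geq 0\}$, so the continuity criterion \cite[Cor.~20.28]{bauschke11} that the paper implicitly invokes does not apply verbatim. Indeed, the single-valued $\mc C$ on that closed domain is \emph{not} maximal as stated: augmenting its $\bm\lambda$-block by $\nc_{\bb R^{Nm}_{\geq 0}}(\bm\lambda)$ at boundary points gives a proper monotone extension, because the extra term pairs as $\langle \mu,\bm\lambda'\rangle\geq 0$ against any $\bm\lambda'\geq 0$. The saddle operator of the closed saddle function is precisely this augmented operator, and since $\mc A$ already carries $\nc_{\bb R^{Nm}_{\geq 0}}$ (and normal cones are idempotent under addition), the sum $\mc A+\mc B+\mc C$ is unaffected — so your route delivers the lemma's operative conclusion (maximality of the sum) cleanly, while the intermediate claim that $\mc C$ alone is maximal should be read as holding for the augmented saddle version, or under affine $g_i$ (Assumption~\ref{as:coup_const_lin}) where $\mc C$ becomes skew linear with full domain, as in the references the paper builds on. One small slip in your final step: $\operatorname{int}\dom(\mc B+\mc C)$ equals $\bb R^{n}\times\bb R^{Nm}_{>0}\times\bb R^{Nm}$, not all of $\bb R^{n_\omega}$; but the qualification point you exhibit ($\bm x\in\bm{\mc X}$, $\bm\lambda\succ 0$) is exactly what the finite-dimensional (relative-)interior condition in \cite[Cor.~25.5]{bauschke11} requires, so the conclusion stands.
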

	\begin{proof}
		  By Assumption \ref{as:gen_game}, $\nc_{{\mc X_i}}$ and  $\partial \ell_i$ are maximally monotone \cite[Thm. 20.25 \& Example 20.26]{bauschke11}. {The operator $\mc A$ is thus maximally monotone by \cite[ Prop. 20.23 \& Cor. 25.5]{bauschke11}. }    {The operator $F$ is maximally monotone}  by Assumption \ref{as:pseudograd} and by continuity in Assumption \ref{as:Lips}. {Meanwhile} $L$ is a linear positive semidefinite operator {and, therefore, it is maximally monotone; thus, the operator $\mc B$ is maximally monotone.} {We can write} $\mc C = \mc C_1 + \mc C_2$, where $\mc C_1 = \col(\{\langle \nabla_{x_i} g_i(x_i), \lambda_i \rangle \}_{i \in \mc I}, -\{g_i(x_i)\}_{i \in \mc I}, \0_{Nm})$ and $\mc C_2 = \col(\0_{n}, -(L \otimes I_m)\bm \nu, (L\otimes I_m)\bm \lambda )$. The operator $\mc C_1$ is maximally monotone by continuity and by noting that, for any $\bomega, \bomega' \in \bb R^{n}\times\bb R^{Nm}_{\geq 0} \times \bb R^{Nm}$,
		\begin{align*}
			& \langle \mc C_1(\bm \omega) - \mc C_1(\bm \omega'), \bm \omega - \bm \omega' \rangle \\
			&=\textstyle\sum_{i \in \mc I} \langle g_i(x_i')- g_i(x_i)-\nabla_{x_i} g_i(x_i)^\top (x_i' - x_i), \lambda_i \rangle\\
			&\quad + \textstyle\sum_{i \in \mc I} \langle g_i(x_i)- g_i(x_i')-\nabla_{x_i} g_i(x_i')^\top (x_i - x_i'), \lambda_i' \rangle \geq 0,%
		\end{align*}
		where the inequality follows by the convexity of $g_i$. 
		As $\mc C_2$ is a linear skew-symmetric operator, it is maximally monotone \cite[Example 20.35]{bauschke11}. By invoking \cite[Cor. 25.5]{bauschke11}, {the result follows.}
	\end{proof}
	
	\begin{lemma}
		\label{le:lips_BC}
		Let Assumptions \ref{as:gen_game} and \ref{as:Lips} hold. Then the operators $\mc B$, $\mc C$,and $\mc B + \mc C$, defined in \eqref{eq:op_B}--\eqref{eq:op_C},  are Lipschitz continuous. %Hence,  is also Lipschitz continuous. 
		%\eod 
	\end{lemma}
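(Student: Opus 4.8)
The plan is to establish the claim blockwise, exploiting two structural facts: a column-stacked operator is Lipschitz as soon as each of its blocks is, and a finite sum of Lipschitz operators is Lipschitz. It therefore suffices to bound the increments of the individual blocks of $\mc B$ and $\mc C$ in \eqref{eq:op_B}--\eqref{eq:op_C} and then recombine. The inputs I would draw on are, from Assumption \ref{as:Lips}, the $L_F$-Lipschitz continuity of $F$ and the fact that each $\nabla g_i$ is bounded (say by $M_g$) and $L_g$-Lipschitz, and, from Assumption \ref{as:gen_game}, the compactness of $\bm{\mc X}$.

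First I would dispatch $\mc B$, which is the straightforward part. Its first block $F(\bm x)$ is $L_F$-Lipschitz by Assumption \ref{as:Lips}; its second block $(L\otimes I_m)\bm\lambda$ is a bounded linear map, hence Lipschitz with constant $\|L\otimes I_m\|=\|L\|$; its third block is constant. Summing the squared increments blockwise gives $\|\mc B(\bm\omega)-\mc B(\bm\omega')\|^2\le L_F^2\|\bm x-\bm x'\|^2+\|L\|^2\|\bm\lambda-\bm\lambda'\|^2$, so $\mc B$ is Lipschitz with constant $\max\{L_F,\|L\|\}$.

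For $\mc C$ I would reuse the splitting $\mc C=\mc C_1+\mc C_2$ from the proof of Lemma \ref{le:max_mon_ABC}, where $\mc C_2=\col(\0_n,-(L\otimes I_m)\bm\nu,(L\otimes I_m)\bm\lambda)$ is linear and hence Lipschitz. In $\mc C_1=\col((\nabla g_i(x_i)^\top\lambda_i)_{i\in\mc I},-(g_i(x_i))_{i\in\mc I},\0)$, the block $-(g_i(x_i))_{i\in\mc I}$ is Lipschitz with constant $M_g$, since a bounded gradient makes each $g_i$ globally Lipschitz via the mean value inequality. The decisive term, and the step I expect to be the main obstacle, is the bilinear block $\nabla g_i(x_i)^\top\lambda_i$: adding and subtracting $\nabla g_i(x_i)^\top\lambda_i'$ splits its increment as $\nabla g_i(x_i)^\top(\lambda_i-\lambda_i')+(\nabla g_i(x_i)-\nabla g_i(x_i'))^\top\lambda_i'$, which I would bound by $M_g\|\lambda_i-\lambda_i'\|$ using the gradient bound and by $L_g\|\lambda_i'\|\,\|x_i-x_i'\|$ using the $L_g$-Lipschitz continuity of $\nabla g_i$, respectively.

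What remains is to control the coefficient $\|\lambda_i'\|$ multiplying the $x$-increment. Here I would invoke that the operator is evaluated on the relevant bounded region of its domain: by Assumption \ref{as:gen_game} the primal variables lie in the compact set $\bm{\mc X}$, and the FBF operator together with the HSDM are analysed on a bounded closed convex set $C$ (cf.\ Assumption \ref{as:quasi_shrinking} and the boundedness of $\zer(\mc A+\mc B+\mc C)$ noted in Section \ref{sec:gnep}), on which the multipliers admit a uniform bound $M_\lambda$. On this domain the bilinear block is Lipschitz with constant $\max\{M_g,L_g M_\lambda\}$, so that $\mc C_1$, hence $\mc C$, and finally $\mc B+\mc C$ as a sum are Lipschitz continuous, yielding the constant $L_B$ used in Assumption \ref{as:stepsizeFBF}. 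I expect this uniform multiplier bound to be the only delicate ingredient, which is exactly where the compactness of $\bm{\mc X}$ and the boundedness of the operative domain become essential.
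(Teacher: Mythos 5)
Your proof is correct and follows essentially the same route as the paper's: $\mc B$ is dispatched blockwise ($L_F$-Lipschitz pseudogradient plus a linear Laplacian block), $\mc C$ is split as $\mc C=\mc C_1+\mc C_2$ with $\mc C_2$ linear, and the bilinear block of $\mc C_1$ is estimated by the identical add-and-subtract of $\nabla g_i(x_i)^\top\lambda_i'$, using the gradient bound $b_{\nabla g_i}$ for the $\lambda$-increment, the bounded gradient (hence global Lipschitzness of $g_i$) for the middle block, and the $L_g$-Lipschitz continuity of $\nabla g_i$ for the $x$-increment.

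The one point of divergence is the bound on $\|\lambda_i'\|$, which you rightly flagged as the only delicate ingredient. The paper obtains a uniform bound $b_{\lambda_i}$ by citing \cite[Prop.~3.3]{auslender00}, i.e., boundedness of the dual solution set, whereas you restrict the estimate to the bounded operative domain (compact $\bm{\mc X}$ and the bounded closed convex set on which the FBF/HSDM analysis takes place). Your reading is arguably the more careful one: the bilinear map $(x_i,\lambda_i)\mapsto\nabla g_i(x_i)^\top\lambda_i$ is \emph{not} globally Lipschitz when $g_i$ is nonlinear, so the paper's quantifier ``for any $\bm\omega,\bm\omega'\in\bb R^{n+2Nm}$'' is only legitimate once $\bm\lambda'$ is confined to a bounded set --- which is what the citation of \cite{auslender00} implicitly supplies and what your domain restriction makes explicit. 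The trade-off is that your version delivers Lipschitz continuity only on bounded sets rather than the nominally global statement, but this is all that is needed downstream (a finite $L_B$ for Assumption \ref{as:stepsizeFBF} on the region where the iterates live), and in the affine case of Assumption \ref{as:coup_const_lin} the distinction vanishes since $\nabla g_i$ is constant. No gap.
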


	\begin{proof}
		Due to Assumption \ref{as:Lips}, the operator $\mc B$ is $L_F$-Lipschitz continuous. Lipschitz continuity of $\mc C$ can be evaluated as follows. Similarly {to} the proof of Lemma \ref{le:max_mon_ABC}, let us split $\mc C = \mc C_1 + \mc C_2$. {The operator $\mc C_2$ is Lipschitz continuous by linearity, while Lipschitz continuity of $\mc C_1$ is shown  as follows.} Let us denote the bound of $\nabla_{x_i} g_i(x_i)$ by $b_{\nabla g_i}$, i.e., $\|\nabla_{x_i} g_i(x_i)\| \leq b_{\nabla g_i}$ (c.f. Assumption \ref{as:Lips})  and the bound of $\lambda_i$ by $b_{\lambda_i}$, for all $i \in \mc I$, which exists due to \cite[Prop. 3.3]{auslender00}. For any $\bm \omega, \bm \omega' \in \bb R^{n+2Nm}$,
		{\small 
		\begin{align*}
			&\| \mc C_1(\bm \omega) - \mc C_1(\bm \omega') \|^2 \\
			&\overset{\{1\}}{\leq} \textstyle\sum_{i \in \mc I}\Big(\Big.  2 \| \nabla_{x_i} g_i(x_i)^\top (\lambda_i -\lambda_i')\|^2 + \|g_i(x_i)-g_i(x_i') \|^2 \\
			&\qquad + 2\| (\nabla_{x_i} g_i(x_i)- \nabla_{x_i} g_i(x_i'))^\top \lambda_i'\|^2\Big.\Big)\\
			&\overset{\{2\}}{\leq} \textstyle\sum_{i \in \mc I}\Big(\Big. 2 \| \nabla_{x_i} g_i(x_i)^\top \|^2 \|\lambda_i -\lambda_i' \|^2  + b_{\nabla g_i}^2 \| x_i - x_i'\|^2  \\
			&\qquad + 2\|\lambda_i' \|^2 \| \nabla_{x_i} g_i(x_i)- \nabla_{x_i} g_i(x_i')\|^2\Big.\Big)\\\
			&\overset{\{3\}}{\leq} \hspace{-1pt}  \textstyle\sum_{i \in \mc I} \hspace{-1pt} \Big(2 b_{\nabla g_i}^2 \|\lambda_i \hspace{-1pt} - \hspace{-1pt} \lambda_i' \|^2  \hspace{-1pt} + \hspace{-1pt} (2b_{\lambda_i}^2 L_{g}^2 + b_{\nabla g_i}^2) \| x_i \hspace{-1pt} - \hspace{-1pt} x_i'\|^2  \\
			&\leq \textstyle\sum_{i \in \mc I} {\max(2b_{\nabla g_i}^2, 2b_{\lambda_i}^2 L_{g}^2 + b_{\nabla g_i}^2)} \| \bm \omega_i - \bm \omega_i'\|^2,
		\end{align*}}%
		%\emiliosay{I changed the expression for $L_{\mc C_1}$ as I think there was an error}\wicaksay{Ok.}
		{where $\{1\}$ follows by adding and subtracting the term $ \nabla_{x_i} g_i(x_i)^\top \lambda_i'$ and by the bound $\|a+b\|^2\leq 2\|a\|^2 + 2\|b\|^2$; $\{2\}$ is obtained by the Cauchy-Schwartz inequality and by the fact that $g_i$ is Lipschitz since it has a bounded gradient; $\{3\}$ is obtained by the  Lipschitz continuity of $\nabla_{x_i} g_i$. Hence, $\mc C_1$ is $L_{\mc C_1}$-Lipschitz continuous, where $L_{\mc C_1} = \max_{i \in \mc I} (\max(2b_{\nabla g_i}, \sqrt{2b_{\lambda_i}^2 L_{g}^2 + b_{\nabla g_i}^2)})$.
{Since} the sum of Lipschitz continuous operators is Lipschitz continuous, the result follows.}
%		\iffalse
%		Furthermore, Lipschitz continuity of the operator $\mc C_2$ can be shown similarly as that of the operator  $\mc T_5$ in \cite[Lem. 2]{ananduta21}. Let us denote the Lipschitz constants of $\mc C_2$ by  $L_{\mc C_2} := 2 \max_{i \in \mc I} |\mc N_i|$\cite[Lem. 2]{ananduta21}. Hence, we can show that $\mc C$ is Lipschitz continuous, i.e., 
%		\begin{align*}
%			&\|\mc C(\bm \omega) - \mc C(\bm \omega) \| \\
%			&= \| \mc C_1(\bm \omega)+ \mc C_2(\bm \omega) - \mc C_1(\bm \omega') - \mc C_2(\bm \omega')\|\\
%			&\leq \| \mc C_1(\bm \omega)- \mc C_1(\bm \omega') \| + \|\mc C_2(\bm \omega) -\mc C_2(\bm \omega')\| \\
%			&\leq \max(L_{\mc C_1}, L_{\mc C_2}) \| \bm \omega - \bm \omega'\|.
%		\end{align*}
%		Similarly, we can also show that $\mc B + \mc C$ is $L_B$-Lipschitz with $L_B = \max(L_F, L_{\mc C})$, where $L_{\mc C} = \max(L_{\mc C_1}, L_{\mc C_2})$. 
%\fi
	\end{proof}

	\begin{lemma}
		Let $(\bm x^\star,\bm \lambda^\star)$ be a solution to the monotone inclusion in \eqref{eq:mon_incl}. Then, $(\bm x^\star,\bm \lambda^\star)$ is also a solution to the monotone inclusion in \eqref{eq:KKT}. \eod 
	\end{lemma}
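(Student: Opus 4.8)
The plan is to expand the inclusion $\0 \in (\mc A + \mc B + \mc C)(\bm \omega^\star)$, where $\bm \omega^\star = (\bm x^\star, \bm \lambda^\star, \bm \nu^\star)$ solves \eqref{eq:mon_incl}, into its three block rows corresponding to $\bm x$, $\bm \lambda$, and $\bm \nu$ via the definitions in \eqref{eq:op_A}--\eqref{eq:op_C}, and then to match these rows against the KKT system \eqref{eq:KKT}. First I would read off the $\bm \nu$-block, which receives a contribution only through $\mc C$ and hence reads $(L \otimes I_m)\bm \lambda^\star = \0$. Since the communication graph $\mc G^\lambda$ is connected, its Laplacian satisfies $\nulls(L) = \range(\1_N)$, so this equation forces $\lambda_i^\star = \lambda_j^\star$ for all $i,j \in \mc I$; denote the common value by $\lambda^\star$. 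This consensus step is precisely what allows us to pass from the per-agent dual variables of \eqref{eq:mon_incl} to the single multiplier appearing in \eqref{eq:KKT}.

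Next I would treat the $\bm x$-block. Collecting the contributions of $\mc A$, $\mc B$, and $\mc C$ yields, for each $i \in \mc I$,
\[
\0 \in (\nc_{\mc X_i} + \partial \ell_i)(x_i^\star) + \nabla_{x_i} f_i(\bm x^\star) + \langle \nabla g_i(x_i^\star), \lambda_i^\star \rangle.
\]
Recalling from \eqref{eq:cost_f} that $J_i = \ell_i + f_i$, so that $\partial_{x_i} J_i = \partial \ell_i + \nabla_{x_i} f_i$, and substituting $\lambda_i^\star = \lambda^\star$ from the consensus step, this is exactly \eqref{eq:KKT_1}.

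The main work, and the expected obstacle, lies in the $\bm \lambda$-block. Using the consensus relation $(L \otimes I_m)\bm \lambda^\star = \0$ to annihilate the contribution of $\mc B$, and noting that $\nc_{\bb R^{Nm}_{\geq 0}}(\bm \lambda^\star) = \prod_{i \in \mc I} \nc_{\bb R^m_{\geq 0}}(\lambda_i^\star)$, the block reads componentwise as $g_i(x_i^\star) + \sum_{j \in \mc I} L_{ij}\nu_j^\star \in \nc_{\bb R^m_{\geq 0}}(\lambda^\star)$ for each $i$. To recover the aggregate condition \eqref{eq:KKT_2}, I would sum these $N$ inclusions over $i$: the auxiliary consensus terms cancel because the undirected Laplacian has zero column sums, $\sum_{i \in \mc I} L_{ij} = 0$, giving $\sum_{i \in \mc I}\sum_{j \in \mc I} L_{ij}\nu_j^\star = \0$; meanwhile the sum of the left-hand sides stays in $\nc_{\bb R^m_{\geq 0}}(\lambda^\star)$ because a normal cone is a convex cone and is therefore closed under addition. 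This produces $\sum_{j \in \mc I} g_j(x_j^\star) \in \nc_{\bb R^m_{\geq 0}}(\lambda^\star)$, which is precisely \eqref{eq:KKT_2}. The delicate point is exactly this aggregation, since it relies simultaneously on the structural zero-column-sum property of $L$ to eliminate the consensus variables $\bm \nu^\star$ and on the conic structure of the normal cone to merge the per-agent inclusions into a single one. Having matched both \eqref{eq:KKT_1} and \eqref{eq:KKT_2}, the pair $(\bm x^\star, \lambda^\star)$ solves \eqref{eq:KKT}.
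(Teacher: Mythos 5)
Your proposal is correct and follows essentially the same route as the paper, which simply defers to the proof of \cite[Thm.~2(i)]{yi19}: that argument is exactly your block-wise expansion of $\0 \in (\mc A+\mc B+\mc C)(\bm\omega^\star)$, with dual consensus forced by $\nulls(L) = \range(\1_N)$ and the $\bm\lambda$-block aggregated by left-multiplying with $\1_N^\top \otimes I_m$ (your componentwise summation, using the zero column sums of the symmetric Laplacian and closure of the normal cone under addition). No gaps; your treatment of the delicate aggregation step matches the cited proof.
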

	\begin{proof}
		The proof follows that of \cite[Thm. 2(i)]{yi19}. % {and hence it is omitted.} %\emiliosay{ The proof from Pavel is for strongly convex, linearly constrained games. Do we lose anything in omitting this proof?} \wicaksay{no, we don't, and we can omit the proof.}
	\end{proof}
	
	\section{{Results and Proofs of Section \ref{sec:op_gne}}}
	%\subsection{Equivalence of $\zer(\mc A+\mc B+\mc C)$ and $\fix(\mc T_{\mathrm{FBF}})$}
	\label{ap:le:fix_FBF}
	
	{The following lemma shows the equivalence between $\zer(\mc A+\mc B+\mc C)$ and $\fix(\mc T_{\mathrm{FBF}})$.}
	
		\begin{lemma}
			\label{le:fix_FBF} 
	Let Assumptions \ref{as:gen_game}, \ref{as:pseudograd}, \ref{as:Lips}, and \ref{as:stepsizeFBF} hold. Furthermore, let $\mc T_{\mathrm{FBF}}$ be defined by \eqref{eq:T_FBF} while $\mc A$, $\mc B$, and $\mc C$ be defined in \eqref{eq:op_A}--\eqref{eq:op_C}. Then, $\fix(\mc T_{\mathrm{FBF}}) = \zer(\mc A + \mc B + \mc C)$. \eod 
\end{lemma}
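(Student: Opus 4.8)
The plan is to prove the set equality $\fix(\mc T_{\mathrm{FBF}}) = \zer(\mc A + \mc B + \mc C)$ by a direct double-inclusion argument, following the standard fixed-point characterization of Tseng's splitting. First I would introduce the shorthand $\mc D := \mc B + \mc C$, which is monotone by Lemma \ref{le:max_mon_ABC} and $L_B$-Lipschitz continuous by Lemma \ref{le:lips_BC} under Assumptions \ref{as:gen_game} and \ref{as:Lips}. Since $\mc A$ is maximally monotone (Lemma \ref{le:max_mon_ABC}), the resolvent $J := (\Id + \Psi^{-1}\mc A)^{-1}$ is single-valued and has full domain in the $\Psi$-weighted metric. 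Introducing the intermediate point $\bm p := J(\bomega - \Psi^{-1}\mc D(\bomega))$, I would first rewrite the operator in \eqref{eq:T_FBF} compactly as $\mc T_{\mathrm{FBF}}(\bomega) = \bm p + \Psi^{-1}(\mc D(\bomega) - \mc D(\bm p))$, which matches \eqref{eq:T_FBF} by inspection after collecting the forward and backward terms.

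For the inclusion $\zer(\mc A + \mc B + \mc C) \subseteq \fix(\mc T_{\mathrm{FBF}})$, I would take $\bomega \in \zer(\mc A + \mc D)$, so that $-\mc D(\bomega) \in \mc A(\bomega)$. Applying $\Psi^{-1}$ and adding $\bomega$ gives $\bomega - \Psi^{-1}\mc D(\bomega) \in (\Id + \Psi^{-1}\mc A)(\bomega)$, and since $J$ is single-valued this yields $\bm p = J(\bomega - \Psi^{-1}\mc D(\bomega)) = \bomega$. Substituting $\bm p = \bomega$ into the compact form collapses the last two terms and leaves $\mc T_{\mathrm{FBF}}(\bomega) = \bomega$, hence $\bomega \in \fix(\mc T_{\mathrm{FBF}})$.

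For the reverse inclusion, let $\bomega \in \fix(\mc T_{\mathrm{FBF}})$. The fixed-point equation $\bomega = \bm p + \Psi^{-1}(\mc D(\bomega) - \mc D(\bm p))$ rearranges to $\Psi(\bomega - \bm p) = \mc D(\bomega) - \mc D(\bm p)$, while the defining inclusion of the resolvent reads $\Psi(\bomega - \Psi^{-1}\mc D(\bomega) - \bm p) \in \mc A(\bm p)$, i.e. $\Psi(\bomega - \bm p) - \mc D(\bomega) \in \mc A(\bm p)$. Substituting the former identity into the latter eliminates $\mc D(\bomega)$ and leaves $-\mc D(\bm p) \in \mc A(\bm p)$, that is, $\bm p \in \zer(\mc A + \mc D)$.

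The remaining, and I expect main, obstacle is to upgrade $\bm p \in \zer(\mc A + \mc D)$ to $\bomega \in \zer(\mc A + \mc D)$ by showing $\bomega = \bm p$. This is where the step-size rule enters: from the rearranged fixed-point identity we have $\bomega - \bm p = \Psi^{-1}(\mc D(\bomega) - \mc D(\bm p))$, i.e. $(\Id - \Psi^{-1}\mc D)(\bomega) = (\Id - \Psi^{-1}\mc D)(\bm p)$. Combining the $L_B$-Lipschitz continuity of $\mc D$ with the bound $|\Psi^{-1}| \leq 1/L_B$ of Assumption \ref{as:stepsizeFBF}, the operator $\Psi^{-1}\mc D$ is a (strict, on the interior of the admissible step-size range) contraction in the $\Psi$-induced norm, so that $\|\bomega - \bm p\|_\Psi = \|\Psi^{-1}\mc D(\bomega) - \Psi^{-1}\mc D(\bm p)\|_\Psi$ forces $\bomega = \bm p$; thus $\Id - \Psi^{-1}\mc D$ is injective. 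Hence $\bomega = \bm p \in \zer(\mc A + \mc D)$, and combining the two inclusions gives $\fix(\mc T_{\mathrm{FBF}}) = \zer(\mc A + \mc B + \mc C)$, as claimed.
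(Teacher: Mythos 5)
Your proposal is, in substance, the paper's own proof written out in full: the paper disposes of this lemma with ``analogous to \cite[Prop.~1]{franci20}'', and that proposition is exactly your double-inclusion argument --- the resolvent characterization of zeros for $\zer(\mc A+\mc B+\mc C)\subseteq\fix(\mc T_{\mathrm{FBF}})$, then, for the converse, extracting $-\,(\mc B+\mc C)(\bm p)\in\mc A(\bm p)$ from the fixed-point identity and upgrading $\bm p\in\zer$ to $\bomega=\bm p$ via injectivity of $\Id-\Psi^{-1}(\mc B+\mc C)$, which is Tseng's original scheme \cite{tseng00}. Your compact rewriting $\mc T_{\mathrm{FBF}}(\bomega)=\bm p+\Psi^{-1}\big(\mc D(\bomega)-\mc D(\bm p)\big)$ and both substitution steps are correct, as is the appeal to Lemmas \ref{le:max_mon_ABC} and \ref{le:lips_BC}.

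The one point that deserves more than your parenthetical hedge is the final contraction step. Assumption \ref{as:stepsizeFBF} as literally written permits equality, $|\Psi^{-1}|=1/L_B$, and at equality your argument is vacuous: it yields only $\|\bomega-\bm p\|_\Psi\leq\|\bomega-\bm p\|_\Psi$, and monotonicity of $\mc D:=\mc B+\mc C$ does not rescue injectivity, since the fixed-point identity $\Psi(\bomega-\bm p)=\mc D(\bomega)-\mc D(\bm p)$ gives $\langle\mc D(\bomega)-\mc D(\bm p),\bomega-\bm p\rangle=\|\bomega-\bm p\|_\Psi^2\geq 0$, which is no contradiction. The failure at the boundary is genuine at the level of the abstract operator classes invoked: take $\mc A\equiv 0$, $\mc D=L_B\,\Id$ (monotone and $L_B$-Lipschitz), and $\Psi=L_B I$; then $\bm p=\bomega-\Psi^{-1}\mc D(\bomega)=0$ for every $\bomega$, so $\mc T_{\mathrm{FBF}}=\Id$ and $\fix(\mc T_{\mathrm{FBF}})=\R^{n_\omega}\supsetneq\{0\}=\zer(\mc A+\mc D)$. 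So the strict inequality $|\Psi^{-1}|<1/L_B$ must be an explicit hypothesis of your last step, not an aside --- this is the condition actually imposed in \cite{tseng00} and \cite{franci20}, and it is also what makes the residual term in the quasi-nonexpansiveness inequality \eqref{eq:q_nonexp} strictly positive, which the paper relies on elsewhere. With the strict reading of Assumption \ref{as:stepsizeFBF}, your proof is complete and correct.
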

\begin{proof}
%\emiliosay{Franci assumes linear constraints, but the proof for this result is pretty general and it probably works in our case.}\wicaksay{Ok.}
{The proof is analogous to {that} of \cite[Prop. 1]{franci20}.}
\iffalse
	The operators $\mc A$, $\mc B$, and $\mc C$  are maximally monotone by Lemma \ref{le:max_mon_ABC}. Moreover, $(\mc B+\mc C)$ is Lipschitz continuous by Lemma \ref{le:lips_BC} and $\Psi$ is positive definite by Assumption \ref{as:stepsizeFBF}. Let $\bm \omega$ be a solution to the monotone inclusion in \eqref{eq:mon_incl}, i.e., $\bm \omega \in \zer(\mc A+ \mc B + \mc C)$. Hence,
	%\begin{align*}
		$0 \in (\mc A+ \mc B + \mc C)(\bm \omega)
		\Leftrightarrow \bm \omega = (\Id + \Psi \mc A)^{-1}(\Id - \Psi (\mc B+\mc C))(\bm \omega)$, since $(\mc B+\mc C)$ is Lipschitz. Moreover, for the same reason, $(\Id - \Psi (\mc B+\mc C))$ is also a function. Therefore, by premutiplying both side of the preceding equality with $(\Id - \Psi (\mc B+\mc C))$ and rearranging the terms we obtain $\bm \omega = ((\Id - \Psi (\mc B+\mc C))(\Id + \Psi \mc A)^{-1}(\Id - \Psi (\mc B+\mc C))+\Psi (\mc B+\mc C)) (\bm \omega) = \mc T_{\mathrm{FBF}}(\bm \omega)$.
%	\end{align*} 
\fi
\end{proof}

{The following lemma is used to prove {the} quasi-shrinking property of the FBF operator \eqref{eq:T_FBF}.}
{
\begin{lemma}
\label{lem:demiclosed_FBF}
    Let $\mc A$ and $\mc B$  maximally monotone and $\mc B$ continuous. Let
    $$ \mc T={(\Id +\Psi^{-1}\mc A)^{-1}}(\Id-\Psi^{-1}\mc B). $$ 
    Then $\Id-\mc T$ is demiclosed at $0$. \eod 
\end{lemma}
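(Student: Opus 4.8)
The plan is to unfold the definition of demiclosedness and to reduce the statement to two standard facts: the closedness of the graph of a maximally monotone operator, and the fixed-point characterization of the forward--backward operator. Recall that demiclosedness of $\Id-\mc T$ at $0$ demands that, for every sequence $(\bomega_k)_{k\in\N}$ with $\bomega_k\to\bomega^{\infty}$ and $\bomega_k-\mc T(\bomega_k)\to 0$, one obtains $\bomega^{\infty}-\mc T(\bomega^{\infty})=0$, i.e. $\bomega^{\infty}\in\fix(\mc T)$. I would begin by setting $z_k:=\mc T(\bomega_k)$; since $\bomega_k-z_k\to 0$ and $\bomega_k\to\bomega^{\infty}$, it follows at once that $z_k\to\bomega^{\infty}$ as well.

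The key algebraic step is to rewrite the backward step as a monotone inclusion. From $z_k=(\Id+\Psi^{-1}\mc A)^{-1}(\bomega_k-\Psi^{-1}\mc B(\bomega_k))$, applying $(\Id+\Psi^{-1}\mc A)$, using that $\Psi\succ 0$, and rearranging yields
\begin{equation*}
\Psi(\bomega_k-z_k)-\mc B(\bomega_k)\in\mc A(z_k).
\end{equation*}
I would then pass to the limit: on the left-hand side, $\bomega_k-z_k\to 0$ and, by the continuity of $\mc B$, $\mc B(\bomega_k)\to\mc B(\bomega^{\infty})$, so the whole left-hand side converges to $-\mc B(\bomega^{\infty})$; on the right-hand side the argument satisfies $z_k\to\bomega^{\infty}$.

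The main obstacle --- and the only point where maximal monotonicity of $\mc A$ is genuinely needed --- is the passage to the limit inside the graph of $\mc A$. Here I would invoke the sequential closedness of the graph of a maximally monotone operator (in the finite-dimensional setting this is strong--strong closedness, e.g. \cite[Prop. 20.38]{bauschke11}): from $\big(z_k,\ \Psi(\bomega_k-z_k)-\mc B(\bomega_k)\big)\in\gph(\mc A)$ with $z_k\to\bomega^{\infty}$ and $\Psi(\bomega_k-z_k)-\mc B(\bomega_k)\to-\mc B(\bomega^{\infty})$, one concludes $-\mc B(\bomega^{\infty})\in\mc A(\bomega^{\infty})$, equivalently $0\in(\mc A+\mc B)(\bomega^{\infty})$.

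Finally, I would close the argument with the elementary equivalence $\fix(\mc T)=\zer(\mc A+\mc B)$, obtained by the same resolvent manipulation evaluated at a fixed point: $\bomega=\mc T(\bomega)$ holds if and only if $-\mc B(\bomega)\in\mc A(\bomega)$. Hence $0\in(\mc A+\mc B)(\bomega^{\infty})$ gives $\bomega^{\infty}\in\fix(\mc T)$, which is precisely $(\Id-\mc T)(\bomega^{\infty})=0$, establishing demiclosedness at $0$. I expect continuity of $\mc B$ and invertibility of $\Psi$ to render every limit step routine, so the entire difficulty is concentrated in correctly invoking the graph closedness of $\mc A$.
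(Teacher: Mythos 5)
Your proposal is correct and follows essentially the same route as the paper's own proof: the identical resolvent unfolding $\Psi(\bomega_k-\mc T(\bomega_k))-\mc B(\bomega_k)\in\mc A(\mc T(\bomega_k))$, the limit passage via continuity of $\mc B$ together with the closedness of $\gph(\mc A)$ for a maximally monotone operator, and the conclusion through $\fix(\mc T)=\zer(\mc A+\mc B)$. The only cosmetic difference is that the paper invokes \cite[Prop. 26.1(iv)]{bauschke11} for this last fixed-point equivalence, whereas you re-derive it directly by the same resolvent manipulation.
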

\begin{proof}
Let us consider a sequence $(v_k)_{k\in\N}$ such that
$$ \lim_{k\xrightarrow{}\infty} v_k = v , \qquad \lim_{k\xrightarrow{}\infty} \mc (\Id-\mc T) (v_k) =0.$$
We want to prove that $v- \mc T (v) = 0 $ or, equivalently, $v\in \fix(\mc T)$. Let us define 
$ u_k:=\mc (\Id-\mc T) (v_k). $ 
Then, 
\begin{align*}
    \begin{split}
        &v_k-u_k = (\Id +\Psi^{-1}\mc A)^{-1}(\Id-  {\Psi^{-1}} \mc B) (v_k) \\
        &\Leftrightarrow (\Id- {\Psi^{-1}} \mc B) (v_k)  \in (\Id+{\Psi^{-1}}\mc A)( v_k-u_k)  \\
        & \Leftrightarrow v_k - {\Psi^{-1}}\mc B (v_k) +u_k - v_k \in {\Psi^{-1}}\mc A( v_k-u_k)  \\
        & \Leftrightarrow  \mc - \mc B (v_k) + \Psi u_k   \in  \mc A( v_k-u_k). 
    \end{split}
\end{align*}

By the continuity of $\mc B$ and \cite[Fact 1.19]{bauschke11}, we conclude that  $ \lim_{k\xrightarrow{}\infty}  - {\mc B (v_k)} + \Psi u_k = -\mc B(v).  $
By \cite[Prop. 20.37]{bauschke11},  {$\gph(\mc A)$} is closed. Therefore, since $\lim_{k\xrightarrow{}\infty} v_k-u_k = v$, we conclude that 
$ -\mc B( v) \in  \mc A( v) . $ 
By \cite[Prop. 26.1(iv)]{bauschke11}, we obtain $v\in \fix(\mc T)$.
\end{proof}

	\subsection{Proof of Lemma \ref{le:FBF_q_nonexp}}
	\label{pf:le:FBF_q_nonexp}
	%\begin{proof}
		{By Lemmas \ref{le:max_mon_ABC} and \ref{le:lips_BC}, the operator $\mc A$ is maximally monotone whereas the operator $\mc B + \mc C$ is maximally monotone and Lipschitz continuous with Lipschitz constant denoted by $L_B$.} Then, \cite[Cor. 1]{franci20} shows that $\mc T_{\mathrm{FBF}}$ is quasi-nonexpansive when the step size matrix $\Psi$, satisfy Assumption \ref{as:stepsizeFBF}. Specifically, it holds that \cite[Prop. 2]{franci20}:
		\begin{equation}
			\|\mc T_{\mathrm{FBF}} (\bm \omega) - \bm \omega^\star  \|_{\Psi}^2 \leq \|\bm \omega - \bm \omega^\star  \|_{\Psi}^2 - (L_B/\mu_{\min}(\Psi))^2 \|\tilde{\bm \omega} - \bm \omega \|_{\Psi}^2,
			\label{eq:q_nonexp}
		\end{equation}
		where $\bm \omega^\star \in \fix(\mc T_{\mathrm{FBF}})$, $\mu_{\min}(\Psi)$ is the smallest eigenvalue of $\Psi$ and $\tilde{\bm \omega} =(\Id +\Psi^{-1}\mc A)^{-1}(\Id- \Psi^{-1}(\mc B + \mc C))(\bm \omega)$. { Finally, we prove that {$\mc T_{\mathrm{FBF}}$} is quasi-shrinking by invoking Lemma \ref{le:demi_closedness_quasi_shrinking}. Specifically, we choose $\mc T_2=(\Id +\Psi^{-1}\mc A)^{-1}(\Id- \Psi^{-1}(\mc B + \mc C))$. As shown in the proof of {\cite[Prop. 1]{franci20},} $\fix(\mc T_2)=\zer(\mc A+\mc B + \mc C) = \fix(\mc T_{\mathrm{FBF}})$. Moreover, Lemma \ref{lem:demiclosed_FBF} shows that $\Id - \mc T_2$ is demiclosed at 0 and \eqref{eq:q_nonexp} is indeed the inequality in \eqref{eq:condition_quasi_shrinking} for $\mc T_{\mathrm{FBF}}$. }
		\qedd
		\begin{remark}
			Although \cite[Cor. 1]{franci20} shows quasi-nonexpansiveness of $\mc T_{\mathrm{FBF}}$ and \cite[Prop. 2]{franci20} shows the inequality in \eqref{eq:q_nonexp} for Problem \eqref{eq:gen_game} with a linear coupling constraint, these results also holds for nonlinear functions $g_i(x_i)$, for all $i \in \mc I$, as long as Assumption \ref{as:Lips} holds, since the operator $\mc C$ in \eqref{eq:op_C} remains Lipschitz continuous.% under this assumption. 
		\end{remark}

	\subsection{Proof of Lemma \ref{le:bounded_HSDM}}
	\label{pf:le:bounded_HSDM}
		Firstly, we show that, for an arbitrary $\bm{\omega}^\star \in \fix(\mc T_{\mathrm{FBF}})$, 
		\begin{align}
			\|\mc T_{\mathrm{FBF}} (\bm \omega) - \bm \omega^\star  \|_{\Psi}^2 < \|\bm \omega - \bm \omega^\star  \|_{\Psi}^2, \label{eq:attr_q_nonexp}
		\end{align}
		for all $\bm \omega \notin  \fix(\mc T_{\mathrm{FBF}})$. To this end, let us recall the inequality \eqref{eq:q_nonexp} in the proof of Lemma \ref{le:FBF_q_nonexp}:
		\begin{equation}
			\|\mc T_{\mathrm{FBF}} (\bm \omega) - \bm \omega^\star  \|_{\Psi}^2 \leq \|\bm \omega - \bm \omega^\star  \|_{\Psi}^2 - (L_B/\mu_{\min}(\Psi))^2 \|\tilde{\bm \omega} - \bm \omega \|_{\Psi}^2, \notag
		\end{equation}
		which holds for any $\bm \omega^\star \in \fix(\mc T_{\mathrm{FBF}})$ and $\bm \omega \in \dom(\mc T_{\mathrm{FBF}})$. Furthermore, we consider any $\bm \omega \notin \fix({\mc T_{\mathrm{FBF}}})$. 
		Since $\tilde{\bm \omega} = (\Id +{\Psi^{-1}\mc A})^{-1}(\Id - \Psi^{-1}(\mc B + \mc C)))(\bm \omega)$, when $\tilde{\bm \omega} = \bm \omega$, it holds that
		\begin{align}
			\Psi (\tilde{\bm \omega} - \bm \omega)-(\mc B + \mc C)(\bm \omega) + (\mc B + \mc C)(\tilde{\bm \omega}) &\in (\mc A+\mc B + \mc C)(\tilde {\bm \omega}) \notag \\
			\Leftrightarrow 0 &\in (\mc A+\mc B + \mc C)(\tilde {\bm \omega}), \notag
		\end{align}
		implying that $\bm \omega = \tilde{\bm \omega} \in \fix(\mc T_{\mathrm{FBF}})$. Hence, $\tilde{\bm \omega} \neq \bm \omega$ if $\bm \omega \notin \fix({\mc T_{\mathrm{FBF}}})$. We observe from the preceding inequality that when $\tilde{\bm \omega} \neq \bm \omega$, the inequality \eqref{eq:attr_q_nonexp} holds. 
		
		By using the inequality \eqref{eq:attr_q_nonexp} and the fact that $\fix(\mc T_{\mathrm{FBF}})$ is bounded, we can then show that for any arbitrary fixed point $\bm \omega^\star \in \fix(\mc T_{\mathrm{FBF}})$, there exists $R>0$ satisfying $\inf_{\|\bm \omega -\bm \omega^\star\|\geq R} (\|\bm \omega - \bm \omega^\star \| - \|\mc T_{\mathrm{FBF}}(\bm \omega) -\mc T_{\mathrm{FBF}}(\bm \omega^\star)\|) > 0$. The proof of the {previous} statement {is analogous to} that of \cite[Lem. 1]{ogura03}, which claims a similar inequality for attracting non-expansive operators. Finally, we follow the proof of \cite[Thm. 2]{ogura03}, which claims the boundedness of the HSDM sequence with an attracting non-expansive operator $\mc T$, since, for some $R>0$, the inequality $\inf_{\|\bm \omega -\bm \omega^\star\|\geq R} (\|\bm \omega - \bm \omega^\star \| - \|\mc T_{\mathrm{FBF}}(\bm \omega) -\mc T_{\mathrm{FBF}}(\bm \omega^\star)\|) > 0$ holds not only for attracting non-expansive operators but also $\mc T_{\mathrm{FBF}}$, $\nabla \phi$ is monotone and Lipschitz continuous (Assumption \ref{as:phi}), and the step size $\beta^{(k)}$ is non-summable but square summable (Assumption {\ref{as:beta}).} \qedd
		
		\subsection{Proof of Theorem \ref{prop:FBF}}
		\label{pf:prop:FBF}
			%\begin{enumerate}
			%\item 
			
			Let $\tilde{\bm \omega}^{(k)} = (\tilde{\bm x}^{(k)},\tilde{\bm \lambda}^{(k)},\tilde{\bm \nu}^{(k)})$ and $\accentset{\circ}{\bm \omega}^{(k)} = (\accentset{\circ}{\bm x}^{(k)},\accentset{\circ}{\bm \lambda}^{(k)}, \accentset{\circ}{\bm \nu}^{(k)} ) $, where $\tilde{\bm x}^{(k)} = \col(\{\tilde x_i\}_{i \in \mc I})$ and the other variables are defined similarly. The updates of $\tilde{\bm \omega}^{(k)}$ in Step 2 of Algorithm \ref{alg:fbf} can be compactly written as 
			$$ \tilde{\bm \omega}^{(k)} = (\Id +{\Psi^{-1}\mc A})^{-1}(\Id -\Psi^{-1}(\mc B + \mc C))(\bm \omega^{(k)}),$$
			whereas the updates of $\accentset{\circ}{\bm \omega}^{(k)} $ in Step 4 of Algorithm \ref{alg:fbf} can be compactly written as 
			$\accentset{\circ}{\bm \omega}^{(k)} =  \tilde{\bm \omega}^{(k)} - \Psi^{-1}(\mc B + \mc C)( \tilde{\bm \omega}^{(k)} - \bm \omega^{(k)}),$
			implying that $\accentset{\circ}{\bm \omega}^{(k)} = \mc T_{\mathrm{FBF}}(\bm \omega^{(k)})$ and the updates in \eqref{eq:hsdm_fbf} is compactly written as 
			\begin{equation}
				\bm \omega^{(k+1)} = \mc T_{\mathrm{FBF}}(\bm \omega^{(k)}) - \beta^{(k)} \nabla \phi(\mc T_{\mathrm{FBF}}(\bm \omega^{(k)})),
				\label{eq:fbf_hsdm}%
			\end{equation}
			which is the {HSDM} applied to $\mc T_{\mathrm{FBF}}$.

			We can then invoke {Lemma} \ref{th:hsdm_cvx}
			to claim the hypothesis. By {Lemma \ref{le:fix_FBF},} $\fix(\mc T_{\mathrm{FBF}})=\zer(\mc A+\mc B + \mc C)$; {therefore $\fix(\mc T_{\mathrm{FBF}})$}  is non-empty and bounded. Moreover, by Assumption {\ref{as:beta},} the step size $\beta^{(k)}$ meets the conditions in Lemma \ref{th:hsdm_cvx}. %It remains to show quasi-nonexpansiveness of $\mc T_{\mathrm{FBF}}$ and the existence of a bounded closed convex on which $\mc T_{\mathrm{FBF}}$ is quasi-shrinking and the sequence $(\bm \omega^{(k)})_{k \in \bb N}$ lies. 
			Lemma \ref{le:FBF_q_nonexp} shows that $\mc T_{\mathrm{FBF}}$ is quasi-nonexpansive and  quasi-shrinking on any bounded closed convex set, $C$ such that $C \cap \fix(\mc T_{\mathrm{FBF}}) \neq \varnothing$. On the other hand, Lemma \ref{le:bounded_HSDM} shows that the FBF-HSDM  sequence $(\bm \omega^{(k)})_{k \in \bb N}$ obtained by the iterations in \eqref{eq:fbf_hsdm} is bounded, i.e., for any $\bm \omega^\star \in \fix(\mc T_{\mathrm{FBF}})$, there exists a positive {finite} $R(\bm \omega^\star)$ such that $\|\bm \omega^{(k)} - \bm \omega^\star \| \leq R(\bm \omega^\star)$. 
			Therefore, for an arbitrarily chosen $\bm \omega^\star \in \fix(\mc T_{\mathrm{FBF}})$, 
			we can construct the following bounded closed set
			%\begin{equation}
			$	\mathfrak{B}(\bm \omega^\star) := \{ x \in \dom(\mc T_{\mathrm{FBF}}) \mid \|x - \bm \omega^\star \| \leq R(\bm \omega^\star)  \},$ 
		%	\end{equation}			
			on which the sequence $(\bm \omega^{(k)})_{k \in \bb N}$ lies. Moreover,
			we can observe that indeed $\mathfrak{B} \cap \fix({\mc T_{\mathrm{FBF}}} )\neq \varnothing$, since $\bm \omega^\star \in \mathfrak{B}$ is a fixed point of $\mc T_{\mathrm{FBF}}$. Hence, $\mc T_{\mathrm{FBF}}$ is quasi-shrinking on $\mathfrak{B}$, which completes the proof. \qedd 
	
	\section{{Proofs of Section \ref{sec:special_cases}}}
	
	\subsection{Proof of Theorem \ref{prop:pFB}}
	\label{pf:prop:pFB}
		First, we observe that in Algorithm \ref{alg:pFB}, $\accentset{\circ}{\bm \omega}^{(k)}=(\accentset{\circ}{\bm x}^{(k)},\accentset{\circ}{\bm \lambda}^{(k)},\accentset{\circ}{\bm \nu}^{(k)} )$ is updated by using $\mc T_{\mathrm{pFB}}$ in \eqref{eq:T_pFB}, i.e.,  $\accentset{\circ}{\bm \omega}^{(k)}= \mc T_{\mathrm{pFB}}(\bm \omega^{(k)})$  \cite[Section 4, Algorithm 1]{yi19}. Hence, we can see that $\bm \omega^{(k)}$ is updated via the HSDM method, i.e., {
			\begin{equation}
				\bm \omega^{(k+1)} = \mc T_{\mathrm{pFB}}(\bm \omega^{(k)}) - \beta^{(k)} \nabla \phi(\mc T_{\mathrm{pFB}}(\bm \omega^{(k)})), \label{eq:pfb_hsdm}
		\end{equation}}%
		%with $\beta^{(k)}$ satisfying the conditions in Lemma \ref{th:hsdm_cvx}. %\blue{ 
		Similarly {to} the proof of {Theorem} \ref{prop:FBF}, due to the boundedness of $\fix(\mc T_{\mathrm{pFB}})=\zer(\mc A + \mc B+\mc C) \neq \varnothing$ and the step size rule of $\beta^{(k)}$ in Assumption {\ref{as:beta},} we can invoke Lemma \ref{th:hsdm_cvx}.% if $\mc T_{\mathrm{pFB}}$ is quasi-nonexpansive and quasi-shrinking on a closed bounded convex set, on which the sequence $(\bm \omega^{(k)})_{k \in \bb N}$ generated by \eqref{eq:pfb_hsdm} lies. The second condition is satisfied if $\mc T_{\mathrm{pFB}}$ is quasi-shrinking on any closed bounded convex set $C$ such that $C \cap \fix(\mc T_{\mathrm{pFB}}) \neq \varnothing$ and the sequence generated by \eqref{eq:pfb_hsdm} is bounded.
		{
		{Specifically,} the operator $\mc T_{\mathrm{pFB}}$ is averaged nonexpansive when Assumptions \ref{as:gen_game}, \ref{as:pseudograd}, \ref{as:Lips}, and \ref{as:pseudograd_cocoercive}--\ref{as:stepsizepFB} hold \cite[Thm. 3]{yi19}. Therefore, $\mc T_{\mathrm{pFB}}$ is also quasi-nonexpansive \cite[Section 4.1]{bauschke11}. By \cite[Prop. 4.35 (iii)]{bauschke11}, the condition in \eqref{eq:condition_quasi_shrinking} holds with $\mc T_2 = \mc T$. By \cite[Thm. 4.27]{bauschke11}, $\Id - \mc T_{\mathrm{pFB}}$ is demiclosed at 0. Therefore, by Lemma \ref{le:demi_closedness_quasi_shrinking}, $\mc T_{\mathrm{pFB}}$ is quasi-shrinking on any closed bounded convex set whose intersection with $\fix(\mc T_{\mathrm{pFB}})$ is nonempty. Furthermore, since $\mc T_{\mathrm{pFB}}$ is averaged nonexpansive, $\mc T_{\mathrm{pFB}}$ is attracting. Therefore, by \cite[Thm. 2]{ogura03} and due to the choice of the step size $\beta^{(k)}$ in Assumption \ref{as:beta}, the sequence generated by \eqref{eq:pfb_hsdm} is bounded.} \qedd 
	\iffalse
	\subsection{Proof of Theorem \ref{prop:PPP}}
	\label{pf:prop:PPP}
	%\begin{proof}
	{	We can show that, when Assumption \ref{as:stepsizePPP} holds, $\Gamma$ is positive definite by following the proof of \cite[Lem. 5]{belgioioso19}.  Therefore, $\mc T_{\mathrm{PPP}}$ is an averaged operator since it is a resolvent of a maximally monotone operator $\mc A + \mc B + \mc C$. The proof then follows that of Theorem \ref{prop:pFB}.}  \qedd
		\iffalse
		Thus, $\mc T_{\mathrm{PPP}}$ is quasi-nonexpansive and quasi-shrinking on any closed bounded convex set whose intersection with $\fix(\mc T_{\mathrm{PPP}})$ is non-empty.
		\fi 
	%\end{proof}
	\fi 
	\section{{Proofs of Section \ref{sec:tv_GNEselect}}} 
% Proof on alternative assumptions for bounded sequence
\subsection{Preliminary results}	 {First, we show a series of {preliminary} results in Lemmas \ref{lemma:bound_of_sequence}--\ref{le:auxiliary_lemma_online_bounds} that lead to the proofs of Lemma \ref{lemma:finite_hsdm_bound_single_iteration} and {Theorem} \ref{prop:asymptotic_limit_online_hsdm}.}  The proofs of this section are provided in the standard Euclidean norm for ease of notation. However, the case for any $\Psi$-induced norm, with $\Psi\succ 0$, follows verbatim.} 
	{First, Lemma \ref{lemma:bound_of_sequence} shows} the convergence of a particular sequence and can be regarded as a finite-iteration version of \cite[Lem. 1]{yamada05}. 
		\begin{lemma}
			\label{lemma:bound_of_sequence}
			Let $\psi:\R_{\geq 0}\to \R_{\geq 0}$ be non-decreasing and non-negative. Let a sequence $(b^{(k)})_{k\in\N}$ be non-increasing, non-negative. Let $(a^{(k)})_{k\in\N} \subset [0,\infty)$ satisfy 
			\begin{equation}
				\label{eq:key_sequence}
				a^{(k+1)}\leq a^{(k)}-\psi(a^{(k)})+b^{(k+1)}.
			\end{equation} 
			Let $K\in\N$. If there exists $\xi>0$ such that $\psi(\xi)\geq \max\{2b^{(1)}, \frac{2}{K-1}a^{(1)}\}$, then 
			\begin{equation}
				\label{eq:bound_sequence}
				a^{(k)}\leq \xi+b^{(k)}, ~~~~ \forall~ k\geq K.
			\end{equation}
		\end{lemma}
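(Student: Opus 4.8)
The plan is to establish two complementary facts: a forward-\emph{invariance} (persistence) property of the moving set $S^{(k)}:=\{a\geq 0 : a\leq \xi+b^{(k)}\}$, and a \emph{reaching} property guaranteeing that the iterate enters this set no later than index $K$. Combining the two then yields \eqref{eq:bound_sequence} for all $k\geq K$. The argument is a finite-horizon analogue of the reasoning behind \cite[Lem.\ 1]{yamada05}, but with the vanishing-step bookkeeping replaced by the two halves of the $\max$ in the hypothesis.

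First I would prove persistence: if $a^{(k)}\leq \xi+b^{(k)}$, then $a^{(k+1)}\leq \xi+b^{(k+1)}$. I split into two cases. If $a^{(k)}\leq \xi$, then since $\psi\geq 0$ and $b^{(k+1)}\geq 0$, the recursion \eqref{eq:key_sequence} gives $a^{(k+1)}\leq a^{(k)}+b^{(k+1)}\leq \xi+b^{(k+1)}$. If instead $\xi< a^{(k)}\leq \xi+b^{(k)}$, then the monotonicity of $\psi$ gives $\psi(a^{(k)})\geq \psi(\xi)$, and using $\psi(\xi)\geq 2b^{(1)}$ together with $b^{(k)},b^{(k+1)}\leq b^{(1)}$ (the sequence $(b^{(k)})$ being non-increasing) I get $\psi(\xi)\geq b^{(k)}+b^{(k+1)}$, whence $a^{(k+1)}\leq (\xi+b^{(k)})-\psi(\xi)+b^{(k+1)}\leq \xi\leq \xi+b^{(k+1)}$. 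Thus $S^{(k)}$ maps into $S^{(k+1)}$, and once the iterate enters it, it stays there.

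Next I would prove reaching: there exists $k^\star\in\{1,\dots,K\}$ with $a^{(k^\star)}\leq \xi$, hence $a^{(k^\star)}\leq \xi+b^{(k^\star)}$. I argue by contradiction, assuming $a^{(k)}>\xi$ for every $k\in\{1,\dots,K\}$. Then for $k=1,\dots,K-1$ we have $\psi(a^{(k)})\geq\psi(\xi)$, so each step decreases $a$ according to $a^{(k+1)}\leq a^{(k)}-\psi(\xi)+b^{(k+1)}$. Telescoping over these $K-1$ steps and bounding the accumulated noise by $\sum_{k=2}^{K}b^{(k)}\leq (K-1)b^{(1)}\leq (K-1)\psi(\xi)/2$ (again via $\psi(\xi)\geq 2b^{(1)}$), I obtain $a^{(K)}\leq a^{(1)}-(K-1)\psi(\xi)/2\leq a^{(1)}-a^{(1)}=0$, where the last inequality uses $\psi(\xi)\geq \tfrac{2}{K-1}a^{(1)}$. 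This contradicts $a^{(K)}>\xi>0$, so the assumption fails and $k^\star$ exists.

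Finally, combining the two: by reaching, $a^{(k^\star)}\leq \xi+b^{(k^\star)}$ for some $k^\star\leq K$; by persistence and induction on $k$, $a^{(k)}\leq \xi+b^{(k)}$ for every $k\geq k^\star$, and in particular for every $k\geq K$, which is exactly \eqref{eq:bound_sequence}. The one delicate point is the bookkeeping in the reaching step, where both halves of the $\max$ must be used simultaneously: the bound $2b^{(1)}$ is what dominates the cumulative noise $\sum b^{(k)}$, while $\tfrac{2}{K-1}a^{(1)}$ is what forces the $K-1$ decrements to consume the entire initial value $a^{(1)}$. The non-increasing property of $(b^{(k)})$ is precisely what allows me to replace every $b^{(k)}$ by $b^{(1)}$ throughout these estimates.
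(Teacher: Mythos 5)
Your proposal is correct and takes essentially the same route as the paper's proof: a contradiction argument showing the iterate drops to $\xi$ or below within the first $K$ indices, followed by the same two-case induction (the case $a^{(k)}\leq\xi$ using $\psi\geq 0$, and the case $\xi<a^{(k)}\leq\xi+b^{(k)}$ using $\psi(\xi)\geq 2b^{(1)}\geq b^{(k)}+b^{(k+1)}$) establishing that $a^{(k)}\leq\xi+b^{(k)}$ persists. The only cosmetic difference is in the reaching step, where you telescope the accumulated noise $\sum_{k=2}^{K}b^{(k)}$ explicitly and obtain the contradiction $a^{(K)}\leq 0$, while the paper absorbs $b^{(k+1)}\leq\tfrac{1}{2}\psi(\xi)$ into each step and reaches the contradiction one index later at $a^{(K+1)}<0$.
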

		\begin{proof}
			Let us first show that there exists an $M\in \N$, $M\leq K$ such that $a^{(M)}\leq\xi$. 
			We proceed by contradiction, assuming that $a^{(k)}>\xi$ $\forall k=1, ..., K$. Then, by noting that $\psi(\cdot)$ is non-decreasing and that  $\psi(\xi)\geq 2b^{(k)}$ for all $k\in\N$, we have
			\begin{align*}
				\begin{split}
					a^{(k+1)} & \leq a^{(k)}-\psi(a^{(k)})+b^{(k+1)} \\ & \leq a^{(k)}-\psi(\xi)+\tfrac{1}{2}\psi(\xi)=a^{(k)}-\tfrac{1}{2}\psi(\xi).
				\end{split}
			\end{align*}  
			By iterating the latter relation and recalling that $ \psi(\xi)\geq \frac{2}{K-1}a^{(1)} $, we find {that}
			\begin{align*}
			\begin{split}
			    & a^{(k+1)}\leq a^{(1)}-\tfrac{k}{2}\psi(\xi) \leq a^{(1)} - \tfrac{k}{K-1}a^{(1)}.
		    \end{split}
			\end{align*}
			For $k=K$, we then obtain the contradiction $a^{(K+1)}<0$. {Thus,} there exists $M\leq K$ such that $a^{(M)}\leq\xi$. We then proceed by induction to prove (\ref{eq:bound_sequence}). Let us prove that, if $a^{(k)}\leq \xi + b^{(k)}$ then $a^{(k+1)}\leq \xi + b^{(k+1)}$ for all $k\geq M$. We distinguish two cases:\\
1) Case $a^{(k)}<\xi$. Then, by (\ref{eq:key_sequence}) and by the non-negativity of $\psi(\cdot)$,
				$a^{(k+1)} \hspace{-2pt} \leq \hspace{-2pt} a^{(k)}+b^{(k+1)}  \hspace{-2pt} < \hspace{-2pt} \xi+b^{(k+1)}.$ \\
2) Case $\xi\leq a^{(k)}\leq \xi+b^{(k)}$. Then, by the non-decreasing property of $\psi$, $a^{(k)}\geq \xi \Rightarrow \psi(a^{(k)})\geq\psi(\xi)$. By the assumptions, $\psi(\xi)\geq 2b^{(1)}$ and by the non-incresing property of $(b^k)_{k\in\N}$,  $2b^{(1)}\geq b^{(k)} + b^{(k+1)}$. We thus obtain $\psi(a^{(k)})\geq b^{(k)} + b^{(k+1)}$. Substituting into (\ref{eq:key_sequence}) leads to
						\begin{align*}
							%\begin{split}
							%$
							a^{(k+1)} &\leq a^{(k)}-\psi(a^{(k)})+b^{(k+1)} \\
							&\leq a^{(k)}-b^{(k)} - b^{(k+1)} +b^{(k+1)} = a^{(k)} -b^{(k)}\leq \xi. %$
								%						\end{split}
						\end{align*}
				We conclude by induction that $a^{(k)} \leq \xi +b^{(k)}$ for all $k\geq M$ and, since $M \leq K$, the claim in (\ref{eq:bound_sequence}) immediately follows.
		\end{proof}
		\begin{lemma}
			\label{lemma:useful_bounds}
			Let $\mc T$ quasi-nonexpansive and $\mc F$ strongly monotone, such that $\| \mc F(\bomega) \|\leq U $ for all $\bomega \in \mathrm{im}(\mc T)$. Let $(\bomega^{(k)})_{k\in\N}$ be generated from (\ref{eq:hsdm}) with constant stepsize $\beta^{(k)} =\beta >0$ for all $k$. Let $K\in\N$  and let $\omegaopt$ be the solution of $\VI(\mc F, \fix(\mc T))$. If there exists $\xi$ such that the shrinkage function $D(\cdot)$ of $\mc T$, {defined in \eqref{eq:D},} satisfies $D(\xi) \geq \max \{ 2\beta U, 2\frac{\dist(\omega^{(1)},\fix(\mc T))}{K-1} \}$, then the following {inequalities} hold:
			%\begin{enumerate}[(i)]
				%\item 
				\begin{align}   \label{eq:lemma:useful_bounds_claim_A} \sup_{k\geq K} \dist(\bm \omega^{(k)}, \fix(\mc T)) &\leq \xi + \beta U, \\%\end{align}
				%\item \begin{equation}
					 \label{eq:lemma:useful_bounds_claim_B} \sup_{k\geq K} \|\mc T(\bm \omega^{(k)})-\bm \omega^{(k)} \| &\leq 2(\xi + \beta U), \\
				 %\end{equation}
				%\item \begin{align}
				%\begin{split}
				\label{eq:lemma:useful_bounds_claim_C}  \sup_{k\geq K}  \langle \mc T(\bm \omega^{(k)}) - \bm \omegaopt, - \mc F(\bm \omegaopt) \rangle & \leq  3(\xi + \beta U) \| \mc F(\bm \omegaopt)\|.
				%\end{split}
				\end{align}
			%\end{enumerate}  
		\end{lemma}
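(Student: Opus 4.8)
The plan is to reduce the entire statement to the scalar recursion handled by Lemma \ref{lemma:bound_of_sequence}, applied to the distance sequence $a^{(k)}:=\dist(\bm\omega^{(k)},\fix(\mc T))$, and then to peel off \eqref{eq:lemma:useful_bounds_claim_B}--\eqref{eq:lemma:useful_bounds_claim_C} as easy consequences of the resulting distance bound. First I would derive a one-step drift inequality for $a^{(k)}$. Setting $p_k:=\proj_{\fix(\mc T)}(\mc T(\bm\omega^{(k)}))\in\fix(\mc T)$ and using the iteration $\bm\omega^{(k+1)}=\mc T(\bm\omega^{(k)})-\beta\mc F(\mc T(\bm\omega^{(k)}))$ together with the fact that $a^{(k+1)}\leq\|\bm\omega^{(k+1)}-p_k\|$, the triangle inequality gives
\begin{align*}
	a^{(k+1)} &\leq \|\mc T(\bm\omega^{(k)})-p_k\| + \beta\|\mc F(\mc T(\bm\omega^{(k)}))\| \\
	&= \dist(\mc T(\bm\omega^{(k)}),\fix(\mc T)) + \beta\|\mc F(\mc T(\bm\omega^{(k)}))\|.
\end{align*}
Since $\mc T(\bm\omega^{(k)})\in\mathrm{im}(\mc T)$, the hypothesis $\|\mc F\|\leq U$ applies to the last term, and by the definition of the shrinkage function in \eqref{eq:D} evaluated at $r=a^{(k)}$ we have $\dist(\mc T(\bm\omega^{(k)}),\fix(\mc T))\leq a^{(k)}-D(a^{(k)})$. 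Hence $a^{(k+1)}\leq a^{(k)}-D(a^{(k)})+\beta U$.

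Next I would invoke Lemma \ref{lemma:bound_of_sequence} with $\psi=D$, which is non-decreasing and non-negative by Proposition \ref{prop:D}, and the constant sequence $b^{(k)}=\beta U$, which is non-increasing and non-negative. The recursion matches \eqref{eq:key_sequence}, and the standing assumption $D(\xi)\geq\max\{2\beta U,\,2\dist(\bm\omega^{(1)},\fix(\mc T))/(K-1)\}$ is precisely the condition $\psi(\xi)\geq\max\{2b^{(1)},\tfrac{2}{K-1}a^{(1)}\}$. Lemma \ref{lemma:bound_of_sequence} then yields $a^{(k)}\leq\xi+\beta U$ for all $k\geq K$, which is exactly \eqref{eq:lemma:useful_bounds_claim_A}.

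The remaining two bounds follow directly. For \eqref{eq:lemma:useful_bounds_claim_B}, inserting $\proj_{\fix(\mc T)}(\bm\omega^{(k)})$ and applying the triangle inequality gives $\|\mc T(\bm\omega^{(k)})-\bm\omega^{(k)}\|\leq\|\mc T(\bm\omega^{(k)})-\proj_{\fix(\mc T)}(\bm\omega^{(k)})\|+\dist(\bm\omega^{(k)},\fix(\mc T))$; quasi-nonexpansiveness bounds the first summand by $\dist(\bm\omega^{(k)},\fix(\mc T))$ as well, so the total is $\leq 2a^{(k)}\leq 2(\xi+\beta U)$ for $k\geq K$. For \eqref{eq:lemma:useful_bounds_claim_C}, I would decompose $\mc T(\bm\omega^{(k)})-\bm\omegaopt=[\mc T(\bm\omega^{(k)})-p_k]+[p_k-\bm\omegaopt]$ with $p_k\in\fix(\mc T)$; since $\bm\omegaopt$ solves $\VI(\mc F,\fix(\mc T))$, i.e. $\langle\bm\omega-\bm\omegaopt,\mc F(\bm\omegaopt)\rangle\geq0$ for every $\bm\omega\in\fix(\mc T)$, the second bracket contributes $\langle p_k-\bm\omegaopt,-\mc F(\bm\omegaopt)\rangle\leq0$, while Cauchy--Schwarz plus quasi-nonexpansiveness give $\langle\mc T(\bm\omega^{(k)})-p_k,-\mc F(\bm\omegaopt)\rangle\leq\dist(\mc T(\bm\omega^{(k)}),\fix(\mc T))\,\|\mc F(\bm\omegaopt)\|\leq(\xi+\beta U)\|\mc F(\bm\omegaopt)\|$, which lies within the claimed bound $3(\xi+\beta U)\|\mc F(\bm\omegaopt)\|$.

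The main obstacle is the first inequality of the recursion: one must combine the projection estimate with the defining infimum of the shrinkage function correctly. In particular, the step $\dist(\mc T(\bm\omega^{(k)}),\fix(\mc T))\leq a^{(k)}-D(a^{(k)})$ requires each iterate $\bm\omega^{(k)}$ to be admissible in the infimum \eqref{eq:D}, i.e. to lie in the set $C$ on which $D$ is defined. Thus the argument implicitly relies on the sequence remaining confined to $C$, and care is needed to guarantee the drift inequality holds for every $k$ (not merely along a subsequence) so that the iteration of Lemma \ref{lemma:bound_of_sequence} is legitimate; everything else is routine use of the triangle inequality, Cauchy--Schwarz, quasi-nonexpansiveness, and the variational characterization of $\bm\omegaopt$.
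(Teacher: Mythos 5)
Your proof is correct and, for the core claim \eqref{eq:lemma:useful_bounds_claim_A}, follows exactly the paper's route: the same one-step drift inequality $a^{(k+1)}\leq a^{(k)}-D(a^{(k)})+\beta U$ obtained by inserting $\proj_{\fix(\mc T)}(\mc T(\bomega^{(k)}))$, the bound $\|\mc F(\mc T(\bomega^{(k)}))\|\leq U$ on the image of $\mc T$, and then Lemma \ref{lemma:bound_of_sequence} with $\psi=D$ and $b^{(k)}=\beta U$; claim \eqref{eq:lemma:useful_bounds_claim_B} is also proved identically. Where you deviate is \eqref{eq:lemma:useful_bounds_claim_C}: the paper splits $\mc T(\bomega^{(k)})-\bomega^\star$ through $\bomega^{(k)}$ and then through $\proj_{\fix(\mc T)}(\bomega^{(k)})$, paying $2(\xi+\beta U)$ for the residual term via \eqref{eq:lemma:useful_bounds_claim_B} plus $(\xi+\beta U)$ for the distance term, which is exactly where the constant $3$ comes from; you instead project $\mc T(\bomega^{(k)})$ directly onto $\fix(\mc T)$ and apply the $\VI$ inequality at that projected point, so the whole estimate reduces to $\dist(\mc T(\bomega^{(k)}),\fix(\mc T))\,\|\mc F(\bomega^\star)\|\leq(\xi+\beta U)\|\mc F(\bomega^\star)\|$ (using quasi-nonexpansiveness to compare $\dist(\mc T(\bomega^{(k)}),\fix(\mc T))$ with $a^{(k)}$). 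Your decomposition is both shorter and sharper --- constant $1$ instead of $3$ --- and trivially implies the stated bound; the paper's version merely reuses \eqref{eq:lemma:useful_bounds_claim_B} at the cost of a looser constant. Finally, the caveat you raise --- that the step $\dist(\mc T(\bomega^{(k)}),\fix(\mc T))\leq a^{(k)}-D(a^{(k)})$ requires each $\bomega^{(k)}$ to lie in the set $C$ over which the infimum in \eqref{eq:D} is taken --- is a genuine implicit assumption that the paper's proof also makes without comment; in the paper's downstream applications it is discharged by boundedness of the iterates and the choice of a sufficiently large compact $C$ (cf.\ the proofs of Theorem \ref{prop:asymptotic_limit_online_hsdm} and Corollary \ref{cor:online_FBF}), so flagging it explicitly strengthens rather than weakens your argument.
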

		\noindent\emph{Proof.} (i.) For all $k$, it holds by the definition of distance and by the algorithm definition in (\ref{eq:hsdm}) that: 
				\begin{align}
				%	\begin{split}
						&\dist(\bm \omega^{(k+1)}, \hspace{-1pt} \fix(\mc T)) \hspace{-2pt} \leq \hspace{-2pt} \| \bm \omega^{(k+1)} \hspace{-2pt} - \hspace{-2pt} \proj_{\fix(\mc T)}(\mc T(\bm \omega^{(k)}))\| \hspace{-2pt} = \notag \notag\\
						& \| \mc T(\bm \omega^{(k)}) - \beta \mc F( \mc T(\bm \omega^{(k)})) - \proj_{\fix(\mc T)}(\mc T(\bm \omega^{(k)}))\|  \leq  \notag \notag\\
						&\underbrace{\| \mc T(\bm \omega^{(k)}) - \proj_{\fix(\mc T)}(\mc T(\bm \omega^{(k)})) \|}_{=\dist(\mc T(\bm \omega^{(k)}), \fix(\mc T))} + \beta \|\mc F ( \mc T(\bm \omega^{(k)}))\| \leq \notag\\ & \dist(\mc T(\bm \omega^{(k)}), \fix(\mc T)) + \beta U. \label{eq:iterative_bound_distance}
				%	\end{split}
				\end{align}
				Let us define $a^{(k)}:=\dist(\bm \omega^{(k)}, \fix(\mc T))$. Then, from (\ref{eq:iterative_bound_distance}) we find immediately $ a^{(k+1)} - \beta U \leq \dist(\mc T(\bm \omega^{(k)}), \fix(\mc T)) $. By the definition of shrinkage function in (\ref{eq:D}) and the latter inequality, we can write
				\begin{align*}
					\begin{split}
						& D(a^{(k)}) \leq a^{(k)} - \dist(\mc T(\bm \omega^{(k)}), \fix(\mc T)) \leq \\ &  a^{(k)} - a^{(k+1)} + \beta U  
						 \Rightarrow a^{(k+1)} {\leq} a^{(k)} + \beta U -  D(a^{(k)}),
					\end{split}
				\end{align*}
				which defines a sequence of the kind in (\ref{eq:key_sequence}) with $\psi(\cdot)=D(\cdot)$ and $b^{(k)}= \beta U$ for all $k$.
				By Lemma \ref{lemma:bound_of_sequence}, then $\dist(\bm \omega^{(k)}, \fix(\mc T))\leq \xi + \beta U$ for all $k\geq K$.  \smallskip \\ 
				(ii.) {By the triangle inequality,} we can write $\|\mc T(\bm \omega^{(k)})-\bm \omega^{(k)}\|  \leq   {\| \mc T(\bm \omega^{(k)})-\proj_{\fix(\mc T)} (\bm \omega^{(k)}) \|} + \| \proj_{\fix(\mc T)}(\bm \omega^{(k)}) - \bm \omega^{(k)} \|.$	By quasi-nonexpansiveness of $\mc T$, we obtain, for all $k\geq K$,
				\begin{align*}
					\begin{split} 
						& 
					%$
						{\| \mc T(\bm \omega^{(k)})-\proj_{\fix(\mc T)} (\bm \omega^{(k)}) \|}  \leq   \\
						& 
						\| \bm \omega^{(k)} - \proj_{\fix(\mc T)}(\bm \omega^{(k)}) \| = \dist(\bm \omega^{(k)}, \fix(\mc T))  \\
						& 
						\Rightarrow \|\mc T(\bm \omega^{(k)})-\bm \omega^{(k)}\| \leq 2\dist(\bm \omega^{(k)}, \fix(\mc T)).%$
					\end{split}
				\end{align*}
				Finally, {combining the last inequality and \eqref{eq:lemma:useful_bounds_claim_A} yields \eqref{eq:lemma:useful_bounds_claim_B}.}\\
				(iii) By the Cauchy-Schwarz inequality, we can write
				\begin{align}
				%	\begin{split}
						& \langle \mc  T(\bm \omega^{(k)})-\bm \omegaopt, -\mc F(\bm \omegaopt) \rangle = \notag \\
						&\langle \mc T(\bm \omega^{(k)}) - \bm \omega^{(k)}, - \mc F(\bm \omegaopt) \rangle + \langle \bm \omega^{(k)} - \bm \omegaopt, - \mc F(\bm \omegaopt) \rangle \leq \notag\\
						&   \| \mc T(\bm \omega^{(k)}) - \bm \omega^{(k)} \| \| \mc F(\bm \omegaopt) \| + \langle \bm \omega^{(k)} - \bm \omegaopt, - \mc F(\bm \omegaopt) \rangle. \label{eq:lemma:useful_bounds_claim_iii}
					%\end{split}
				\end{align}
				{Based on} 	 (\ref{eq:lemma:useful_bounds_claim_B}), for all $k\geq K$, we {can} bound the first term on {the right-hand side of \eqref{eq:lemma:useful_bounds_claim_iii}} by
			%	\begin{align*}
				%	\begin{split}
						$ %& 
						\| \mc T(\bm \omega^{(k)}) - \bm \omega^{(k)} \| \| \mc F(\bm \omegaopt) \| \leq 2(\xi + \beta U) \| \mc F(\bm \omegaopt) \|
					$ %\end{split}
				%\end{align*}
				{and rewrite the second term as}
				\begin{align*}
					\begin{split}
						\langle \bm \omega^{(k)} - \bm \omegaopt, - \mc F(\bm \omegaopt) \rangle= &
						  {\langle \bm \omega^{(k)}  \hspace{-2pt} - \hspace{-2pt} \proj_{\fix(\mc T)}(\bm \omega^{(k)}), \hspace{-1pt} -\mc F(\bm \omegaopt) \rangle} \\
						+ & 
						{\langle \proj_{\fix(\mc T)}(\bm \omega^{(k)})-\bm \omegaopt, -\mc F(\bm \omegaopt) \rangle}.
					\end{split}
				\end{align*}
				{We observe that the second addend is non-positive by the definition of $\VI$ solution.} {By} applying the Cauchy-Schwarz inequality, the definition of projection, and (\ref{eq:lemma:useful_bounds_claim_A}), we obtain 
		\begin{align*}
					\begin{split}
						& \langle \mc  T(\bm \omega^{(k)})-\bm \omegaopt, -\mc F(\bm \omegaopt) \rangle \leq \\
						&  2(\xi   +  \beta U) \| \mc F(\bm \omegaopt) \| \hspace{-1pt} + \hspace{-1pt} \| \bm \omega^{(k)} \hspace{-1pt} - \hspace{-1pt} \proj_{\fix(\mc T)}(\bm \omega^{(k)}) \| \| \mc F(\bm \omegaopt)\| =  \\
						&  2(\xi +  \lambda  U) \| \mc F(\bm \omegaopt) \| \hspace{-1pt}+ \hspace{-1pt} \dist(\bm \omega^{(k)}, \fix(\mc T))  \| {\mc F}(\bm \omegaopt)\|\leq \\
						& 3(\xi + \lambda U)  \| \mc  F(\bm \omegaopt) \|.
					\end{split}
		\end{align*}\qedd 
		{\begin{lemma}
		\label{le:auxiliary_lemma_online_bounds}
		Let Assumptions \ref{as:phi_time_varying}--{\ref{as:bounded_gradient}} hold. For any $t\in\N$, let $\bomega_{t+1}$ be generated from the step at time $t$ of the restarted HSDM algorithm in \eqref{eq:algorithm_online_HSDM}. Let $D_t(\cdot)$ be the shrinkage function of $\mc T_t$ as defined in \eqref{eq:D}. If {there exists} $\xi>0$ {such that}
			\begin{equation} 
				\label{eq:condition_shrinkage_function}
				D_t(\xi) \geq \max{\left\{2\beta U, 2\tfrac{\dist(\bm \omega_{t},\fix(\mc T_t))}{K-1} \right\} },
			\end{equation}
            then, the bound in \eqref{eq:contractive-like_bound}, i.e.,
            \begin{equation*}
            	%\label{eq:contractive-like_bound}
            	\|\bm \omega_{t+1}-\bm \omegaopt_t\|^2 \leq \left(1-{\tau(\beta)}\right)^K \| \bm \omega_{t}-\bm \omegaopt_t \|^2 + \gamma,
            \end{equation*}
             holds with
			\begin{equation} 
				\label{eq:definition_gamma}
				\gamma = \tfrac{\beta}{\tau(\beta)}U  (6\xi+11\beta U).
			\end{equation}
		\end{lemma}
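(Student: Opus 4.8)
The plan is to show that one restart of the algorithm, i.e.\ the $K$ inner HSDM iterations applied to $\mc T_t$ and $\nabla\phi_t$ starting from $\bm\omega_t$, contracts the squared distance to the $\VI$ solution $\bm\omegaopt_t$ of \eqref{eq:time_varying_VI} by a factor $(1-\tau(\beta))$ per step, up to a step-wise additive error that I will control through the uniform distance bounds of Lemma \ref{lemma:useful_bounds}. First I would fix $t$ and write the inner iterates of \eqref{eq:algorithm_online_HSDM} as $\bm\omega^{(1)}=\bm\omega_t$ and $\bm\omega^{(j+1)}=\mc T_t(\bm\omega^{(j)})-\beta\nabla\phi_t(\mc T_t(\bm\omega^{(j)}))$, so that $\bm\omega_{t+1}$ is the terminal iterate, and set $d_j:=\|\bm\omega^{(j)}-\bm\omegaopt_t\|$. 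Since $\phi_t$ is $\sigma$-strongly convex with $L_\phi$-Lipschitz gradient (Assumption \ref{as:phi_time_varying}) and $\beta<2\sigma/L_\phi^2$, the map $\Id-\beta\nabla\phi_t$ is a $(1-\tau(\beta))$-contraction, where $(1-\tau(\beta))^2=1-\beta(2\sigma-\beta L_\phi^2)$.

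The core step is a single-iteration descent inequality. Writing $z_j:=\mc T_t(\bm\omega^{(j)})$ and expanding $d_{j+1}^2=\|z_j-\beta\nabla\phi_t(z_j)-\bm\omegaopt_t\|^2$, I would use $\sigma$-strong monotonicity of $\nabla\phi_t$ to lower-bound $\langle\nabla\phi_t(z_j),z_j-\bm\omegaopt_t\rangle$, the identity $(1-\tau(\beta))^2=1-2\beta\sigma+\beta^2L_\phi^2$ to replace $1-2\beta\sigma$ by $(1-\tau(\beta))^2$ at the cost of a non-positive term, quasi-nonexpansiveness of $\mc T_t$ (Assumption \ref{as:quasi_nonexp_always}) to bound $\|z_j-\bm\omegaopt_t\|\le d_j$ (recall $\bm\omegaopt_t\in\fix(\mc T_t)$), and $\|\nabla\phi_t(z_j)\|\le U$ on $\mathrm{Im}(\mc T_t)$ (Assumption \ref{as:bounded_gradient}). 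This yields
\begin{equation*}
d_{j+1}^2\le(1-\tau(\beta))\,d_j^2+2\beta\langle \mc T_t(\bm\omega^{(j)})-\bm\omegaopt_t,-\nabla\phi_t(\bm\omegaopt_t)\rangle+\beta^2U^2 .
\end{equation*}
The cross term is exactly the quantity bounded in the third inequality \eqref{eq:lemma:useful_bounds_claim_C} of Lemma \ref{lemma:useful_bounds}: under the hypothesis \eqref{eq:condition_shrinkage_function} on the shrinkage function $D_t$, which matches the hypothesis of Lemma \ref{lemma:useful_bounds} with $\mc F=\nabla\phi_t$ and $\bm\omega^{(1)}=\bm\omega_t$, it is at most $3(\xi+\beta U)\|\nabla\phi_t(\bm\omegaopt_t)\|\le 3(\xi+\beta U)U$, since $\bm\omegaopt_t=\mc T_t(\bm\omegaopt_t)\in\mathrm{Im}(\mc T_t)$.

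Finally I would telescope the single-step inequality over the $K$ inner iterations and sum the geometric series $\sum_{j\ge0}(1-\tau(\beta))^j=1/\tau(\beta)$, which produces the factor $(1-\tau(\beta))^K$ in front of $\|\bm\omega_t-\bm\omegaopt_t\|^2$ and collects the per-step errors into the additive constant $\gamma$ in \eqref{eq:definition_gamma}. The main obstacle is that the distance bounds of Lemma \ref{lemma:useful_bounds}, and hence the cross-term bound, are only guaranteed once the iteration index reaches $K$, i.e.\ they may fail for the transient iterates of the restart whose distance to $\fix(\mc T_t)$ has not yet settled to $\xi+\beta U$. Controlling these early iterates, whose distance is bounded only through $\dist(\bm\omega_t,\fix(\mc T_t))\le d_1$ via the descent established in Lemma \ref{lemma:bound_of_sequence}, and showing that their contribution is geometrically suppressed by the weights $(1-\tau(\beta))^{K-j}$ so that no dependence on $\bm\omega_t$ is reintroduced into $\gamma$, is the delicate accounting that yields the precise constant $6\xi+11\beta U$.
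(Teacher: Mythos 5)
Your plan is, in substance, the paper's own proof. The paper uses the same skeleton: a one-step inequality for the inner iterates combining a $(1-\tau(\beta))$ contraction with the cross term $2\beta\langle \mc T_t(\bm y^{(k)})-\bomega^{\star}_t,-\nabla\phi_t(\bomega^{\star}_t)\rangle$, the latter bounded by \eqref{eq:lemma:useful_bounds_claim_C} of Lemma \ref{lemma:useful_bounds} under hypothesis \eqref{eq:condition_shrinkage_function} with $\mc F=\nabla\phi_t$, followed by telescoping over the restart window and the geometric series $\sum_{j\geq 0}(1-\tau(\beta))^j=1/\tau(\beta)$. The only cosmetic difference is how the contraction is obtained: the paper adds and subtracts $\beta\nabla\phi_t(\bomega^{\star}_t)$, expands the square, and invokes \cite[Lem. 4a]{yamada05} for the map $\mc T^{\beta}_t=\mc T_t-\beta\nabla\phi_t\circ\mc T_t$, whereas you rederive the same factor directly from $\sigma$-strong monotonicity and the identity $(1-\tau(\beta))^2=1-\beta(2\sigma-\beta L_\phi^2)$ — equivalent. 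Your bookkeeping gives a per-step error $\beta U(6\xi+7\beta U)$; the paper's expansion carries the extra quadratic terms $\beta^2\|\nabla\phi_t(\bomega^{\star}_t)\|^2$ and $2\beta^2\langle\nabla\phi_t(\mc T_t(\bm y^{(K)}))-\nabla\phi_t(\bomega^{\star}_t),\nabla\phi_t(\bomega^{\star}_t)\rangle$, which is exactly where $11\beta U$ rather than $7\beta U$ comes from. Since your constant is dominated by the stated $\gamma$ in \eqref{eq:definition_gamma}, that discrepancy is harmless.

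Where your proposal goes astray is the closing paragraph. The ``delicate accounting'' of transient iterates that you anticipate does not exist in the paper: the paper applies the identical per-step inequality, \emph{including} the cross-term bound \eqref{eq:lemma:useful_bounds_claim_C}, at every one of the inner iterates and telescopes all the way down to $\bm y^{(1)}=\bomega_t$; the constant $6\xi+11\beta U$ is pure quadratic bookkeeping, not transient analysis. You are right that Lemma \ref{lemma:useful_bounds} literally guarantees its bounds only for iteration indices $k\geq K$, so the concern you raise applies equally to the paper's proof, which silently uses the bound along the whole run. But be warned that the geometric-suppression repair you sketch does not obviously close this: for the transient iterates the cross term is controlled only by $3\,\dist(\bm y^{(k)},\fix(\mc T_t))\,U$, and a slowly shrinking transient (distance decreasing by roughly $D_t(\xi)/2$ per step, which \eqref{eq:condition_shrinkage_function} allows to be as small as $\beta U$) can keep that distance large precisely at iterates whose weights $(1-\tau(\beta))^{K-k}$ are close to one, so the early contributions are not automatically absorbed into a $\gamma$ independent of $\bomega_t$. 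To reproduce the paper, apply the per-step bound uniformly across the restart window as it does; the transient analysis you defer to would be a genuinely additional argument, not one recoverable from the paper, and it does not yield the constant you attribute to it.
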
 }
		\begin{proof}{
			{Let us define} the operator 
		$  \mc T^{\beta}_t(\bm \omega) := \mc T_t(\bm \omega) - \beta\nabla \phi_t (\mc T_t (\bm \omega)). $ 
			By $\mc T_t(\bm \omegaopt_t)=\bm \omegaopt_t$ and by the definition of the algorithm {in \eqref{eq:algorithm_online_HSDM},}
		%	\begin{align*}
		%		\label{eq:first_bounds_geometric_seq}
		%		\begin{split}
%					&
					$\|\bm \bomega_{t+1}- \bm \omegaopt_t\|^2 = \| \mc T^{\beta}_t(\bm y^{(K)})-\mc T_t(\bm \omegaopt_t) \|^2 $. 
%				\end{split}
	%		\end{align*}
			We sum and subtract $\beta\nabla \phi_t(\bm \omegaopt_t)$ and substitute  $\mc T^{\beta}_t$ to obtain
			\begin{align*}
			    \begin{split}
			    &	\|\bm \bomega_{t+1}- \bm \omegaopt_t\|^2\\
			    & =  \| \mc T^{\beta }_t(\bm y^{(K)})-\mc T_t(\bm \omegaopt_t) +\beta\nabla \phi_t(\bm \omegaopt_t) - \beta\nabla \phi_t(\bm \omegaopt_t)\|^2  \\
				& = \| \mc T^{\beta }_t(\bm y^{(K)})-\mc T^{\beta }_t(\bm \omegaopt_t)  - \beta\nabla \phi_t(\bm \omegaopt_t)\|^2.
				\end{split}
			\end{align*}
			Expanding the square ${\{1\}}$, expanding $\mc T^{\beta }_t$ $\{2\}$, and regrouping $\{3\}$ leads to 
			\begin{align}
			    \begin{split}
			    	&	\|\bm \bomega_{t+1}- \bm \omegaopt_t\|^2\\
					&\overset{\{1\}}{=} \| \mc T^{\beta }_t(\bm y^{(K)}) - \mc T^{\beta }_t(\bm \omegaopt_t) \|^2 + \beta ^2\|\nabla\phi_t(\bm \omegaopt_t)\|^2 \\
					&+ 2 \langle \mc T^{\beta }_t(\bm y^{(K)})-\mc T^{\beta }_t(\bm \omegaopt_t), -\beta\nabla \phi_t(\bm \omegaopt_t) \rangle   \\
					& \overset{\{2\}}{=} \| \mc T^{\beta }_t(\bm y^{(K)}) - \mc T^{\beta }_t(\bm \omegaopt_t) \|^2 + \beta ^2\| \nabla \phi_t (\bm \omegaopt_t)\|^2 - \\
					& 2 \beta\langle \mc T_t(\bm y^{(K)})  - \hspace{-1pt} \beta\nabla \phi_t( \mc T_t(\bm y^{(K)})) - \\
					&\mc T_t(\bm \omegaopt_t)\hspace{-1pt}+\hspace{-1pt} \beta\nabla \phi_t( \mc T_t(\bm \omegaopt_t)), \nabla \phi_t(\bm \omegaopt_t) \rangle  \\ 
					& \overset{\{3\}}{=} \| \mc T^{\beta }_t(\bm y^{(K)}) - \mc T^{\beta }_t(\bm \omegaopt_t) \|^2+ \beta ^2\| \nabla\phi_t(\bm \omegaopt_t)\|^2  \\
					& + 2 \beta\langle \mc T_t(\bm y^{(K)})-\bm \omegaopt_t, - \nabla \phi_t(\bm \omegaopt_t) \rangle  \\
					& + 2\beta ^2\langle \nabla\phi_t (\mc T_t(\bm y^{(K)}))  - \nabla\phi_t(\bm \omegaopt_t), \nabla\phi_t(\bm \omegaopt_t) \rangle. 
				\end{split}
				\label{eq:first_bounds_geometric_seq}
			\end{align}
			We note that, by applying the Cauchy-Schwarz, the triangle inequalities and Assumption \ref{as:bounded_gradient}, we have 
		%	\begin{align*}
			%	\begin{split}
					%& 
					$\langle \nabla\phi_t ( \mc T_t(\bm y^{(K)}))  - \nabla\phi_t(\bm \omegaopt_t), \nabla\phi_t(\bm \omegaopt_t) \rangle 
					%\\
					%& 
					\leq \| \nabla\phi_t ( \mc T_t(\bm y^{(K)}))  - \nabla\phi_t(\bm \omegaopt_t)\| \|\nabla\phi_t(\bm \omegaopt_t)\| 
					%\\
					%& 
					\leq (U+\|\nabla\phi_t(\bm \omegaopt_t)\|)\|\nabla\phi_t(\bm \omegaopt_t)\|. 
					$ 
			%	\end{split}
		%	\end{align*}
			By \eqref{eq:condition_shrinkage_function}, the bounds in Lemma \ref{lemma:useful_bounds} hold. We then substitute in (\ref{eq:first_bounds_geometric_seq}), the latter relation, and the bound in \eqref{eq:lemma:useful_bounds_claim_C} to obtain
			\begin{align*}
				\begin{split}
				%	\label{eq:proof:main_bound:step_1}
					&\|\bm \bomega_{t+1}- \bm \omegaopt_t\|^2  \leq \| \mc T^{\beta }_t(\bm y^{(K)}) - \mc T^{\beta }_t(\bm \omegaopt_t) \|^2  +  \\
					& 6\beta (\xi \hspace{-2pt} + \hspace{-2pt} \beta U) \| \nabla\phi_t(\bm \omegaopt_t)\| \hspace{-2pt} + \hspace{-2pt} \beta ^2 (2U+3\|\nabla\phi_t(\bm \omegaopt_t)\|)\|\nabla\phi_t(\bm \omegaopt_t)\|.
				\end{split}
			\end{align*}
			Applying Assumption \ref{as:bounded_gradient} and rearranging the terms leads to
			\begin{align}
			    %\begin{split}
					& \|\bm \bomega_{t+1}- \bm \omegaopt_t\|^2  \leq \| \mc T^{\beta }_t(\bm y^{(K)}) - \mc T^{\beta }_t(\bm \omegaopt_t) \|^2 +  \notag\\
					& \quad\quad 6\beta (\xi + \beta U) \| \nabla\phi_t(\bm \omegaopt_t)\| + \beta ^2 5U\|\nabla\phi_t(\bm \omegaopt_t)\|  \notag \\
					& \leq \| \mc T^{\beta }_t(\bm y^{(K)}) - \mc T^{\beta }_t(\bm \omegaopt_t) \|^2 + \beta (6\xi + 11\beta U)U \notag \\
					& \leq \| \mc T^{\beta  }_t(\bm y^{(K)}) - \mc T^{\beta  }_t(\bm \omegaopt_t) \|^2 + {\tau(\beta  )} \gamma .	\label{eq:proof:main_bound:step_2}
			    %\end{split}
			\end{align}
			By {quasi-nonexpansiveness} of $\mc T_t$  {as well as} strong monotonicity and Lipschitz continuity of $ \nabla \phi_t $, we can apply \cite[Lem. 4a]{yamada05} to obtain
			%\begin{align*}
			%	\begin{split}
			$ \| \mc T^{\beta  }_t(\bomega)- \mc T^{\beta  }_t(\bar{\bomega}) \| \leq (1-\tau(\beta)) \| \bomega - \bar{\bomega} \|, $
			 %\end{split}
			%\end{align*}
			for all $\bomega\in \mathrm{dom}(\mc T^{\beta}_t), \bar{\bomega}\in\fix(\mc T_t)$, which we substitute in (\ref{eq:proof:main_bound:step_2}) to obtain
			\begin{align*}
				\begin{split}
					& \|\bm \bomega_{t+1}- \bm \omegaopt_t\|^2  \leq \left(1-{\tau(\beta)}\right)^2 \| \bm y^{(K)} - \bm \omegaopt_t\|^2 + { \tau(\beta)} \gamma\\
					& \leq \left(1-{\tau(\beta)}\right) \| \bm y^{(K)} - \bm \omegaopt_t\|^2 + { \tau(\beta)}\gamma .
				\end{split}
			\end{align*}
			{By iterating, we obtain} 
			\begin{align*}
			    \begin{split}
					& \|\bm \bomega_{t+1}- \bm \omegaopt_t\|^2 \leq
					 \\
					&  \left(1-{\tau(\beta)}\right)^2 \| \bm y^{(K-1)} - \bm \omegaopt_t\|^2   +  \left(1-{\tau(\beta)}\right) { \tau(\beta)}\gamma + { \tau(\beta)}\gamma\\ 
					&\leq \dots
					\leq \hspace{-2pt} (1-{\tau(\beta)})^K \| \bm y^{(1)} - \bm \omegaopt_t\|^2 \hspace{-2pt} + \hspace{-2pt}  \sum_{j=0}^{K-1} \hspace{-2pt} \left(1-{\tau(\beta)}\right)^j { \tau(\beta)}\gamma  \\ 
        			& \leq \left(1-{\tau(\beta)}\right)^K \| \bm y^{(1)} - \bm \omegaopt_t\|^2 + \sum_{j=0}^{\infty}  \left(1-{\tau(\beta)}\right)^j { \tau(\beta)}\gamma.
				\end{split}
			\end{align*}
				Applying the geometric series convergence and recalling from \eqref{eq:algorithm_online_HSDM}  
				that $\bm y^{(1)}=\bomega_{t} $ leads to \eqref{eq:contractive-like_bound}. } 
		\end{proof}

		 \subsection{Proof of Lemma \ref{lemma:finite_hsdm_bound_single_iteration}}
		 \label{appendix:proof:lemma:finite_hsdm_bound_single_iteration}
		{ Let us consider 
%\begin{equation}
%\label{eq:proof:choice_xi}
   $ \xi:=  \frac{\gamma \sigma}{12 U}.$ 
%\end{equation}	
Since $\mc T_t$ is quasi-shrinking, the shrinkage function $D_t$ of $\mc T_t$ satisfies $D_t(\xi)>0$. Thus, there exist $\bar\beta\in(0, \frac{2\sigma}{L_\phi^2})$ and $K$ such that, for any $\beta\in (0, \bar\beta]$,

\begin{equation} 
	D_t(\xi) \geq \max{\left\{2\beta U, 2\tfrac{\dist(\bm \omega_{t},\fix(\mc T_t))}{K-1} \right\} }.
\end{equation} 
\begin{remark}
    As $D_t(\xi)$ decreases with $\gamma$, for smaller values of $\gamma$ a smaller stepsize $\beta$ and a larger $K$ are necessary. %\eod
\end{remark}

It can be verified that $\lim_{\beta\rightarrow0^+} \tfrac{\beta}{\tau(\beta)}= \tfrac{1}{\sigma}$. Then, 
\begin{align} 
\label{eq:proof:limit_expression}
\begin{split}
\lim_{\beta\rightarrow 0^+} \tfrac{\beta}{\tau(\beta)}(6\xi + 11\beta U)U& = \tfrac{6\xi U}{\sigma} \leq \tfrac{1}{2}\gamma,  
\end{split}
\end{align}
We thus find $\beta\in(0,\bar{\beta}]$ small enough, such that 
\begin{equation} 
\label{eq:proof:choice_beta}
\tfrac{\beta}{\tau(\beta)}(6\xi + 11\beta  U) U \leq \gamma. 
\end{equation}
Hence, the hypothesis holds by invoking Lemma \ref{le:auxiliary_lemma_online_bounds}.}
		\qedd
%	\vspace{-20pt}
	    \subsection{Proof of Theorem \ref{prop:asymptotic_limit_online_hsdm}}
	    \label{appendix:proof:prop:asymptotic_limit_online_hsdm}
	{
	    We begin the proof by constructing a suitable stepsize $\bar\beta$ and number of iterations $\bar{K}$. We then proceed with proving that the statement holds for the chosen variables. 
	    Let us first define the auxiliary variable
	    %\begin{equation}
           $ \xi=  \frac{\gamma \sigma}{12 U}.$ 
        %\end{equation}
	   By \eqref{eq:proof:limit_expression}, we can choose a small enough $\bar\beta \in(0,\min\{\frac{2\sigma}{L^2_{\varphi}}, \frac{D(\xi)}{2U}\}) $, such that
	    \begin{equation} 
	    \label{eq:prop:condition_gamma_xi}
	    \tfrac{\bar\beta}{\tau(\bar\beta)}(6\xi + 11\bar\beta  U) U \leq \gamma . 
	    \end{equation}
	    We now define %the mapping 
	    $\alpha(K):=(1-\tau(\bar\beta))^{K}.$ 
	    Since $\tau(\bar\beta)\in (0,1)$, $\alpha$ is decreasing with $K$. We can then choose $K_1$, such that $ \alpha(K_1)<\frac{1}{2}$.
				Then, we define the mapping ${a}:\N_{\geq K_1}\to\R$
        	    \begin{equation}
        			\label{eq:prop:asymptotic_limit_online_hsdm:condition_a_bar}
        			{a}(K) =  \max \left\{ \|\bomega_1\| + \sup_{\bomega\in\mc Y} \|\bomega\|, \sqrt{  \tfrac{2\alpha(K) \delta_1^2 + \gamma }{1-2\alpha(K)}} \right \}, 
        		\end{equation} 
				{We can verify} that ${a(\cdot)}$ is non-increasing. Consequently,  the sequence 
				$ \left(\frac{2({a}(K) +\delta_2)}{{K}-1}\right)_{K\geq K_1} $
				is decreasing. We can then choose any {sufficiently large} $ \bar{K}\geq K_1 $, such that
				\begin{equation}
				   \label{eq:prop:condition_rho_a_bar}
				   D(\xi)\geq  \tfrac{2(\bar{a} +\delta_2)}{\bar{K}-1}, 
				\end{equation} 
				where $\bar{a}:=a(\bar{K})$. {We also define $\bar{\alpha}:=\alpha(\bar{K}) $}.
			
				We now prove {by induction} that 
				\begin{equation}
					\|\bm \omega_t-\bm \omega_{t-1}^{\star}\| \leq \bar{a} ~~~~\text{for all} ~t>1.
				\end{equation}
				To that end, we first show that
				\begin{equation}
					\label{eq:prop:asymptotic_limit_online_hsdm:first_step_induction}
					\|\bm \omega_t- \bm \omega_{t-1}^{\star}\|\leq \bar{a} \Rightarrow \| \bm \omega_{t+1}-\bm \omega_t^{\star} \|\leq \bar{a}.
				\end{equation}
				Let us then write
				\begin{align}
				%\begin{split}
       				\dist(\bm \omega_t, & \fix (\mc T_t))  \overset{\{1\}}{\leq}  \| \bomega_t-\proj_{\fix{(\mc T_t)}}(\omegaopt_{t-1}) \| \notag\\
       				& \overset{\{2\}}{\leq} \| \bomega_t - \omegaopt_{t-1} \| + \| \omegaopt_{t-1} - \proj_{\fix{(\mc T_t)}}(\omegaopt_{t-1}) \|  \notag\\
				     &  \overset{\{3\}}{\leq} \| \bomega_t - \omegaopt_{t-1} \| +\delta_2  \leq \bar{a} + \delta_2, 	\label{eq:prop:steps_bound_distance}
				%\end{split}
				\end{align} 
				where $\{1\}$ follows from the definition of distance, $\{2\}$ from the triangle inequality and $\{3\}$ from Assumption \ref{ass:var_lim}.\ref{ass:GNE_set_bounded_variations}. Then, by Assumption \ref{as:T_uniformly_quasi_shrinking}, by the choice $\bar\beta\leq \frac{D(\xi)}{2U} $ and \eqref{eq:prop:condition_rho_a_bar},
				\begin{align}
				\label{eq:proof:condition_D}
				\begin{split}
				D_t({\xi}) \geq \max\left\{2\bar\beta U, \tfrac{2(\bar{a}+\delta_2)}{\bar{K}-1} \right \}  \geq \max\left\{2\bar{\beta} U, \tfrac{2\dist(\bm \omega_t, \fix(\mc T_t))}{\bar{K} -1} \right\}.  
				\end{split}
				\end{align} 
			     By Lemma \ref{le:auxiliary_lemma_online_bounds} and \eqref{eq:prop:condition_gamma_xi}, we then have 
				\begin{equation}
				\label{eq:proof:contractivity_property}
				    \| \bm \omega_{t+1}-\bm \omega_t^{\star} \|^2 \leq \bar\alpha \| \bm \omega_t-\bm \omega_t^{\star}  \|^2 + \gamma .
				\end{equation} 
				Applying on \eqref{eq:proof:contractivity_property} the triangle inequality, the fact $(a+b)^2 \leq 2a^2 + 2b^2$ and Assumption \ref{ass:var_lim}.\ref{ass:variability_limited} leads to 
				\begin{align} 
				%\begin{split}
				    \| \bm \omega_{t+1}-\bm \omegaopt_t\|^2 & \leq 2\alpha (\| \bm \omega_t-\bm \omegaopt_{t-1}\|^2 + \| \omegaopt_{t-1} - \omegaopt_t\|^2) + \gamma  \notag\\
				    &\leq 2\bar\alpha (\| \bm \omega_t-\bm \omegaopt_{t-1}\|^2 + \delta_1^2) + \gamma  \notag\\
				    & \leq 2\bar\alpha (\bar{a}^2 + \delta_1^2) + \gamma .
					\label{eq:prop:asymptotic_limit_online_hsdm:steps_bound}
				%\end{split}
				\end{align}
				Finally, {by   \eqref{eq:prop:asymptotic_limit_online_hsdm:condition_a_bar}, it holds that}
				\begin{equation}
				\label{eq:proof:condition_a_bar_implication}
				    2\bar\alpha (\bar{a}^2 + \delta_1^2) + \gamma \leq \bar{a}^2 \Leftrightarrow  \bar{a}^2 \geq \tfrac{2\bar\alpha \delta_1^2 + \gamma }{1-2\bar\alpha}.
				\end{equation} 
				%which is true . 
				Thus, we obtain
				$ \| \bm \omega_{t+1}-\bm \omegaopt_t \|^2 \leq \bar{a}^2. $ We now continue the induction argument by proving 
				\begin{equation}
					\label{eq:prop:asymptotic_limit_online_hsdm:second_step_induction}
					\|\bm \omega_2-\bm \omegaopt_1\|^2 \leq \bar{a}^2.
				\end{equation}
				From  the triangle inequality and from \eqref{eq:prop:asymptotic_limit_online_hsdm:condition_a_bar},  
				$\| \bm \omega_1- \bm \omegaopt_1 \| \leq \| \bm \omega_1 \| + \| \bm \omegaopt_1 \| \leq \bar{a}. $ %{Furthermore,} 
				From the definition of distance, we  obtain
				\begin{equation}
				\label{eq:proof:distance_omega_1}
				    \dist(\bm \omega_1, \fix(\mc T_1)) \leq \| \bm \omega_1- \bm \omegaopt_1 \| \leq \bar{a}\leq \bar{a} + \delta_2.
				\end{equation} 
				Then,
				%\begin{align*} 
				%\begin{split} 
					$D_t({\xi})\geq D({\xi}) %& 
					\geq \max\left\{2\beta U, \frac{2(\bar{a}+\delta_2)}{\bar{K} -1} \right \} %\\ & 
					\geq \max\left\{2\beta U, \frac{2\dist(\bm \omega_1, \fix(\mc T_1))}{\bar{K} -1} \right\}. $ 
				%\end{split}
				%\end{align*} 
 By Lemma \ref{le:auxiliary_lemma_online_bounds} and \eqref{eq:prop:condition_gamma_xi}, we find 				\begin{align*}
					\begin{split}
						&\| \bm \omega_2-\bm \omega_1^{\star} \|^2 \leq  \bar\alpha \| \bm \omega_1-\bm \omega_1^{\star} \|^2  + \gamma  . 
					\end{split}
				\end{align*}
				By {using \eqref{eq:proof:distance_omega_1} and \eqref{eq:proof:condition_a_bar_implication} to upperbound the right hand side of the last inequality,} we then obtain
				\begin{equation*}
				    \| \bm \omega_2-\bm \omega_1^{\star} \|^2	\leq \bar\alpha \bar{a}^2 + \gamma  \leq \bar\alpha(2\bar{a}^2 + 2\delta_1^2)+ \gamma  \leq \bar{a}^2.
				\end{equation*}
				Therefore, combining (\ref{eq:prop:asymptotic_limit_online_hsdm:first_step_induction}) and (\ref{eq:prop:asymptotic_limit_online_hsdm:second_step_induction}) leads to
				$ \sup_{t>1}  \| \bm \omega_t-\bm \omegaopt_{t-1} \| \leq \bar{a}. $ 
				Recalling that, from Assumption \ref{as:bounded_optimizer_sequence}, $ \omegaopt_{t}\in \mc Y$ for all $t$, this immediately implies 
				$ \dist(\bomega_t, \mc Y) \leq \bar a ~~~ \text{for all}~t>1,  $
				which proves that the sequence is bounded. \\
				We now proceed with proving (\ref{eq:asymptotic_tracking_error}). We note that the relation in \eqref{eq:proof:contractivity_property} holds for all $t$. We then observe that, by the triangle inequality, by $(a+b)^2 \leq 2a + 2b$, and by Assumption \ref{ass:var_lim},
				\begin{align*}
				    \begin{split}
				        \|  \bomega_{t+1}-\omegaopt_{t+1}\|^2 & \leq 2\|  \bomega_{t+1}-\omegaopt_{t}\|^2 + 2\| \omegaopt_{t+1}-\omegaopt_{t}\|^2 \\ 
				        & \leq 2\|  \bomega_{t+1}-\omegaopt_{t}\|^2 + 2\delta_1^2 .
				    \end{split}
				\end{align*}
				{By using} (\ref{eq:proof:contractivity_property}) to {upper bound $\|  \bomega_{t+1}-\omegaopt_{t}\|^2$} and iterating, we find:
				\begin{align*}
					\begin{split}
						& \|  \bomega_{t+1}-\omegaopt_{t+1}\|^2 \leq 2\bar\alpha \|  \bomega_{t}-\omegaopt_{t}\|^2 + 2( \gamma+ \delta_1^2) \\
						& \leq (2\bar\alpha)^2 \|  \bomega_{t-1}-\omegaopt_{t-1}\|^2 + 2 (\gamma + \delta_1^2) + 2\bar\alpha(2 \gamma+ 2\delta_1^2) \\ 
						& \leq \dots \leq (2\bar\alpha)^{t}\| \bomega_1-\omegaopt_1\|^2 + \textstyle\sum_{j=0}^{t-1} (2\bar\alpha)^j (2 \gamma+2\delta_1^2).
					\end{split}
				\end{align*}
				By taking the limit for $t\rightarrow\infty$ and by applying the convergence of the geometric sequence, we obtain (\ref{eq:asymptotic_tracking_error}). 
			\qedd
	
	\subsection{Proof of Corollary \ref{cor:online_FBF}}
	\label{app:pf:cor:online_FBF}
	Steps i--vi of Algorithm \ref{alg:Online_fbf} are analogous to Steps 1--6 of Algorithm \ref{alg:fbf}. Analogously to the proof of {Theorem} \ref{prop:FBF}, we see that the variable $\bm y^{(k)}:=(\hat{x}_i^{(k)}, \hat{\lambda}_i^{(k)}, \hat{\nu}_i^{(k)})$ is updated at each {time step} by $K$ iterations of the HSDM:
	$$\bm y^{(k+1)}= \mc T_{\mathrm{FBF},t}(\bm y^{(k)}) - \beta \nabla \phi_t(\mc T_{\mathrm{FBF},t}(\bm y^{(k)})), ~~ k=1,...,K. $$
	Then, the variable $\bomega_{t}$ is updated as $ \bomega_{t}= \bm y^{(K+1)}.$
	Thus, we see that Algorithm \ref{alg:Online_fbf} is a particular instance of the restarted HSDM algorithm {\eqref{eq:algorithm_online_HSDM}.} By {Theorem} $\ref{prop:asymptotic_limit_online_hsdm}$, $\bomega_t$ is bounded, therefore there exists {a compact set} $\mc Z$  such that $(\bomega_t)\in \mc Z $ for all $t$. By Lemma \ref{le:FBF_q_nonexp}, $\mc T_{\mathrm{FBF},t}$ is quasi-nonexpansive and quasi-shrinking on any bounded, closed convex set 
	{$ C$} such that $C\cap\fix(\mathcal T_{\mathrm{FBF},t})\neq\emptyset$.  In particular, it is quasi-shrinking on any convex set $C\supset \mc Y \cup \mc Z$, {where $\mc Y$ is a compact set such that $\bomega_{t}^\star \in \mc Y, \forall t \in \bb N$ (Assumption \ref{as:bounded_optimizer_sequence}).}  We then find Assumption \ref{as:quasi_nonexp_always} to hold and, by {Theorem} \ref{prop:asymptotic_limit_online_hsdm}, the tracking error is given by \eqref{eq:asymptotic_tracking_error}. \qedd

	\bibliography{bibliography_bibtex}
	\bibliographystyle{ieeetr}

\balance
\end{document}